\lstdefinelanguage{msp430asm}{
	morekeywords={mov,add,addc,sub,subc,cmp,dadd,bit,bic,bis,
				  xor,and,rrc,rra,push,swpb,call,reti,sxt,jeq,
				  jz,jne,jnz,jnc,jlo,jc,jhs,jn,jge,jl,jmp,
				  adc,dadc,dec,decd,inc,incd,sbc,inv,rla,rlc,
				  br,dint,eint,nop,ret,clr,clrc,clrn,clrz,pop,
				  setc,setn,setz,tst,INST1,INST2},
	%otherkeywords={@,\#,\$},
	sensitive=false,
	morecomment=[l]{;},
	morecomment=[s]{/*}{*/},
	morestring=[b]{"},
}
\title{Provably Secure Isolation for Interruptible Enclaved Execution on Small Microprocessors: Extended Version}
\author{
\IEEEauthorblockN{
  Matteo Busi\IEEEauthorrefmark{1},
  Job Noorman\IEEEauthorrefmark{2},
  Jo Van Bulck\IEEEauthorrefmark{2},\\
  Letterio Galletta\IEEEauthorrefmark{3},
  Pierpaolo Degano\IEEEauthorrefmark{1},
  Jan Tobias M\"uhlberg\IEEEauthorrefmark{2}
  and Frank Piessens\IEEEauthorrefmark{2}
}
\IEEEauthorblockA{\IEEEauthorrefmark{1} Dept. of Computer Science, Universit\`a di Pisa, Italy \\}
\IEEEauthorblockA{\IEEEauthorrefmark{2} imec-DistriNet, Dept. of Computer Science, KU Leuven, Belgium\\}
\IEEEauthorblockA{\IEEEauthorrefmark{3} IMT School for Advanced Studies Lucca, Italy}
}
\date{}
\newcommand{\onlinetechrep}{the appendices\xspace}
\newcommand{\techrep}{the appendices\xspace}
\newcommand{\algomem}{Appendix~\ref{app:security}, Algorithm~\ref{algo:memoryctx}}
\newcommand{\algodev}{Appendix~\ref{app:security}, Algorithm~\ref{algo:devicectx}}
\newcommand{\isolationlemmata}{Appendix~\ref{app:security}, Lemmata~\ref{lemma:umeqtraces} and~\ref{lemma:pmeqtraces}}
\begin{document}
\maketitle
\makeatletter
\def\ps@IEEEtitlepagestyle{
  \def\@oddfoot{\mycopyrightnotice}
  \def\@evenfoot{}
}
\def\mycopyrightnotice{
  {\footnotesize
  \begin{minipage}{\textwidth}
  \centering
  \copyright~2020 IEEE.
  Personal use of this material is permitted.
  Permission from IEEE must be obtained for all other uses, in any current or future media, including reprinting/republishing this material for advertising or promotional purposes, creating new collective works, for resale or redistribution to servers or lists, or reuse of any copyrighted component of this work in other works.
  \end{minipage}
  }
}
%

%-------------------------------------------------------------------------------
\begin{abstract}
  Computer systems often provide hardware support for isolation mechanisms like privilege levels, virtual memory,
  or enclaved execution. Over the past years, several successful software-based side-channel attacks have been developed that
  break, or at least significantly weaken the isolation that these mechanisms offer. Extending a processor with new
  architectural or micro-architectural features, brings a risk of introducing new such side-channel attacks.

  This paper studies the problem of extending a processor with new features {\em without} weakening the security of the isolation mechanisms that the processor offers.
  We propose to use full abstraction as a formal criterion for the security of a processor extension, and we
  instantiate that criterion to the concrete case of extending a microprocessor that supports enclaved execution with
  secure interruptibility of these enclaves. This is a very relevant instantiation as several recent papers have shown that
  interruptibility of enclaves leads to a variety of software-based side-channel attacks. We propose a design for interruptible
  enclaves, and prove that it satisfies our security criterion. We also implement the design on an open-source enclave-enabled
  microprocessor, and evaluate the cost of our design in terms of performance and hardware size.

  \emph{This is the extended version of the paper~\cite{csf} that includes both the original paper as well as the technical appendix with the proofs.}
\end{abstract}
%-------------------------------------------------------------------------------

%-------------------------------------------------------------------------------
\section{Introduction}\label{sec:introduction}
%-------------------------------------------------------------------------------
    Many computing platforms run programs coming from a number of different stakeholders that do not necessarily trust each other. Hence, these platforms provide mechanisms to prevent
code from one stakeholder to interfere with code from other stakeholders in undesirable ways.
These {\em isolation mechanisms} are intended to confine the interactions between two isolated programs to a well-defined communication interface.
Examples of such isolation mechanisms include process isolation, virtual machine monitors, or enclaved execution \cite{sgx}.

However, security researchers have shown that many of these isolation mechanisms can be attacked
by means of {\em software-exploitable side-channels}.
Such side-channels have been shown to violate integrity of
victim programs \cite{rowhammer,clockscrew,Murdock2019plundervolt}, as well as their confidentiality on both high-end processors \cite{cache-attacks,meltdown,spectre,foreshadow}
and on small microprocessors \cite{nemesis}.
In fact, over the past two years,
many major isolation mechanisms have been successfully attacked: Meltdown \cite{meltdown} has broken user/kernel isolation, Spectre \cite{spectre} has
broken process isolation and software defined isolation, and Foreshadow \cite{foreshadow} has broken enclaved execution on Intel processors.

The class of software-exploitable side-channel attacks is complex and varied. These attacks often exploit, or at least rely on, specific hardware features or hardware implementation
details. Hence, for complex state-of-the-art processors there is a wide potential attack surface that should be explored (see for instance \cite{transient} for an overview
of just the attacks that rely on transient execution). Moreover, the potential attack vectors vary with the attacker model that a specific isolation mechanism considers.
For instance, enclaved execution is designed to protect  enclaved code from malicious operating system software whereas process isolation assumes that the operating system
is trusted and not under control of the attacker. As a consequence, protection against software-exploitable side-channel attacks is much harder for enclaved
execution \cite{controlled-channels}.

Hence, no silver-bullet solutions against this class of attacks should be expected, and countermeasures will likely be as varied as the attacks.
They will depend on attacker model, performance versus security trade offs, and on the specific processor feature that is being exploited.

%\alert{
The objective of this paper is to study how to design and prove secure such countermeasures.
In particular, we rigorously study the resistance of enclaved execution on small microprocessors \cite{sancus,trustlite} against interrupt-based attacks \cite{nemesis,sgxlinger,sgx-step}.
%}
%
%  The objective of this paper is to propose \marginpar{propose vs. advocate? ... based of \emph{full-abstraction}~\cite{abadi1999protection}} a general approach to design such countermeasures, and to instantiate this approach to rigorously study the resistance of enclaved execution on small microprocessors \cite{sancus,trustlite} against interrupt-based attacks \cite{nemesis,sgxlinger,sgx-step}.
%
This specific instantiation is important and challenging. First, interrupt-based attacks are very powerful against enclaved execution: fine-grained interrupts
have been a key ingredient in many attacks against enclaved execution \cite{branch-shadowing,foreshadow,sgxpectre,nemesis}. Second, to the best of our knowledge, all existing  implementations
of interruptible enclaved execution are vulnerable to software-exploitable side-channels, including implementations that specifically aim for secure interruptibility \cite{ruan,trustlite}.

We base our study on the existing open-source Sancus platform \cite{sancus1,sancus} that supports {\em non-interruptible} enclaved execution.
We illustrate that achieving security is non-trivial through a variety of attacks enabled by supporting interruptibility of enclaves.
Next, we provide a formal model of the existing Sancus and we then extend it with interrupts.
We prove that this extension does not break isolation properties by instantiating full abstraction~\cite{abadi1999protection}.

%
%We provide a formal model of the existing system
% propose a definition of
% what it means to extend Sancus with interrupts without breaking isolation properties.
Roughly,
% our proposed definition states
we show that what the attacker can learn from (or do to) an enclave is exactly the same \emph{before} and \emph{after} adding the support for interrupts.
In other words, adding interruptibility does not open new avenues of attack.
%  We show that achieving security according to our definition is non-trivial by illustrating a variety of attacks enabled by supporting interruptibility of enclaves.
%  \blue{
%    Next, we propose a formal model of a secure interruptible Sancus system, and prove that it satisfies our security definition.
%  }
Finally, we implement the secure interrupt handling mechanism as an extension to Sancus, and we show that the cost of the mechanism is low, in terms of both hardware complexity and performance.

In summary, the novel contributions of this paper are:
\begin{itemize}[nosep]
\item %\alert{
We propose a specific design for extending Sancus, an existing enclaved execution system, with interrupts.
%}
\item
% We propose a general formal definition of what it means to maintain the security of isolation mechanisms under
% %extensions of a processor,
% processor extensions,
We propose to use full abstraction~\cite{abadi1999protection} as a formal criterion of what it means to maintain the security of isolation mechanisms under processor extensions.
Also, we instantiate it for proving that the mechanism of enclaved execution, extended to support interrupts, complies with our security definition.
%  \item We propose a specific design for extending Sancus, an existing enclaved execution system, with interrupts, and we show that it satisfies our security definition.
\item We implement the design on the open source Sancus processor, and evaluate cost in terms of hardware size and performance impact.%
\footnote{
Our implementation is available online at~\url{https://github.com/sancus-pma/sancus-core/tree/nemesis}.
}
% \footnote{Our implementation will be made available publicly on publication of this paper.}
\end{itemize}
The paper is structured as follows: in Section~\ref{sec:background} we provide background information on enclaved execution and interrupt-based attacks.
Section~\ref{sec:overview} provides an informal overview of our approach.
Section~\ref{sec:semantics} discusses our formalization and sketches the proof, pointing to~\onlinetechrep for full details.
Then, in Section~\ref{sec:implementation} we describe and evaluate our implementation.
Section~\ref{sec:discussion} and~\ref{sec:related} discuss limitations, and the connection to related work.
Finally, Section~\ref{sec:conclusion} offers our conclusions and plans for future work.

%-------------------------------------------------------------------------------
\section{Background}\label{sec:background}
%-------------------------------------------------------------------------------
    %\vspace{-2mm}
\paragraph{Enclaved execution}
Enclaved execution is a security mechanism that enables {\em secure
remote computation} \cite{intel-sgx-explained}. It supports the
creation of {\em enclaves} that are initialized with a software
module, and that have the following security properties. First,
the software module in the enclave is isolated from all other
software on the same platform, including system software such as the
operating system. Second, the correct initialization of an enclave
can be {\em remotely attested}: a remote party can get cryptographic assurance that
an enclave was properly initialized with a specific software module
(characterized by a cryptographic hash of the binary module).
These security properties are guaranteed while relying on a small
trusted computing base, for instance trusting only the hardware \cite{sancus,sgx},
or possibly also a small hypervisor \cite{trustvisor,komodo}.

The remote attestation aspect of enclaved execution is important for
the secure initialization of enclaves, and
for setting up secure communication
channels to the enclave.
However, it does not play an important role for the
interrupt-driven attacks that we study in this paper, and hence we will focus
    here on the isolation aspect of enclaves only. Other papers
describe in detail how remote attestation and secure communication
work on large \cite{intel-sgx-explained} or small systems \cite{sancus,trustlite}.

The isolation guarantees offered to an enclaved software module are
the following. The module consists of two contiguous memory
sections, a {\em code section}, initialized with the machine code of the module,
and a {\em data section}.
The data section is initialized to zero, and loading of confidential data happens
through a secure channel to the enclave, after attesting
the correct initialization of the module. For instance, confidential data can be restored from cryptographically sealed
storage, or can be obtained from a remote trusted party.

The enclaved execution platform guarantees that: (1) the data section
of an enclave is {\em only} accessible while executing code from
the code section, and (2) the code section can only be entered through
one or more designated {\em entry points}.

These isolation guarantees are simple, but they offer the useful
property that {\em data of a module can only be manipulated by code
of the same module}, i.e., an encapsulation property similar to what
programming languages offer through classes and objects.
Untrusted code residing in the same address space as the enclave
but outside the enclave code and data sections can interact with
the enclave by jumping to an entry point. The enclave can return
control (and computation results) to the untrusted code by jumping back
out.

%\vspace{-2mm}

\paragraph{Interrupt-based attacks}\label{sec:background:attacks} Enclaved
execution is designed to be resistant against a very strong attacker that
controls all other software on the platform, including privileged
system software. While isolating enclaves is well-understood at the
architectural level, including even successful formal verification
efforts~\cite{komodo,vrased}, researchers have shown that it is challenging to
protect enclaves against side-channels.
Particularly, a recent line of work on {\em controlled channel}
attacks~\cite{controlled-channels,sgx-step,nemesis,stealthy-page-tables,branch-shadowing}
has demonstrated a new class of powerful, low-noise side-channels that leverage
the adversary's increased control over the untrusted operating system.

A specific consequence of this strong model is that the attacker also controls the scheduling
and handling of interrupts: the attacker can precisely schedule interrupts to arrive during enclaved execution,
and can choose the code to handle these interrupts.
This power has been put to use
for instance to single-step through an enclave~\cite{sgx-step}, or
to mount a new class of ingenious \emph{interrupt latency} attacks~\cite{nemesis,sgxlinger}
that derive individual enclaved instruction timings from the time it takes to
dispatch to the untrusted operating system's interrupt handler.
We provide concrete examples of interrupt-based attacks
in the next section, after detailing our model of enclaved execution.

While advanced CPU features such as virtual
memory~\cite{controlled-channels,stealthy-page-tables,foreshadow},
branch prediction~\cite{branch-shadowing,sgxpectre} or
caching~\cite{sgx-cache} are known to leak information on high-end processors,
pure interrupt-based attacks such as interrupt latency measurements
are the {\em only} known controlled-channel attack against low-end enclaved execution
platforms lacking these advanced features.
Moreover, they have been shown to be very powerful: e.g., Van Bulck et al.~\cite{nemesis} have shown how to
efficiently extract enclave secrets like passwords or PINs from embedded enclaves.

Some enclaved execution designs avoid the problem of interrupt-based attacks by completely
disabling interrupts during enclave execution \cite{sancus,vrased}. This has the important downside
that system software can no longer guarantee availability: if an enclaved module goes into
an infinite loop, the system cannot progress.
All %enclaved execution
designs that do support interruptibility of enclaves \cite{ruan,trustlite}
are vulnerable to these attacks.

%-------------------------------------------------------------------------------
\section{Overview of our approach}\label{sec:overview}
    % !TEX root = ../paper.tex

% \marginpar{\alert{See Section~\ref{sec:discussion}: move the explanation of difference between ``small'' and ``high-end'' processors here.}}
We set out to design an interruptible enclaved execution system that is provably resistant against
interrupt-based attacks.
This section discusses our approach informally, later sections discuss
a formalization with security proofs, and report on implementation and experimental
evaluation.

We base our design on Sancus \cite{sancus}, an existing open-source enclaved execution system.
We first describe our Sancus model, and discuss how extending Sancus with interrupts leads to the attacks mentioned in Section~\ref{sec:background:attacks}.
In other words, we show how extending Sancus with interrupts breaks some of the isolation guarantees provided by Sancus.

Then, we propose a formal security criterion that defines what it means for interruptibility to {\em preserve
the isolation properties}, and we illustrate that definition with examples.

Finally, we propose a design for an interrupt handling mechanism that is
resistant against the considered attacks and that
satisfies our security definition.
%
%\alert{
Crucial to our design is the assumption that the timing of individual instructions is predictable, which is typical of ``small'' microprocessors, like Sancus.
%}
%
Although tailored here on a specific architecture and a specific class of attacks, we expect our approach of ensuring that the same attacks are possible before and after
%an architecture extension to be applicable in other settings too.
an architecture extension to be applicable in other settings too.

\subsection{Sancus model}
\paragraph{Processor}
%Sancus is based on the {\footnotesize TI MSP430} microprocessor \cite{ti-msp430}, a 16-bit processor with a classic von Neumann architecture where code and data share the same address space.
Sancus is based on the TI MSP430 16-bit microprocessor \cite{ti-msp430}, with a classic von Neumann architecture where code and data share the same address space.
We formalize the subset of instructions summarized in~\tablename~\ref{tab:op-summary} that is rich enough to model all the attacks we care about.
We have a subset of memory-to-register and register-to-memory transfer instructions; a comparison instruction; an unconditional and a conditional jump; and basic arithmetic instructions.
\begin{table}[tb]
    \centering
    \resizebox{\columnwidth}{!}{
    \begin{tabular}{@{}llll@{}}
    \toprule
    \textbf{Instr. $i$} &
    \textbf{Meaning} % (loc abbreviates ``location'')
    & \textbf{Cycles}  & \textbf{Size} \\
    \midrule
    $\RETI$ & Returns from interrupt. & $5$ & $1$\\
    $\NOP$  & No-operation.           & $1$                 & $1$\\
    $\HLT$  & Halt.                   & $1$                 & $1$\\
    $\NOT{r}$  & $\rn{r} \leftarrow \lnot \rn{r}$. (Emulated in MSP430) & $2$ & $2$\\
    $\IN{r}$  & Reads word from the device and puts it in $\rn{r}$. & $2$ & $1$\\
    $\OUT{r}$  & Writes word in register $\rn{r}$ to the device. & $2$ & $1$\\
    $\AND{r_1}{r_2}$  & $\rn{r_2} \leftarrow \rn{r_1}\ \&\ \rn{r_2}$. & $1$ & $1$\\
    $\JMP{r}$ & Sets $\rpc$ to the value in $\rn{r}$. & $2$ & $1$\\
    $\JZ{r}$ & Sets $\rpc$ to the value in $\rn{r}$ if bit 0 in $\rsr$ is set. & $2$ & $1$\\
    $\MOV{r_1}{r_2}$  & $\rn{r_2} \leftarrow \rn{r_1}$. & $1$ & $1$\\
    $\MOVL{r_1}{r_2}$  & Loads in $\rn{r_2}$ the word in starting in location pointed by $\rn{r_1}$. & $2$ & $1$\\
    $\MOVS{r_1}{r_2}$  & Stores the value of $\rn{r_1}$ starting at location pointed by $\rn{r_2}$. & $4$ & $2$\\
    $\MOVI{w}{r_2}$  & $\rn{r_2} \leftarrow w$. & $2$ & $2$\\
    $\ADD{r_1}{r_2}$  & $\rn{r_2} \leftarrow \rn{r_1} + \rn{r_2}$. & $1$ & $1$\\
    $\SUB{r_1}{r_2}$  & $\rn{r_2} \leftarrow \rn{r_1} - \rn{r_2}$. & $1$ & $1$\\
    $\CMP{r_1}{r_2}$  & Zero bit in $\rsr$ set if $\rn{r_2} - \rn{r_1}$ is zero. & $1$ & $1$\\
    % Any & Except for jumps and $\MOVL{r_1}{\rpc}$ if destination register is $\rpc$ & $1$ additional cycle & \\
    \bottomrule\\
    \end{tabular}}
\caption{Summary of the assembly language considered.}\label{tab:op-summary}
\end{table}
%

%\vspace{-2mm}

\paragraph{Memory}
Sancus has a byte addressable memory of at most {\footnotesize 64KB}, where a finite number of enclaves can be defined.
The bound on the number of enclaves is a parameter set at
processor synthesis time.
In our model, we assume that there is only a single enclave,
made of a {\em code section}, initialized with the machine code of the module, and a {\em data section}.
%This enclave consists of one {\em code section}, initialized with the machine code of the module,
%and one {\em data section}.
%All the other memory is {\em unprotected memory}, and will be considered to be under
%control of the attacker.
A data section is securely provisioned with data by relying on remote attestation and secure communication, not modeled here as they play no role in the interrupt-based attacks we care about in this paper.
%In Sancus,  a data section is initialized to zero, and then securely provisioned with data by relying on remote attestation and secure communication.
%We do not model remote attestation and secure communication as they play no role in the interrupt-based attacks we care about in this paper.
Instead, our model allows direct initialization of the data section with confidential enclave data.
All the other memory is {\em unprotected memory}, and will be considered to be under
control of the attacker.

Enclaves have a single entry point; the enclave can only be entered by jumping to the first address of the code section.
Multiple {\em logical entry points} can easily be implemented on top of this single physical entry point. Control flow can leave the enclave
by jumping to any address in unprotected memory. Obviously, a compiler can implement higher-level abstractions such as enclave
function calls and returns, or out-calls from the enclave to functions in the untrusted code~\cite{sancus}.

Sancus enforces program counter (pc) based memory access control.
If the pc is in unprotected memory, the processor can not access any memory location within the enclave -- the only way
to interact with the enclave is to jump to the entry point. If the pc is within the code section of the enclave, the processor can only access the enclave data section for reading/writing and the enclave code
section for execution.
%\alert{
This access control is faithfully rendered in our model, via the predicate MAC in~\tablename~\ref{tab:mac}.
%}

%\vspace{-2mm}

\paragraph{I/O devices}

Sancus uses memory-mapped I/O to interact with peripherals. One important example of a peripheral for the attacks we study is a cycle accurate timer, which allows software to measure
time in terms of the number of CPU cycles.
In our model, we include a single very general I/O device that behaves as a state machine running synchronously to CPU execution. In particular, it is trivial to instantiate
this general I/O device to a cycle-accurate timer.

%\alert{
Instead of modeling memory-mapped I/O, we introduce two special instructions that allow writing/reading a word to/from the device (see~\tablename~\ref{tab:op-summary}).
Actually these instructions are short-hands, which are easy to macro-expand, at the price of dealing with  special cases in the execution semantics for any memory operation.
%}
For instance, software could read the current cycle timer value from a timer peripheral by using the $\IN{}\!\!$ instruction.

The I/O devices can request to interrupt the processor with single-cycle accuracy. The original Sancus disables interrupts during enclaved execution.
One of the key objectives of this paper is to propose a Sancus extension that does handle such interrupts without weakening security.
Hence, we will define two models of Sancus, one that ignores interrupts, and one that handles them even during enclaved execution.

\subsection{Security definitions}
\paragraph{Attacker model} An attacker
controls the entire {\em context} of an enclave, that is: he controls
$(1)$ all of unprotected memory (including code interacting with the enclave, as well as data in unprotected memory), and
$(2)$ the connected device.
This is the standard attacker model for enclaved execution.
In particular, it implies that the attacker has complete control over the Interrupt Service Routines.

%\vspace{-2mm}

\paragraph{Contextual equivalence formalizes isolation}
%Informally, our security objective is: extensions of the Sancus processor do
%not weaken the isolation that Sancus provides to enclaves.
Informally, our security objective is extending the Sancus processor without
weakening the isolation it provides to enclaves.
%Hence we
%must define this notion of isolation precisely.
%What isolation achieves is that attackers can not see ``inside'' an
%enclave,
%%Isolation makes it possible to ``hide'' enclave data or implementation
%so making it possible to ``hide'' enclave data or implementation
%details from the attacker. We formalize this by using the notion
What isolation achieves is that attackers can not see ``inside'' an
enclave,
%Isolation makes it possible to ``hide'' enclave data or implementation
so making it possible to ``hide'' enclave data or implementation
details from the attacker.
We formalize this concept of isolation precisely by using the notion
of  {\em contextual equivalence} or {\em contextual indistinguishability}
(as first proposed by Abadi \cite{abadi1999protection}).
Two enclaved modules $M_1$ and $M_2$ are contextually equivalent, if
the attacker can not distinguish them, i.e., if there exists no
context that tells them apart.
We discuss this on the following example.

\begin{example}[Start-to-end timing]\label{ex:start-to-end}
   The following enclave compares a user-provided password in $\reg{15}$
    with a secret in-enclave password at address $\mi{pwd\_adrs}$, and stores
    the user-provided value in $\reg{14}$ into the enclave location at
    $\mi{store\_adrs}$ if the user password was correct.
\begin{lstlisting}
enclave_entry:
    /* Load addresses for comparison */
    MOV #store_adrs, r10    ; 2 cycles
    MOV #access_ok, r11     ; 2 cycles
    MOV #endif, r12         ; 2 cycles
    MOV #pwd_adrs, r13      ; 2 cycles
    /* Compare user vs. enclave password */
    MOV @r13, r13           ; 2 cycles
    CMP r13, r15            ; 1 cycle
    JZ  &r11                ; 2 cycles
access_fail:   /* Password fail: return */
    JMP &r12                ; 2 cycles
access_ok:  /* Password ok: store user val */
    MOV r14, 0(r10)         ; 4 cycles
endif:  /* Clear secret enclave password */
    SUB r13, r13            ; 1 cycle
enclave_exit:
\end{lstlisting}
\end{example}

In the absence of a timer device, this enclave successfully hides the in-enclave password.
If we take enclaves $M_1$ and $M_2$ to be two instances of
Example~\ref{ex:start-to-end}, differing only in the value for the secret password, then
$M_1$ and $M_2$ are indistinguishable for any context that does not have
access to a cycle accurate timer: all such a context can do is call the entry point,
but the context does not get any indication whether the user-provided password
was correct. This formalizes that enclave isolation successfully ``hides'' the password.

However, with the help of a cycle accurate timer, the attacker can distinguish $M_1$ and $M_2$ as follows.
The attacker can create a context that measures the start-to-end execution
time of an enclave call: the context reads the timer right before
jumping to the enclave. On enclave exit, the context reads the timer again to
compute the total time spent in the enclave.

In order to reason about execution timing, we represent enclaved
executions as an ordered array of individual instruction timings.  (\tablename~\ref{tab:op-summary}
conveniently specifies how many cycles it takes to execute each instruction.)
Hence the two possible control flow paths of the above program are:
    \texttt{ok=[2,2,2,2,2,1,2,4,1]} for the \enquote{access\_ok} branch, or
    \texttt{fail=[2,2,2,2,2,1,2,2,1]} for the \enquote{access\_fail} branch.
Since \texttt{sum(ok) = 18} and \texttt{sum(fail) = 16}, the context can distinguish
the two control flow paths, and hence can distinguish $M_1$ and $M_2$ (and by
launching a brute-force attack~\cite{msp430-bsl-timing}, can also extract the secret password).

This example illustrates how contextual equivalence formalizes
isolation. It also shows that the original Sancus already has some side-channel
vulnerabilities under our attacker model. Since we assume the
attacker can use any I/O device, he can choose to use a timer device
and mount the start-to-end timing attack we discussed.

It is important to note that it is {\em not} our objective in this paper to close
these existing side-channel vulnerabilities in Sancus.
Our objective is to make sure that extending Sancus with interrupts
does not introduce {\em additional} side-channels, i.e., that this
does not {\em weaken} the isolation properties of Sancus.

For existing side-channels, like the start-to-end timing side-channel, countermeasures
can be applied by the enclave programmer. For instance, the programmer can
balance out the various secret-dependent control-flow paths as in Example~\ref{ex:latency}.

\begin{example}[Interrupt latency]\label{ex:latency}
    Consider the program of
    Example~\ref{ex:start-to-end}, balanced in terms of overall execution time by adding two
    $\NOP$ instructions at lines 13-14.
    The two possible control flow paths are:
    \texttt{ok=[2,2,2,2,2,1,2,4,1]} vs.
    \texttt{fail= [2,2,2,2,2,1,2,1,1,2,1]}.
    Since  \texttt{sum(ok)} is equal to \texttt{sum(fail)}, the start-to-end timing attack is mitigated.
\begin{lstlisting}
enclave_entry:
    /* Load addresses for comparison */
    MOV #store_adrs, r10    ; 2 cycles
    MOV #access_ok, r11     ; 2 cycles
    MOV #endif, r12         ; 2 cycles
    MOV #pwd_adrs, r13      ; 2 cycles
    /* Compare user vs. enclave password */
    MOV @r13, r13           ; 2 cycles
    CMP r13, r15            ; 1 cycle
    JZ  &r11                ; 2 cycles
access_fail:
    /* Password fail: constant time return */
    NOP                     ; 1 cycle
    NOP                     ; 1 cycle
    JMP &r12                ; 2 cycles
access_ok: /* Password ok: store user val */
    MOV r14, 0(r10)         ; 4 cycles
endif:   /* Clear secret enclave password */
    SUB r13, r13            ; 1 cycle
enclave_exit:
\end{lstlisting}
\end{example}

%\vspace{-2mm}
\paragraph{Interrupts can weaken isolation}
We now show that a straightforward implementation of interrupts in the Sancus processor
would significantly weaken isolation.
Consider an implementation of interrupts similar to the TI MSP430: on arrival of an interrupt,
the processor first completes the ongoing instruction, and then jumps to an interrupt service routine.

The program in Example~\ref{ex:latency} is secure on Sancus without interrupts.
However, it is not secure against a malicious context that can schedule interrupts
to be handled
%that will be handled
while the enclave executes.
%during execution of an enclave.
%To understand why, assume that an interrupt is scheduled by the malicious
To see why, assume that an interrupt is scheduled by the malicious
context to arrive within the first cycle after the conditional jump at line 10.
If the jump was taken then the instruction being executed is the 4-cycle
\MOV{\!\!}{\!\!}~at line 18, otherwise the current instruction is the 1-cycle
\NOP\ at line 13. Now, since the attacker's interrupt handler will only
be called \emph{after} completion of the current instruction,
the adversary observes an interrupt latency difference of 3 cycles, depending
on the secret branch condition inside the enclave.
Researchers~\cite{nemesis} have shown how interrupt latency can be practically measured
to precisely reconstruct individual enclave instruction timings
on both high-end and low-end enclave processors.

Using this attack technique, a context can again distinguish two instances of the module with a different password, and hence the addition of interrupts
has {\em weakened} isolation.

A strawman solution to fix the above timing leakage is to modify the implementation of interrupt handling in the processor
to always dispatch interrupt service routines in constant time $\mtt{T}$, i.e.,
regardless of the execution time of the interrupted instruction.
We show in the two examples below, however, that this is a necessary but not sufficient condition.

\begin{example}[Resume-to-end timing]\label{ex:resume-to-end}
Consider the program from Example \ref{ex:latency} executed on a
processor which always dispatches interrupts in constant time $\mtt{T}$.
The attacker schedules an interrupt to arrive in the first
cycle after the \mtt{JZ} instruction, yielding constant interrupt latency $\mtt{T}$.
Next, the context resumes the enclave and measures the time it takes to
let the enclave run to completion \emph{without} further interrupts.
While interrupt latency timing differences are properly masked, the time to
complete enclave execution after resume from the interrupt is 1 cycle for the \texttt{ok} path
and 4 cycles for the \texttt{fail} path.
\end{example}

%\vspace{-2mm}

\begin{example}[Interrupt-counting attack]\label{ex:interrupt-counting}
An alternative way to attack the program from Example~\ref{ex:latency}
even when interrupt latency is constant, is to {\em count} how often
the enclave execution can be interrupted, e.g., by scheduling a new
interrupt 1 cycle after resuming from the previous one. Since interrupts
are handled on instruction boundaries, this allows the attacker to count
the number of instructions executed in the enclave, and hence to distinguish
the two possible control flow paths.
\end{example}

%\vspace{-2mm}

\paragraph{Defining the security of an extension}
%Defending against attacks like the ones in Ex.s~\ref{ex:latency},~\ref{ex:resume-to-end} and~\ref{ex:interrupt-counting}  is
%the key objective of this paper.
The examples above show how a new processor feature (like interrupts) can weaken isolation of an existing isolation mechanism (like enclaved execution), and this is exactly what we want to avoid.
Here we propose and implement a provably secure defense against these attacks.
With this background, our security definition is now obvious.
Given an original system (like Sancus), and an extension of that
system (like interruptible Sancus), that extension is secure if and only if it does not change the contextual
equivalence of enclaves.
Enclaves that are contextually equivalent in the original system must be contextually equivalent in the
extended system and vice versa %
(we shall formalize this as a \emph{full abstraction} property later on). %, and we detail it in Section~\ref{sec:semantics}.

% This condition, also known as
% {\em full abstraction}, is our formal definition of the security of
% an extension, detailed in Section~\ref{sec:semantics}.

%\vspace{-2mm}

\subsection{Secure interruptible Sancus}\label{sec:overview:sl}

Designing an interrupt handling mechanism that is secure according to our definition above is quite subtle.
We illustrate some of the subtleties.
In particular, we provide an intuition on how an appropriate use of padding can handle the various attacks discussed above.
We also discuss how other design aspects are crucial for achieving security.
In this section, we just provide intuition and examples.
The ultimate argument that our design is secure is our proof, discussed later.

%\vspace{-2mm}

\paragraph{Padding} We already discussed that it is insufficient for security to naively pad interrupt latency to make it constant.
We need a padding approach that handles all kinds of attacks, including the example attacks discussed above.

The following padding scheme works (see~\figurename~\ref{padding}). Suppose the
attacker schedules the interrupt to arrive at $t_a$, during the
execution of instruction $I$ in the enclave. Let $\Delta t_1$ be the time needed to
complete execution of $I$. To make sure the attacker can not learn
anything from the interrupt latency, we introduce padding for $\Delta t_{p_1}$
cycles where $\Delta t_{p_1}$ is computed by the interrupt handling logic
such that $\Delta t_1 + \Delta t_{p_1}$ is a constant value $T$. This value $T$
should be chosen as small as possible to avoid wasting unnecessary
time, but must be larger than or equal to the maximal instruction
cycle time \MT\ (to make sure that no negative padding is required,
even when an interrupt arrives right at the start of an instruction with
the maximal cycle time).
This first padding ensures that
an attacker always measures a constant interrupt latency.
\begin{figure}[tb]
\begin{center}
\includegraphics[width=\columnwidth]{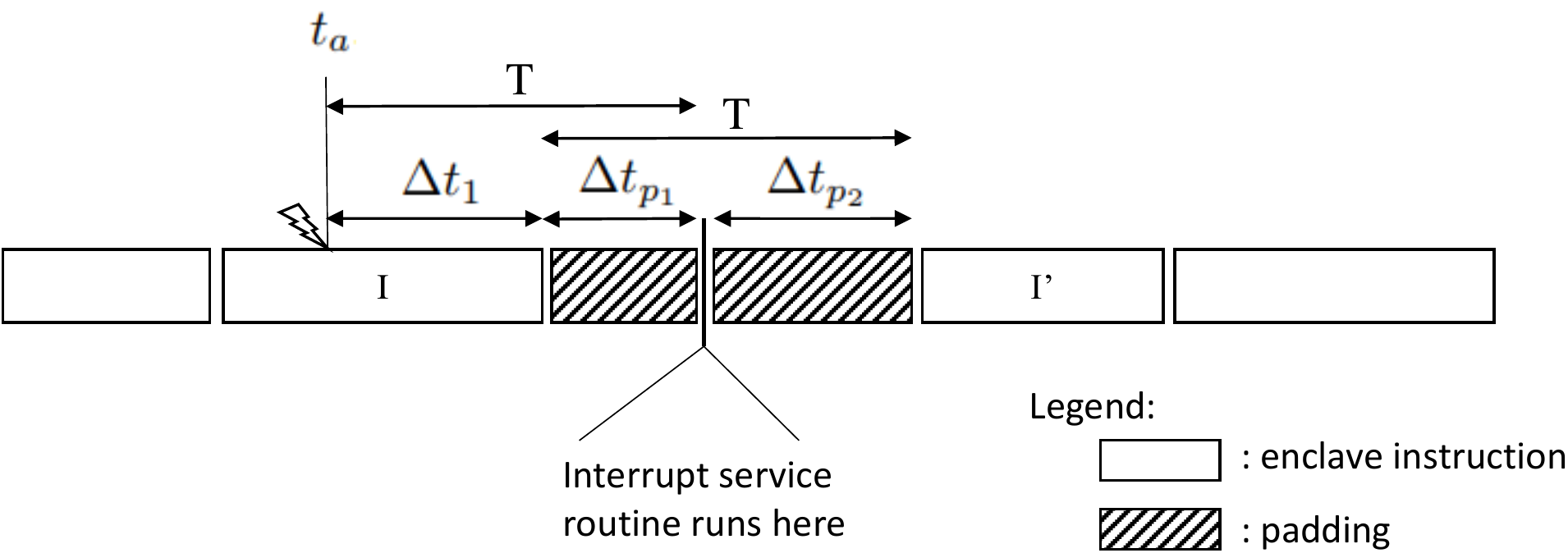}
\caption{\label{padding}The secure padding scheme.
}
\end{center}
\end{figure}

But this alone is not enough, as an attacker can now
measure resume-to-end time as in Example~\ref{ex:resume-to-end}.
Thus, we provide
a second kind of padding. On return from an interrupt, the interrupt
handling logic will pad again for $\Delta t_{p_2}$ cycles, ensuring that
$\Delta t_{p_1} + \Delta t_{p_2}$ is again the constant value~$T$ (i.e., $\Delta t_{p_2} = \Delta t_1$).
This makes sure that the resume-to-end time measured by the attacker
does not depend on the instruction being interrupted.

This description of our padding scheme is still incomplete: it is also important to specify what happens if a new interrupt arrives while the interrupt handling logic is still performing padding because of a previous interrupt. This is important to counter attacks like that of Example~\ref{ex:interrupt-counting}. We refer to the formal description for the complete definition.

Intuitively, the property we get is that (1) an attacker can schedule an interrupt at any time $t_a$ during enclave execution,
(2) that interrupt will always be handled with a constant latency $T$, (3) the resume-to-end time is always exactly the time the enclave still would have needed to complete
execution from point $t_a$ if it had not been interrupted.

This double padding scheme is a main ingredient of our secure interrupt
handling mechanism, but many other aspects of the design are important
for security. We briefly discuss a number of other issues that came
up during the security proof.

%\vspace{-2mm}

\paragraph{Saving execution state on interrupt}
%Obviously, if
When an enclaved execution is
interrupted, the processor state (contents of the registers) is saved (to
allow resuming the execution once the interrupt is handled) and is cleared (to
avoid leaking confidential register contents to the context).
A straightforward implementation
would be to store the processor state on the enclave stack.
%
%However, the construction of the security proof showed that storing the processor state
However, the proof of our security theorem showed that storing the processor state
in enclave accessible memory is not secure: consider two enclaved modules that monitor the
content of the memory area where processor state is saved, and behave differently on
observing a change in the content of this memory area.
These modules are contextually
equivalent in the absence of interrupts (as the contents of this memory area will never change),
but become distinguishable in the presence of interrupts.
Hence, our design saves processor state in a storage area {\em inaccessible} to software.

\paragraph{No access to unprotected memory from within an enclave}
Most designs of enclaved execution allow an enclave to access unprotected memory (even if
% \marginpar{\alert{is "within" needed?}} Yes, otherwise the sentence is a bit ambiguous
this has already been criticized for security reasons \cite{gruss-sgx-malware}). However, for
a single core processor, interruptibility significantly weakens contextual equivalence for enclaves
that can access unprotected memory. Consider an enclave  $M_1$ that always returns a constant~0,
and an enclave $M_2$ that reads twice from the same unprotected address and returns the difference of
the values read. On a single-core processor without interrupts, $M_2$ will also always return 0, and
hence is indistinguishable from $M_1$. But an interrupt scheduled to occur between the two reads
from $M_2$ can change the value returned by the second read, and hence $M_1$ and $M_2$ become
distinguishable.
Hence, our design forbids enclaves to access unprotected memory.

For similar reasons, our design forbids an interrupt handler to reenter the enclave while it has been
interrupted, and forbids the enclave to directly interact with I/O devices.

Finally, we prevent the interrupt enable bit ($\mtt{GIE}$) in the status register from being changed by software in the enclave, as such changes are unobservable in the original Sancus and they would
be observable once interruptibility is added.
%  Indeed a \SL attacker could distinguish two modules that are identical in \SH by observing \emph{when} an interrupt was handled, depending on the value of the $\mtt{GIE}$ bit.

While the security proof is a significant amount of effort, an important benefit of this formalization is that
it forced us to consider all these cases and to think about secure ways of handling them.
We made our design choices to keep model and proof simple, and these choices may seem restrictive.
Section~\ref{sec:discussion} discusses the practical impact of these choices.

%\vspace{-3mm}

%
%
%  \item Again in \SL, during the execution the interrupt handler, the interrupt are disabled and are kept such until the interrupt handler finishes (in order to avoid nested interrupts). Also, calls to the software module from an interrupt handler are forbidden.
%  These limitations are in order to avoid re-entrancy problems that could be used, in principle, to distinguish modules in \SL that were indistinguishable in \SH.
%  \item Both in \SL and \SH, allowing access to unprotected memory and to the device by the protected module would be harmful.
%  In fact, two modules could be distinguished by an attacker running in \SL (and not by a \SH context) from the order in which they write to the memory or the way they interact with the device by interrupting the enclaved execution at the right time.
%  This kind of restriction could be probably relaxed similarly to what Patrignani and Clarke did in~\cite{}.
%  \item Finally, both in \SL and in \SH, the $\mtt{GIE}$ bit in the status register cannot be changed during the enclaved execution to avoid the leakage of changes in the value of such a bit to the attacker.
%  Indeed a \SL attacker could distinguish two modules that are identical in \SH by observing \emph{when} an interrupt was handled, depending on the value of the $\mtt{GIE}$ bit.
%\end{enumerate}

%%% Local Variables:
%%% mode: latex
%%% TeX-master: "../paper"
%%% End:

%-------------------------------------------------------------------------------

%-------------------------------------------------------------------------------
\section{Formalization and security proofs}\label{sec:semantics}
    % !TEX root = ../paper.tex
% \subsection{Overview}\label{sec:semantics:overview}
We proceed to formally define two Sancus models, one describing the original, uninterruptible Sancus (\SH,  Sancus-High) and one describing the secure interruptible Sancus (\SL, Sancus-Low).%
\footnote{The {\em high} and {\em low} terminology is inherited from the
field of \emph{secure compilation} of {\em high} source languages to
{\em low} target ones.
Also, for readability we hereafter highlight in \src{blue, sans\text{-}serif} font elements of \SH, in \trg{red, bold} font elements of \SL and in black those that are in common.
%\footnote{The {\em high} and {\em low} terminology is inherited from the
%field of secure compilation, where secure compilation from a {\em high} source language to a
%{\em low} target language is studied.
}
The two share most of their structure and just differ in the way they deal with interrupts.

Given the semantics of \SH and \SL, we formally show that the two versions of Sancus actually provide the same security guarantees, i.e., the isolation mechanism is not broken by adding a carefully designed interruptible enclaved execution.
Technically, this is done through the \emph{full abstraction} theorem between
\SH and \SL (Theorem~\ref{thm:fa}).
Note that, our theorem guarantees that the \emph{same} program has the \emph{same} security guarantees both in \SH and \SL.

Space limitations prevent us from discussing all the details of our formalization and we refer the reader to \techrep for all the missing details.

%For readability we hereafter highlight in \src{blue, serif} font elements belonging to \SH, in \trg{red, bold} font elements belonging to \SL and in black those that are in common.

\subsection{Setting up our formal framework}
\paragraph{Memory and memory layout}
The memory is modeled as a (finite) function mapping $2^{16}$ locations to bytes $b$.
Given a memory $\M$, we denote the operation of retrieving the byte associated to the location $l$ as $\M(l)$.
On top of that, we define read and write operations on words (i.e., pairs of bytes) and we write $w = b_1 b_0$ to denote that the most significant byte of a word $w$ is $b_1$ and its least significant byte is $b_0$.

The read operation is standard: it retrieves two consecutive bytes from a given memory location $l$ (in a little-endian fashion, as in the MSP430):
%%
% \marginpar{\alert{If needed we can remove these definitions and that of register ops.}}
%\vspace{-3mm}
\[
    \M [l] \triangleq b_1 b_0 \quad \text{if }  \M(l) = b_0 \land \M(l+1) = b_1
%%        \vspace{-3mm}
\]
%%
%% The write operation is more complex because it deals with unaligned memory accesses.
%% We faithfully model detailed aspects of Sancus
%% , like unaligned accesses,
%% %because we want
%% in order to prove that these detailed aspects do not lead to potential attacks.
%\alert{
    We define the write operation as follows
%}
%%
\begin{align*}
    (\M [l \mapsto b_1 b_0]) (l') &\triangleq
                                    \begin{cases}
                                        b_0 & \text{if } l' = l \\
                                        b_1 & \text{if } l' = l+1 \\
                                        \M (l') & \text{o.w.}\\
                                    \end{cases}
\end{align*}
Writing $b_0 b_1$ in location $l$ in $\M$ means to build an updated memory mapping $l$ to $b_0$, $l+1$ to $b_1$ and unchanged otherwise.

Note that reads and writes to $l = \mtt{0xFFFF}$ are undefined ($l+1$ would overflow hence it is undefined).
The memory access control explicitly forbids these accesses (see below).
%\alert{
    Also, the write operation deals with unaligned memory accesses (cfr. case $l' = l+1$).
    We faithfully model these aspects to prove that they do not lead to potential attacks.
%}

A memory layout $\L \triangleq \langle
    \mi{ts}, \mi{te},
    \mi{ds}, \mi{de},
    \mi{isr}
    \rangle$
describes how the enclave and the \emph{interrupt service routine} (ISR) are placed in memory and is used to check memory accesses during the execution of each instruction (see below).
The protected code section is denoted by $[\mi{ts}, \mi{te})$,
$[\mi{ds}, \mi{de})$ is the protected data section, and
$\mi{isr}$ is the address of the ISR.
The protected code and data sections do not overlap
%The protected code and data sections are assumed to be non-overlapping.
and the first address of the protected code section is the single entry point of the enclave.
%Finally, we assume the location $\mtt{0xFFFE}$ to be reserved to store the address of the first instruction to be executed when the CPU starts or when an exception happens.
Finally, we reserve the location $\mtt{0xFFFE}$ to store \emph{the address of} the first instruction to be executed when the CPU starts or when an exception happens%
%\alert{
, reflecting the behavior of MSP430.
%}
Thus, $\mtt{0xFFFE}$ must be outside the enclave sections and different from $\mi{isr}$.

%\vspace{-2mm}
\paragraph{Registers}
There are sixteen $16$-bit registers, three of which $\reg{0}$, $\reg{1}$, $\reg{2}$ have dedicated functions, whereas the others are for general use.
($\reg{3}$ is a constant generator in the MSP430, but we ignore that use in our formalization.)
More precisely, $\reg{0}$ (hereafter denoted as $\rpc$) is the program counter and points to the next instruction to be executed.
Instruction accesses are performed by word and the $\rpc$ is aligned to even addresses.
The register $\reg{1}$ ($\rsp$ hereafter) is the stack pointer and is aligned to even addresses.
Since for the time being we do not model instructions for procedure calls, the only special use of the stack pointer in our model is to store the state while handling an interrupt (see below).
The register $\reg{2}$ ($\rsr$ hereafter) is the status register and contains different
pieces of information encoded as flags.
The most important for us is the fourth bit, called $\mtt{GIE}$, set to $\mtt{1}$ when interrupts are enabled.
Other bits signal, e.g., when an operation produces a carry or when an operation returns zero.

Formally, our \emph{register file} $\R$ is a function that maps each register $\rn{r}$ to a word.
While read operation is standard, the write operation models some invariants enforced by the hardware:
%    \vspace{-3mm}
\begin{align*}
    \R [\rn{r}] \triangleq w \text{ if } \R(\rn{r}) = w
\end{align*}
%
%Instead, the write operation models some invariants enforced by the hardware:
%
\begin{align*}
    \R [\rn{r} \mapsto w] \triangleq \lambda [\rn{r'}].
                                    \begin{cases}
                                    w \& \mtt{0xFFFE} & \\
                                        \qquad\quad \text{if } \rn{r'} = \rn{r} \land (\rn{r} = \rpc \lor \rn{r} = \rsp)\\
                                    (w \& \mtt{0xFFF7}) \mid (\R[\rsr] \& \mtt{0x8}) & \\
                                        \qquad\quad \text{if } \rn{r'} = \rn{r} = \rsr \land \moderelPM{\R[\rpc]}\\
                                    w
                                       \  \quad\quad \text{if } \rn{r'} = \rn{r} \land (\rn{r} \neq \rpc \land \rn{r} \neq \rsp)\\
                                    %\  \land  \\ \qquad  \;\;\;\; (\rn{r} \neq \rsr \lor \moderelUM{\R[\rpc]})\\
                                    \R [\rn{r'}]
                                        \quad \text{o.w.}
                                    \end{cases}
\end{align*}
More specifically, the least-significant bit of the program counter and of the stack pointer are \emph{always} masked to $0$ (as is also the case in the MSP430), and the $\mtt{GIE}$ bit of the status register is always masked to its previous value when in protected mode (i.e., it cannot be changed when the CPU is running protected code, cf. the discussion in Section~\ref{sec:overview}).
Note that in the definition above we use the relation $\moderel{\R[\rpc]}$, for $\mtt{m} \in \{\mtt{PM}, \mtt{UM}\}$ made precise below: roughly it denotes that the execution is in {\em protected} or in {\em unprotected} mode
(i.e, execution is within, respectively outside the enclave).

%\vspace{-2mm}
\paragraph{I/O Devices}
\emph{I/O devices} are (simplified) \emph{deterministic I/O automata} $\D \triangleq \langle \Delta, \deltainit, \xleadsto{a}_D \rangle$ over a common signature $A$ containing the following actions $a$ (below, $w$ is a word):
\begin{enumerate*}[(i)]%[nosep]
    \item $\epsilon$, a silent, internal action;
    \item $\mi{rd}(w)$, an output action (i.e., read request from the CPU);
    \item $\mi{wr}(w)$, an input action (i.e., write request from the CPU);
    \item $\mi{int?}$, an output action indicating an interrupt is raised.
\end{enumerate*}
The transition function $\delta \xleadsto{a}_D \delta'$ models the device in state $\delta$ performing action $a \in A$ and moving to state $\delta'$, and $\deltainit$ is the initial state.
% The transition function is such that $\forall \delta$ either $\delta \xleadsto{\epsilon}_D \delta'$ or $\delta \xleadsto{\mi{int?}}_D \delta''$ (i.e., one and only one of the two transitions must be possible), also at most one $\mi{rd}(w)$ action must be possible starting from a given state.
% We let $\Delta$ be ranged over by $\delta, \delta', \ldots, \delta_0, \delta_1, \ldots$.
%\vspace{-2mm}
\paragraph{Contexts, software modules and whole programs}
%that we defined the main components of our modelled CPU,
We call \emph{software module} a memory $\M_M$ containing both protected data and code sections.
A \emph{context} $C$ is a pair $\langle \M_C, \D \rangle$, where $\D$ is a device and $\M_C$ defines the contents of all memory locations \emph{outside}
the protected sections of the layout, thus disjoint from $\M_M$.
Intuitively, the context is the part of the whole program that can be manipulated by an attacker.
Given a context $C$ and a software module $\M_M$, we define a \emph{whole program} as $C[\M_M] = \langle \M_C \uplus \M_M, \D \rangle$.

% If not otherwise stated, we assume hereafter as fixed a context $C = \langle \M_C, \D \rangle$ and a device $\D = \langle \Delta, \deltainit, \xleadsto{a}_D \rangle$.
% %, with $\Delta$ ranged over by $\delta, \delta', \ldots$ and $\delta_0, \delta_1, \ldots$.
%\vspace{-2mm}
\paragraph{Instruction set}
%We consider (almost) a subset of the MSP430 instrucion set, summarized in Tab.~\ref{tab:op-summary}.
%\alert{
We consider a subset of the MSP430 instructions plus our I/O instructions; they are in~\tablename~\ref{tab:op-summary}.
%}
For each instruction the table includes its operands, an informal description of its semantics, its duration and the number of words it occupies in memory.
The durations are used to define the function $\cycles{i}$.
%$\MT$ is the duration of longest instruction.
%\alert{
In our model, we let $\MT = 6$, because the longest MSP430 instructions take $6$ cycles (typically those for moving words within memory~\cite{ti-msp430}, none of which are displayed in~\tablename~\ref{tab:op-summary}).
%}
Instructions are stored in the memory $\M$. We use the meta-function $\decode{\M}{l}$ that decodes the contents of the cell(s) starting at location $l$, returning an instruction in the table if any and  $\bot$ otherwise.
%\vspace{-2mm}
\paragraph{Configurations}
Given an I/O device $\D$, the state of the Sancus system is described by configurations of the form:
\[
    c \triangleq \cfg{\B}{\delta}{t}{t_a}{\M}{\R}{\vopc} \in \mb{C}, \quad \text{where}
\]
\begin{enumerate*}[(i)]
    \item $\delta$ is the current state of the I/O device;
    \item $t$ is the current time of the CPU;
    \item $t_a$ is either the arrival time of the last pending interrupt, or $\bot$ if there are none
    %\alert{
    (this value may persist across multiple instructions);
    %}
    \item $\M$ is the current memory;
    \item $\R$ is the current content of the registers;
    \item $\vopc$
    % is the previous program counter;
    %\alert{
        is the value of the program counter before executing the current instruction;
    %}
    \item $\B$ is called the \emph{backup}, is software inaccessible storage space to save enclave state (registers, the old program counter and the remaining padding time) while handling an interrupt raised in protected mode.
\end{enumerate*}
%

%For that, let $C$ be a context, and $\M_M$ be a software module that define a whole program $C[\M_M] = \langle \M, \D \rangle$, then the corresponding initial configuration is:
The initial configuration for a whole program $C[\M_M] = \langle \M, \D \rangle$ is:
\[
    \initconf{C}{\M_M} \triangleq \cfg{\BBot}{\deltainit}{0}{\bot}{\M}{\R^{\mi{init}}_{\M_C}}{\mtt{0xFFFE}} \text{ where}
\]
\begin{enumerate*}[(i)]
    \item the state of the I/O device $\D$ is $\deltainit$;
    \item the initial value of the clock is $0$ and no interrupt has arrived yet;
    \item the memory is initialized to the whole program memory $\M_C \uplus \M_M$;
    \item all the registers are set to $0$ except that $\rpc$ is set to $\mtt{0xFFFE}$ (the address from which the CPU gets the initial program counter),
%    i.e.,  $\rpc = \M[\mtt{0xFFFE}]$,
    and that $\rsr$ is set to $\mtt{0x8}$ (the register is clear except for the $\mtt{GIE}$ flag);
    \item the previous program counter is also initialized to $\mtt{0xFFFE}$;
    \item the backup is set to $\BBot$ to indicate absence of any backup.
\end{enumerate*}

Dually, $\haltconf$ is the only configuration denoting termination, more specifically it is an opaque and distinguished configuration that indicates graceful termination.

Also, we define \emph{exception handling} configurations, that model what happens on soft reset of the machine (e.g. on a memory access violation, or a halt in protected mode).
On such a soft reset, control returns to the attacker by jumping to the address stored in location $\mtt{0xFFFE}$:
\begin{align*}
    &\exconf{\cfg{\B}{\delta}{t}{t_a}{\M}{\R}{\vopc}} \triangleq\\
    &\qquad\qquad\cfg{\BBot}{\delta}{t}{\bot}{\M}{\R_0 [\rpc \mapsto \M[\mtt{0xFFFE}]]}{\mtt{0xFFFE}}.
\end{align*}
%\vspace{-2mm}
\paragraph{I/O device wrapper}
%\alert{
Since the class of interrupt-based attacks requires a cycle-accurate timer, it is convenient to synchronize the
CPU and the device time by forcing the device to take as many steps as the number of cycles consumed for each instruction by the CPU.
The following ``wrapper'' around the device $\D$ models this synchronization:
%}
%Our model synchronizes CPU and device time by forcing the device to take as many steps as the CPU consumes cycles for each instruction.
%To specify that, we define a ``wrapper'' around the device $\D$:
%
\[
    \dwrap{k}{\delta, t, t_a}{\delta', t', t'_a}
\]
Assuming that the device was in state $\delta$, at time $t$, and the last pending interrupt was raised at time $t_a$, then this wrapper
defines for $k$ cycles later: the new time $t' = t+k$, the new last pending interrupt time $t'_a$, and the new device state $\delta'$.
When no interrupt has to be handled, $t_a$ and $t'_a$ are $\bot$.
% Formally:
% {
%     \footnotesize
%     \begin{mathpar}
%         \inferrule
%         {
%             a \in \{ \epsilon, \mi{int?} \}\\
%             \bigwedge\limits_{i=0}^{k-1} \delta_i \xleadsto{a}_D \delta_{i+1}\\
%             t'_a = {\begin{cases}
%                 t + j & \text{if }
%                     \exists 0 \leq j < k .\, \delta_j \xleadsto{\mi{int?}}_D \delta_{j+1} \land
%                     \\ &
%                     \forall j' < j.\, \delta_{j'} \leadsto_D \delta_{j'+1} \\
%                     % vs. \forall j < j' < k.\, \delta_{j'} \leadsto_D \delta_{j'+1} \\
%                 t_a & \text{o.w.}
%             \end{cases}}
%         }
%         {
%             \dwrap{k}{\delta_0, t, t_a}{\delta_k, (t+k), t'_a}
%         }
% \end{mathpar}
% }

%\vspace{-2mm}
\paragraph{CPU mode and memory access control}\label{subsubsec:mode-mac}
The last two relations used by the main transition systems are the \emph{CPU mode} and the \emph{memory access control}, MAC.
The first tells when a given program counter value, $\vpc$, is an address in the protected code memory ($\mtt{PM}$) or in the unprotected one ($\mtt{UM}$):
\[
    \moderel{\vpc} \text{, with } \mtt{m} \in \{ \mtt{PM}, \mtt{UM} \}
\]
(Also, for simplicity, the relation is lifted to configurations.)

\noindent
The second one
%\vspace{-3mm}
\[
    \macrel{i}{\vopc}{\R}{\B}
\]
holds whenever the instruction $i$ can be executed in a CPU configuration in which the previous program counter is $\vopc$, the registers are $\R$ and the backup is $\B$.
More precisely, it uses the predicate $\mac{f}{rght}{t}$ (see~\tablename~\ref{tab:mac}) that holds whenever from the location $f$ we have the rights $\mtt{rght}$ on location $t$.
The predicate checks that
$(1)$ the code we came from (i.e., that in location $\vopc$) can actually execute the instruction $i$ located at $\R[\rpc]$;
$(2)$ $i$ can be executed in current CPU mode;
and $(3)$ we have the rights to perform $i$ from $\R[\rpc]$, when $i$ is a memory operation.
\begin{table}[tb]
 \centering
        \resizebox{\columnwidth}{!}{
        \begin{tabular}{@{}llcccc@{}}
            & & \multicolumn{4}{c}{$t$}\\
            \cmidrule(l){3-6}
            & \multicolumn{1}{l}{} & Entry Point & Prot. code & Prot. Data & Other \\
            \midrule
            \multicolumn{1}{l|}{\multirow{2}{*}{$f$}} & \multicolumn{1}{l|}{Entry Point/Prot. code} & r-x & r-x & rw- & --x \\
            \multicolumn{1}{l|}{} & \multicolumn{1}{l|}{Other} & --x & --- & --- & rwx \\
            \bottomrule\\
        \end{tabular}}

    \caption{Definition of $\mac{f}{rght}{t}$, where $f$ and $t$ are locations.}\label{tab:mac}
\end{table}
\subsection{\SH: a model of the original Sancus}\label{sec:semantics:sh}
Our models of Sancus are defined by means of two transition systems: a main one and an auxiliary one.
The first system defines the operational semantics of instructions, and relies on
the auxiliary system to specify the behavior upon interrupts.
%\vspace{-2mm}
\paragraph{Main transition system}

The main transition system describes how the \SH configurations evolve during the execution, whose steps are represented by transitions of the following form, where $\D$ is an I/O device and $c, c' \in \mb{C}$:
\[
    \shmain{c}{c'}
\]
%where $\D$ is an I/O device and $c, c' \in \mb{C}$.

\figurename~\ref{fig:main-ts} reports some selected rules among those defining the main transition system.
\newcommand{\bnotspecial}[0]{
    %\alert{
        \B \neq \langle \bot, \bot, t_\mi{pad} \rangle
        %}
        }
\begin{figure*}[tb]
    \footnotesize
    \begin{mathpar}
        \inferrule*[
            lab={\shrulename{(CPU-Violation-PM)}},
            right={$i = \decode{\M}{\R[\rpc])} \neq \bot$}
        ]
        {
            \bnotspecial\\
            \nmacrel{i}{\vopc}{\R}{\B}
        }
        {
            \shmain{\cfg{\B}{\delta}{t}{t_a}{\M}{\R}{\vopc}}{\exconf{\cfg{\B}{\delta}{t + \cycles{i}}{t_a}{\M}{\R}{\vopc}}}
        }

        \inferrule*[
        lab={\shrulename{(CPU-MovL)}},
        right={$i = \decode{\M}{\R[\rpc])} = \MOVL{r_1}{r_2}$}]
        {
            \bnotspecial\\
            \macrel{i}{\vopc}{\R}{\B}\\
            \R' = \R[\rpc \mapsto \R[\rpc] + 2][\rn{r_2} \mapsto \M[\R[\rn{r_1}]]]\\
            \dwrap{\cycles{i}}{\delta, t, t_a}{\delta', t', t'_a}\\
            \shint{\D}{\cfg{\B}{\delta'}{t'}{t'_a}{\M}{\R'}{\R[\rpc]}}{\cfg{\B'}{\delta''}{t''}{t''_a}{\M'}{\R''}{\R[\rpc]}}
        }
        {
            \shmain{\cfg{\B}{\delta}{t}{t_a}{\M}{\R}{\vopc}}{\cfg{\B'}{\delta''}{t''}{t''_a}{\M'}{\R''}{\R[\rpc]}}
        }

        \inferrule*[
        lab={\shrulename{(CPU-In)}},
        right={$i = \decode{\M}{\R[\rpc]} = \IN{r}$}]
        {
            \bnotspecial\\
            \macrel{i}{\vopc}{\R}{\B}\\
            \delta \xleadsto{\mi{rd(w)}}_D \delta'\\
            \R' = \R[\rpc \mapsto \R[\rpc] + 2][\rn{r} \mapsto w]\\
            \dwrap{\cycles{i}}{\delta', t, t_a}{\delta'', t', t'_a}\\
            \shint{\D}{\cfg{\B}{\delta''}{t'}{t'_a}{\M}{\R'}{\R[\rpc]}}{\cfg{\B'}{\delta'''}{t''}{t''_a}{\M'}{\R''}{\R[\rpc]}}
        }
        {
            \shmain{\cfg{\B}{\delta}{t}{t_a}{\M}{\R}{\vopc}}{\cfg{\B'}{\delta'''}{t''}{t''_a}{\M'}{\R''}{\R[\rpc]}}
        }
    \end{mathpar}

    \caption{Selected rules from the main transition system.}\label{fig:main-ts}
\end{figure*}
The first shows how the model deals with violations in protected mode: if an instruction can not be executed according to the memory-access control relation then a transition to the \emph{exception handling} configuration happens.
Rule~\shrulename{(CPU-MovL)} is for when the current instruction $i$ loads in $\rn{r_2}$ the word in memory at the position pointed by $\rn{r_1}$.
Its first premise checks if the instruction can be executed;
the second one increments the program counter by $2$ and loads in $\rn{r_2}$ the value $\M[\rn{r_1}]$;
the third premise registers in the device that $i$ requires $\cycles{i}$ cycles to complete;
and the last one executes the interrupt logic to check whether an interrupt needs to be handled or not (see comment below).
Another interesting rule is~\shrulename{(CPU-In)} that deals with the case in which the instruction reads a word from the device and puts the result in $\rn{r}$.
Its second premise holds when the device sends the word $w$ to the CPU; the others are similar to those of~\shrulename{(CPU-MovL)}.
%\vspace{-2mm}
\paragraph{Interrupt logic}
The auxiliary transition system for \SH specifies the interrupt logic, and has the form:
\[
\shint{\D}{\cfg{\B}{\delta}{t}{t_a}{\M}{\R}{\vopc}}{\cfg{\B'}{\delta'}{t'}{t'_a}{\M'}{\R'}{\vopc}}.
\]
Since \SH ignores all interrupts,
%\alert{
even in unprotected mode,
%}
the transition system always leaves the
configuration unchanged.
% {
%     \footnotesize
%     \begin{mathpar}
%         \inferrule*[lab={\shrulename{INT}}]
%         { }
%         {
%             \shint{\D}{\cfg{\B}{\delta}{t}{t_a}{\M}{\R}{\vopc}}{\cfg{\B}{\delta}{t}{t_a}{\M}{\R}{\vopc}}
%         }
%         \end{mathpar}
%       }

Actually, one could remove the premise with the auxiliary transition system from all the rules defining the semantics of \SH, as it always holds.
However, it is convenient keeping them both to ease the presentation of the transition system of \SL, and for technical reasons, as well.

\subsection{\SL: secure interruptible Sancus}\label{sec:semantics:sl}
We now define the semantics of \SL, the \emph{secure interruptible Sancus}, formalizing the mitigation outlined in Section~\ref{sec:overview}.
We start by describing the main difference with that of \SH, i.e., the way interrupts are handled.

%\vspace{-2mm}
\paragraph{Interrupt logic}
\figurename~\ref{fig:int-ts-sl} shows the relevant rules of the auxiliary transition system describing the interrupt logic of \SL.
\begin{figure*}[tb]
    \footnotesize
    \begin{mathpar}
        \inferrule*[lab={\slrulename{(INT-UM-P)}}]
        {
            \moderelUM{\vopc}\\
            \R[\rsr].\mtt{GIE} = 1\\
            t_a \neq \bot\\
            \R' = \R[\rpc \mapsto \mi{isr}, \rsr \mapsto 0, \rsp \mapsto \R[\rsp]-4]\\
            \M' = \M[\R[\rsp]-2 \mapsto \R[\rpc], \R[\rsp]-4 \mapsto \R[\rsr]]\\
            \dwrap{6}{\delta, t, \bot}{\delta', t', t'_a}
        }
        {
            \slint{\D}{\cfg{\B}{\delta}{t}{t_a}{\M}{\R}{\vopc}}{\cfg{\B}{\delta'}{t'}{t'_a}{\M'}{\R'}{\vopc}}
        }

        \inferrule*[lab={\slrulename{(INT-PM-P)}}]
        {
            k = \MT - (t - t_a)\\
            \moderelPM{\vopc}\\
            \R[\rsr].\mtt{GIE} = \mtt{1}\\
            t_a \neq \bot\\
            \R' = \R_0[\rpc \mapsto \mi{isr}]\\
            \dwrap{6+k}{\delta, t, \bot}{\delta', t', t'_a}\\
            \B' = \langle \R, \vopc, t - t_a \rangle
        }
        {
            \slint{\D}{\cfg{\B}{\delta}{t}{t_a}{\M}{\R}{\vopc}}{\cfg{\B'}{\delta'}{t'}{\bot}{\M}{\R'}{\vopc}}
        }
    \end{mathpar}

\caption{Selected rules for the interrupt logic in \SL.}\label{fig:int-ts-sl}
\end{figure*}
Now interrupts are handled both in unprotected and protected mode, modeled by the rules~\slrulename{(INT-UM-P)} and~\slrulename{(INT-PM-P)}, resp.
For the first case there is the premise $\moderelUM{\vopc}$, for the second $\moderelPM{\vopc}$ (i.e., the mode in which the last instruction was executed).
Both rules have a premise requiring that the $\mtt{GIE}$ bit of the status register is set to $1$ and that an interrupt is on ($t_a \neq \bot$).
(If this is not the case, two further rules, not displayed, just leave the configuration untouched%
%\alert{
, and keep the value of $t_a$ unchanged.)
%}
A premise of~\slrulename{(INT-UM-P)} concerns registers: the program counter gets the entry point of the handler;
the status register gets $0$;
and the top of the stack is moved $4$ positions ahead.
Accordingly, the new memory $\M'$ updates the locations pointed by the relevant elements of the stack with the current program counter and the contents of the status register.
The last premise specifies that this interrupt handling takes $6$ cycles.
%clock cycles.

The rule~\slrulename{(INT-PM-P)} is more interesting.
Besides assigning the entry point of the handler to the program counter, it computes the padding time for mitigation of interrupt-based timing attacks and saves the backup in $\B'$.
The padding $k$ is then used, causing interrupt handling to take $6 + k$ steps.
Such a padding is needed to implement the first part of the mitigation (see Section~\ref{sec:overview:sl}) and is computed so as to make the dispatching time of interrupts constant.
%\alert{
Note that the padding never gets negative.
When an interrupt arrives in protected mode two cases may arise.
Either $\mtt{GIE}=1$, and the padding is non-negative because the interrupt is  handled at the end of the current instruction; or $\mtt{GIE}=0$, and no padding is needed because the interrupt is handled as soon as $\mtt{GIE}$ becomes 1, which is only possible in unprotected mode.
%}
The backup stores part of the CPU configuration ($\R$ and $\vopc$) and $t_\mi{pad} = t - t_a$.
The value of $t_\mi{pad}$ will then be used as further padding before returning, so fully implementing the mitigation (cf. Section~\ref{sec:overview:sl}).
The register file $\R_0$ is $\{ \rpc \mapsto 0, \rsp \mapsto 0, \rsr \mapsto 0, \reg{3} \mapsto 0, \ldots, \reg{15} \mapsto 0 \}$.

%\vspace{-2mm}
\paragraph{The main transition system}

The rules defining the main transition system of \SL are
those of \SH, with a non-trivial transition system for interrupt logic and mitigation
%a superset of those of \SH
%, plus a few additional rules that govern the returns from interrupts
--- this explains why also \SH rules have the premise $\shint{\D}{\cdot}{\cdot}$ for interrupts.

There are new rules for the new $\RETI$ instruction, shown in~\figurename~\ref{fig:main-ts-sl}.
%We can now see how the main transition system of \SL deals with returns from interrupts.
%We add these rules to those of the main transition system of \SH
%Fig.~\ref{fig:main-ts-sl} shows some selected rules among those that define the return from an interrupt.
%Recall such rules are \emph{in addition} to those presented above for \SH.
% (note that the very same rules for $\RETI$ are also present in the main transition system of \SH, but they are never used to return from an interrupt!)
%
\begin{figure*}[tb]
    \footnotesize
    \begin{mathpar}
        \inferrule*[
            lab={\slrulename{(CPU-Reti)}},
            right={$i = \decode{\M}{\R[\rpc]} = \RETI$}]
            {
                \bnotspecial\\
                \macrel{i}{\vopc}{\R}{\bot}\\
                \R' = \R[\rpc \mapsto \M[\R[\rsp] + 2], \rsr \mapsto \M[\R[\rsp]], \rsp \mapsto \R[\rsp] + 4]\\
                \dwrap{\cycles{i}}{\delta, t, t_a}{\delta', t', t'_a}
            }
            {
                \slmain{\cfg{\BBot}{\delta}{t}{t_a}{\M}{\R}{\vopc}}{\cfg{\BBot}{\delta'}{t'}{t'_a}{\M}{\R'}{\R[\rpc]}}
            }

        \inferrule*[
        lab={\slrulename{(CPU-Reti-Chain)}},
        right={$i = \decode{\M}{\R[\rpc]} = \RETI$}]
        {
            \bnotspecial\\
            \B \neq \BBot \\
            \dwrap{\cycles{i}}{\delta, t, t_a}{\delta', t', t'_a}\\
            \R[\rsr.\mtt{GIE}] = \mtt{1}\\
            t'_a \neq \bot\\
            \slint{\D}{\cfg{\B}{\delta'}{t'}{t'_a}{\M}{\R}{\R[\rpc]}}{\cfg{\B}{\delta''}{t''}{t''_a}{\M'}{\R'}{\R[\rpc]}}
        }
        {
            \slmain{\cfg{\B}{\delta}{t}{t_a}{\M}{\R}{\vopc}}{\cfg{\B}{\delta''}{t''}{t''_a}{\M'}{\R'}{\R[\rpc]}}
        }

        \inferrule*[
        lab={\slrulename{(CPU-Reti-PrePad)}},
        right={$i = \decode{\M}{\R[\rpc]} = \RETI$}]
        {
            \B \neq \langle \bot, \bot, t_\mi{pad} \rangle\\
            \macrel{i}{\vopc}{\R}{\B}\\
            \B \neq \BBot \\
            \dwrap{\cycles{i}}{\delta, t, t_a}{\delta', t', t'_a}\\
            (\R[\rsr.\mtt{GIE}] = \mtt{0} \ \lor\ t'_a = \bot)\\
        }
        {
            \slmain{\cfg{\B}{\delta}{t}{t_a}{\M}{\R}{\vopc}}
            {
                \cfg{\langle \bot, \bot, \B.t_\mi{pad} \rangle}{\delta'}{t'}{t'_a}{\M}{\B.\R}{\B.\vopc}
            }
        }

        \inferrule*[lab={\slrulename{(CPU-Reti-Pad)}}]
        {
            \B = \langle \bot, \bot, t_\mi{pad} \rangle \\
            \dwrap{t_\mi{pad}}{\delta, t, t_a}{\delta', t', t'_a}\\
            \slint{\D}{\cfg{\BBot}{\delta'}{t'}{t'_a}{\M}{\R}{\vopc}}{\cfg{\B'}{\delta''}{t''}{t''_a}{\M}{\R'}{\vopc}}\\
        }
        {
            \slmain{\cfg{\B}{\delta}{t}{t_a}{\M}{\R}{\vopc}}{\cfg{\B'}{\delta''}{t''}{t''_a}{\M}{\R'}{\vopc}}
        }
    \end{mathpar}

    \caption{Some rules from the operational semantics of \SL.}\label{fig:main-ts-sl}
\end{figure*}
Rule~\slrulename{(CPU-Reti)} deals with a return from an interrupt that was handled in unprotected mode, i.e., when $i = \decode{\M}{\R[\rpc]} = \RETI$ and there is no backup.
Its first premise checks that the $\RETI$ instruction is indeed permitted.
The second one requires that the program counter is set to the contents of the memory location pointed by the second element from the top of the stack (that grows downwards); that the status register is set to the contents of the memory location pointed by the top of the stack; and that
%\alert{
two words are popped
%}
%word-sized elements are pulled out
from the stack.
Finally, the third one registers that $\cycles{i}$ steps are needed to complete this task.
Rule~\slrulename{(CPU-Reti-Chain)} executes the interrupt handler in unprotected
mode when the CPU discovers that another interrupt arrived, while returning
from a handler whose interrupt was raised in protected mode (via the interrupt logic).
% In this case we execute the interrupt logic to handle the interrupt in
% unprotected mode.
%The interrupt logic is
%
The most interesting rules are the last two.
They deal with the case in which the CPU is returning from the handling of an interrupt raised in protected mode, but no new interrupt arrived afterwards (or the $\mtt{GIE}$ bit is off, cf. fourth premise of rule~\slrulename{(CPU-Reti-PrePad)}).
First, rule~\slrulename{(CPU-Reti-PrePad)} restores registers and $\vopc$ from the backup $\B$, then rule~\slrulename{(CPU-Reti-Pad)} %\alert{
 (which is the only one applicable after~\slrulename{(CPU-Reti-PrePad)})
%}
applies the remaining padding (recorded in the backup) to rule out resume-to-end timing attacks (note that this last padding is interruptible, as witnessed by the last premise).
We model the mechanism of restoring registers, $\vopc$ and of applying the remaining padding with two rules instead of just one for technical reasons (see \techrep for details).
Note that this last padding is applied even if the configuration reached through rule~\slrulename{(CPU-Reti-PrePad)} is in unprotected mode (i.e., the interrupted instruction was a jump out of protected mode).
Indeed, if it was not the case, the attacker would be able to discover the value of the padding applied \emph{before} the interrupt service routine.
%\vspace{-2mm}

\subsection{Security theorem}\label{sec:semantics:fa}
%\marginpar{\alert{see .tex}}
%%%%%%%%%%% COMMENT %%%%%%%%%%
%%%%%%%%%%%%%%%%%%%%%
%%%Say that this is an instantiation of the general concept of full abstraction and that -- except for the technical details -- we could adapt the proof to other architectures, as argued above and in Section about the future work.
Our security theorem states that what an attacker can learn from an enclave is exactly the same before and after adding the support for interrupts.
Technically, we show that the semantics of \SL is \emph{fully abstract} w.r.t.\ the semantics of \SH, i.e., all the attacks that can be carried out in \SL can also be carried out in \SH, and viceversa.
%\alert{
    Even though the technical details are specific to our case study, the security definition applies also to other architectures.
%}
Before stating the full abstraction theorem and giving the sketch of its proof, we introduce some further notations.
%should recall some concepts and

Recall that a whole program $C[\M_M]$ consists
of a module $\M_M$ and a context $C = \langle \M_C, \D \rangle$, where
$\M_C$ contains the unprotected program and data and $\D$ is the I/O device.
% Recall from Section~\ref{sec:semantics:sh} that we defined a whole program $C[\M_M]$, where $\M_M$ is the software module and $C$ is the pair $C = \langle \M_C, \D \rangle$ represents the context ($\M_C$ contains the unprotected program and data and $\D$ is the I/O device).

Let $\sconv{C[\M_M]}$ denote a \emph{converging computation in \SH}, i.e., a sequence of transitions of the whole program that reaches the halting configuration from the initial one.
Also, let two software modules $\M_M$ and $\M_{M'}$ be \emph{contextually equivalent in \SH}, written $\M_M \seq \M_{M'}$, if and only if for all contexts $C$, $\sconv{C[\M_M]} \iff \sconv{C[\M_{M'}]}$.
Similarly, we define $\tconv{C[\M_M]}$ and $\M_M \teq \M_{M'}$ for \SL.
Roughly, the notion of contextual equivalence formalizes the intuitive notion of \emph{indistinguishability}:
%of software modules:
two modules are contextually equivalent if they behave in the same way under any attacker (i.e., context).
Due to the quantification over {\em all}
contexts, it suffices to consider just terminating and non-terminating
executions as distinguishable, since
any other distinction can be reduced to it.
We can state the theorem that guarantees the absence of interrupt-based attacks:
% This is to avoid the header to be separated from the statement
\begin{restatable}[Full abstraction]{thm}{fullabstraction}\label{thm:fa}\hfill\newline
    $\forall \M_M, \M_{M'}.\ (\M_M \seq \M_{M'} \iff \M_M \teq \M_{M'})$.
\end{restatable}

\noindent
First we prove
$\M_M \teq \M_{M'} \Rightarrow \M_M \seq \M_{M'}$ and then
$\M_M \seq \M_{M'} \Rightarrow \M_M \teq \M_{M'}$.
Below we only intuitively describe the proof steps (all the details are in \techrep).

%\subsubsection{
%\vspace{-3mm}

\paragraph{Proof sketch for $\M_M \teq \M_{M'} \Rightarrow \M_M \seq \M_{M'}$}
Since programs in \SH behave like those in \SL with no interrupts, proving this implication is not too hard.
It suffices to introduce the notion of \emph{interrupt-less} context $\nI{C}$ for \SL
%: it intuitively behaves as $C$, but never raises any interrupt (just remove from the device all the transitions that may raise any).
that behaves as $C$, but never raises  interrupts.
The thesis follows because an enclave hosted in a interrupt-less context terminates in \SL whenever it does in \SH, as interrupt-less contexts are a strict subset of all the contexts.

%
%We then prove that an enclave hosted in a interrupt-less context terminates in \SL whenever it does in \SH, formally: $\tconv{\nI{C}[\M_M]} \iff \sconv{C[\M_M]}$.
%Then the thesis easily follows by noting that interrupt-less contexts are a strict subset of all the contexts.
%\vspace{-2mm}

%\subsubsection{
\paragraph{Proof sketch for $\M_M \seq \M_{M'} \Rightarrow \M_M \teq \M_{M'}$}
%\alert{M: In the unlikely case we need to fill some space, I think it would clarify a lot to introduce (sorted by priority): $(i)$ statements of main Lemmata from the tech report (w/o proof) with 2 lines of explanation (incl. the roles of traces in proving them), $(ii)$ the role of these Lemmata in the proof of FA, $(iii)$ explicit definition of traces, $(iv)$ a bit more words on the backtranslation.
%
%Alternatively, we could move those things in the appendix (hence having all those + some words about U/P equivalences) and leaving just statements and a couple of definition here, with some glue to link them together.}
%The proof requires some further definitions and a few lemmata.
%Some further definitions and a few lemmata are needed.
%
We first introduce the notion of observable behavior, in terms of the traces that $C[\M_M]$ can perform according to the \SL semantics.
Traces are built using three observables: $(i)$  $\cnv$ denotes that the computation halts; $(ii)$ $\jmpin{\R}$ denotes that the CPU enters the protected mode, where $\R$ are the observed registers and $(iii)$ $\jmpout{\dt}{\R}$ denotes the exit from protected mode with observed registers $\R$ and with $\dt$ representing the end-to-end time measured by an attacker for code running in protected mode.
%No observables are needed for interrupts.

The proof then follows the steps
%depicted
in~\figurename~\ref{fig:strategy}, where $\M_M \ttreq \M_{M'}$ means that $\M_M$ and $\M_{M'}$ have the same %set of
traces.
Implication $(i)$ shows that the attacker in \SL at most observes \emph{as much as} traces say; implication $(ii)$ shows that the attacker in \SH is \emph{at least as powerful as} described by traces;
finally implication $(iii)$ is our thesis that follows by transitivity.
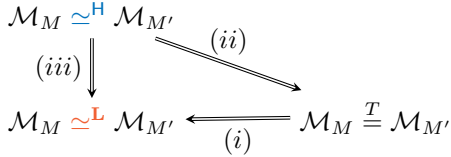
\begin{figure}[tb]
    \centering
    \begin{tikzpicture}
        \matrix (m) [matrix of math nodes,row sep=2em,column sep=4em,minimum width=2em] {
        \M_M \seq \M_{M'} & \\
        \M_M \teq \M_{M'} & \M_M \ttreq \M_{M'} \\};
        \path[-stealth]
          (m-2-2) edge [double] node [below] {$(i)$} (m-2-1)
          (m-1-1) edge [double] node [above] {$(ii)$} (m-2-2)
          (m-1-1) edge [double] node [left] {$(iii)$} (m-2-1);
    \end{tikzpicture}
    \caption{The steps for proving preservation of behavior.}\label{fig:strategy}
%    \vspace{-3mm}
\end{figure}
The proof of $(i)\ \M_M \ttreq \M_{M'} \Rightarrow \M_M \teq \M_{M'}$ roughly goes as follows.
%First we show that the mitigation guarantees
%\alert{
First the mitigation is shown to guarantee
%}
that the behavior of the context
(in unprotected mode) does not depend on the behavior of the enclave (in
protected mode) and vice versa
(\isolationlemmata).
% : in these cases the proofs
% %amount to showing
% show that, if $C[\M_M]$ reaches the configuration $c$ with a trace $\btr$, the next step from $c$, if any, has an observable $\tb$ that only depends on $\btr$ and on data that the context (the enclave, resp.) can manipulate.
The thesis follows because if $\M_M \ttreq \M_{M'}$ and $C[\M_M]$ has a trace
$\btr$, then also $C[\M_M']$ has the same trace $\btr$.

The proof of $(ii)\ \M_M \seq \M_{M'} \Rightarrow \M_M \ttreq \M_{M'}$
is by contraposition: if two modules have different traces, there exists a context that distinguishes them,
and we build such a context through a \emph{backtranslation}.
%amount to proving the counterpositive, i.e., show that, given any two modules with different traces, there exists a context that distinguishes them.
%We prove the existence of such a context by providing an algorithm (i.e., a \emph{backtranslation}) the builds such a context.
Because of the strong limitations -- for instance because only 64KB of memory is available -- building such a context in unprotected memory only is infeasible and the strong attacker model that enclaved execution is built for is actually helpful here.
The backtranslation defines and uses both the unprotected memory (\algomem), and the I/O device, which has unrestricted memory (\algodev).
% However, the strong attacker model that enclaved execution is built for is actually helpful here: since attackers control both unprotected memory as well as the I/O device,
%Indeed, the construction of the required context can make use both of unprotected memory as well as of the I/O device (that has unrestricted memory).
% Intuitively, to build an attacking context, the code in unprotected memory can rely on an I/O device that has unrestricted memory.
Very roughly, the idea is to take a trace of $\M_M$ and one of $\M_{M'}$ that differ for one observable, and build a context $C$ such that $\M_M$ converges and $\M_{M'}$ does not, so contradicting the hypothesis $\M_M \seq \M_{M'}$.
%\alert{commented a piece}
%First, the algorithm builds an unprotected memory that includes a location $l$ with the halt instruction and another $l'$ with an instruction encoding an endless loop. % (other details omitted).
%Then, using the two traces, the algorithm builds an I/O device that interacts with the context's code in order to distinguish the two modules, i.e., when the first different observable is met, the device loads in the program counter the location $l$ while running $\M_M$ and $l'$ while running $\M_{M'}$ (or viceversa).

%%% Local Variables:
%%% mode: latex
%%% TeX-master: "../paper"
%%% End:

%-------------------------------------------------------------------------------

%-------------------------------------------------------------------------------
\section{Implementation and evaluation}\label{sec:implementation}
    % !TEX root = ../paper.tex

\newcommand{\ctr}[1]{\ensuremath{\mathtt{C}_\mathtt{#1}}}
%\vspace{-1mm}

We provide a full implementation of our approach based on the Sancus~\cite{sancus} architecture which, in turn, is based on the openMSP430, an open source implementation of the TI MSP430 ISA.
Our implementation can be divided in two parts.
First, we adapted the execution unit's state machine to add padding cycles whenever an interrupt happens in protected mode and when we return from such interrupts.
Second, we added a protected storage area corresponding to $\B$.

%\vspace{-2mm}

%\subsection{
\paragraph{Cycle padding}
\label{sec:cycle-padding}

%\vspace{-1mm}

To implement cycle padding, we added three counters to the processor's frontend.
The first, \ctr{reti\_nxt}, tracks the number of cycles
%that need
to be padded on the next \RETI.
Whenever an \emph{interrupt request} (IRQ) occurs,
this counter is initialized to zero and is subsequently incremented every cycle until the current instruction
completes.
%finishes execution.
Thus, at the end of an instruction, this counter holds $t - t_a$, which corresponds to $t_{\mi{pad}}$ in $\B$ (cf. the \slrulename{(INT-PM-P)} rule in~\figurename~\ref{fig:int-ts-sl}).
%, of Section~\ref{sec:semantics:sl}).

%\marginpar{\alert{M: any need to further clarify that implementation and model agree?}}
The second counter, \ctr{irq}, holds the number of cycles that needs to be padded when an IRQ occurs.
It is initialized to $\MT - \ctr{reti\_next}$
($\MT$ is $6$ in our case)
when the instruction during which an IRQ occurred finishes execution.
That is, it holds the value $k$ from rule \slrulename{(INT-PM-P)} in~\figurename~\ref{fig:int-ts-sl} after the instruction finishes.
From this point on, the counter is decremented every cycle and the execution unit's state machine
%\alert{P: unclear ---
%execution unit's state machine
%}
is kept in a wait state until the counter reaches zero.
Only then is it allowed to progress and start handling the IRQ.

Lastly, a third counter, \ctr{reti}, is added that holds the number of cycles that needs to be padded for the current \RETI{} instruction.
Whenever a \RETI{} is executed while handling an IRQ from protected mode, this counter is initialized with the value of \ctr{reti\_nxt}.
Then, after restoring the processor state from $\B$ (see Section~\ref{sec:save-restore-state}), this counter is decremented every cycle until it reaches zero.
After these padding cycles, the next instruction is fetched, from $\R[\rpc]$ restored from $\B$, and executed.
Note that these padding cycles behave as any $t_{\mi{pad}}$-cycle instruction from the perspective of the padding logic.
That is, they can be interrupted and, hence, padded as well.
This is the reason why we need two counters to hold padding information for \RETI: \ctr{reti} is used to pad the current \RETI{} instruction and \ctr{reti\_nxt} is used -- concurrently, if an IRQ occurs -- to count $t_{\mi{pad}}$ for the \emph{next} \RETI.
%\vspace{-2mm}

%\subsection{
\paragraph{Saving and restoring processor state}
\label{sec:save-restore-state}

%Whenever an IRQ in protected mode occurs, the processor's register state needs to be saved in a location that is inaccessible from software.
Whenever an IRQ in protected mode occurs, the processor's register state needs to be saved in a location inaccessible from software.
Our current implementation uses a shadow register file to this end.
We duplicate all registers $\reg{0},\ldots,\reg{15}$ (except $\reg{3}$, the constant generator, which does not store state).
On an IRQ, all registers are first copied to the shadow register file and then cleared.
When a subsequent \RETI{} is executed, registers are restored from their copies.
For the other values in $\B$, $\vopc$ is handled the same as registers, and $t_{\mi{pad}}$ is saved from \ctr{reti\_nxt} and restored to \ctr{reti}, as explained in Section~\ref{sec:cycle-padding}.
Besides the values in $\B$, we add a single bit to indicate if we are currently handling an IRQ from protected mode, \mbox{allowing us to test if $\B  \neq \bot$.}

The current implementation allows to save or restore the processor state in a single cycle at the cost of approximately doubling the size of the register file.
If this increase in area is unacceptable, the state could be stored in a protected memory area.
Implementing this directly in hardware would increase the number of cycles needed to save and restore a state to one cycle per register.
Of course, one should make sure that this memory area is inaccessible from software by adapting the
memory access control logic of the processor accordingly.

%\vspace{-2mm}

%\subsection{
\paragraph{Evaluation}
\label{sec:evaluation}

To evaluate the performance impact of our implementation, we only need to quantify the overhead on handling interrupts and returning from them,
as an uninterrupted flow of instructions is not impacted by our design.

When an IRQ occurs, as well as when the subsequent \RETI{} is executed, there is a maximum of $\MT$
padding cycles executed.
This variable part of the overhead is thus bounded by
%\alert{
$\MT$
%}
%6
cycles for both cases.
The fixed part -- saving and restoring the processor's state -- turns out to be 0 in our current implementation: since the fetch unit's state machine
needs at least one extra cycle to do a jump in both cases, copying the state is done during this cycle and
causes no extra overhead.
Of course, if the register state is stored in memory, as described in Section~\ref{sec:save-restore-state}, the fixed overhead grows accordingly.

%\marginpar{\alert{Rew B: jargon?}}

To evaluate the impact on area, we synthesized our implementation on a Xilinx XC6SLX25 Spartan-6 FPGA with a speed grade of $-2$ using Xilinx ISE Design Suite optimizing for area.
The baseline is an unaltered Sancus 2.0 core configured with support for a single protected module and 64-bit keys
%\alert{
for remote attestation.
%}
The unaltered core could be synthesized using 1239 slice registers and 2712 slice LUTs.
Adding support for saving and restoring the processor state increases the area to 1488 slice registers and 2849 slice LUTs and the implementation of cycle padding further increases it to 1499 slice registers and 2854 slice LUTs.
It is clear that the largest part of the overhead comes from saving the processor state which is necessary for any implementation of secure interrupts and can be optimized as discussed in Section~\ref{sec:save-restore-state}.
The implementation of cycle padding, on the other hand, does not have a significant impact on the processor's area.

%-------------------------------------------------------------------------------

%-------------------------------------------------------------------------------
\section{Discussion}\label{sec:discussion}
    % !TEX root = ../paper.tex

\subsection{On the use of full abstraction a security objective}
The security guarantee that our approach offers is quite strong: an attack is possible in \SH if and only if it is possible at \SL.
Full abstraction fits naturally with our goal, because isolation is defined in terms of contextual equivalence, and full abstraction
specifies that contextual equivalence is preserved and reflected.

The \emph{if}-part, namely preservation, guarantees that extending \SH with interrupts opens no new vulnerabilities.
Reflection, i.e., the \emph{only if}-part is needed because otherwise two enclaves that are distinguishable in \SH \emph{become} indistinguishable in \SL.
Although this mainly concerns functionality and not security,
%is not primarily a security problem,
%it shows that adding interrupts is not fully ``backwards compatible.''
a problem emerges: adding interrupts is not fully ``backwards compatible.''
Indeed, reflection rules out mechanisms that while closing the interrupt side-channels also close other channels.
We believe the situation is very similar for other extensions: adding caches, pipelining, etc.\ should not strengthen existing isolation mechanisms either.

Actually, full abstraction enables us to take the security guarantees of \SH as the specification of the isolation required after an extension is added.

An alternative approach to full abstraction would be to require (a non interactive version of) robust preservation of timing-sensitive non-interference~\cite{abate2018exploring}.
This can also guarantee resistance against the example attacks in Section~\ref{sec:overview}.
However, this approach offers a strictly weaker guarantee: our full abstraction result implies that timing-sensitive non-interference properties of \SH programs are preserved in \SL, as far as non-interference takes as secret the whole enclave, i.e., its memory and code, and the initial state, as well.
In addition, full abstraction implies that isolation properties that rely on code confidentiality are preserved, and this matters for enclave systems that guarantee code confidentiality, like the Soteria system~\cite{soteria}.
An advantage however might be that robust preservation of timing-sensitive non-interference might be easier to prove.
%Note however that this approach requires to choose the observables, while in our case observables directly follow from the operational semantics of the machine and from the attacker model.

In case full abstraction is considered too strong as a security criterion,
it is possible to selectively weaken it by modifying \SH. For instance,
to specify that code confidentiality is not important, one can modify
\SH to allow contexts to read the code of an enclave.

\subsection{The impact of our simplifcations}
The model and implementation we discussed in this paper make several simplifying assumptions.
%to provide very strong security guarantees.
A first important observation that we want to make is that some simplifications of our model with respect to our implementation
are straightforward to remove.
For instance, supporting more MSP430 instructions in our model would not affect the strong security guarantees offered by our approach, and only requires straightforward, yet tedious technical work.

However, there are also other assumptions that are more essential, and removing these would require additional research.
%}
Here, we discuss the impact of these assumptions on the applicability of our results to real systems.

First, we scoped our work to only consider ``small'' microprocessors.
The essential assumption our work relies on is that the timing of individual instructions is
predictable (as shown, e.g., in~\tablename~\ref{tab:op-summary} for the MSP430).
This is typically only true of small microprocessors.
As soon as a processor implements performance enhancing features like caching or speculation, the timing
of an individual instruction will be variable, e.g., a load will be faster if can be served from the cache. Our model and proof do not apply to such more advanced processors.
However, we do believe that the padding countermeasure that we proved to be secure on simple processors is a very
good {\em heuristic} countermeasure, also for more advanced processors.
It has been shown that for instance interrupt-latency attacks are relevant for modern
Intel Core processors supporting SGX enclaves \cite{nemesis}.
Interrupt latency is not deterministic on these processors, but is instead a complex function of the
micro-architectural state at the point of interruption, and it is hard to determine
an upper bound on the maximal latency that could be observed.
Still, padding to a fixed length on interrupt and complementary padding on resume will significantly
raise the bar for interrupt latency attacks.
%\alert{
We are aware that it would be very hard, if not impossible at all, to carry over to these settings the strong security guarantees offered by full abstraction for ``small'' microprocessors.
Consider for instance the leaks made possible by the persistent micro-architectural state that we do not model
in this paper.
However, implementing our countermeasure will likely make attacks harder also in high-end microprocessors.
%}
%There is no hope for a provable full abstraction result in these settings (as for instance there will also
%be leaks through persistent micro-architectural state that we do not model
%in this paper), but implementing our countermeasure will make attacks harder.

%Second, our model made several simplifying assumptions about the enclave-based
%isolation mechanism. We did not model support for remote attestation or other
%cryptographic operations, and we assumed the simple setting where only a single
%enclave is supported. We believe these simplifications are acceptable, as they
%reduce the complexity of the model significantly, and as none of the known
%interrupt-driven attacks relies on these features. It is also important to
%emphasize that these are model-limitations, and that an implementation can
%easily support attestation and multiple enclaves. However, for implementations
%that do this, our proof does not rule out the presence of attacks that
%rely on these features.
%\alert{
Second, our model made some simplifying assumptions about the enclave-based
isolation mechanism.
We did not model support for cryptographic operations and for attestation.
This  means that we
assume that the loading and initialization of an enclave
can be done as securely in \SL as it can be done in \SH.
Our choice separates concerns, and it is independent of the security criterion adopted.
Modelling both memory access control and cryptography would only increase the complexity of the model, as
two security mechanisms rather than one would be in order.
Also their interactions should be considered to prevent, e.g., leaks of cryptographic keys unveiling secrets protected by
memory access control, and viceversa.
Also, we assumed the simple setting where only a single
enclave is supported.
We believe these simplifications are acceptable, as they
reduce the complexity of the model significantly, and as none of the known
interrupt-driven attacks relies on these features.
%}
It is also important to
emphasize that these are model-limitations, and that an implementation can
easily support attestation and multiple enclaves. However, for implementations
that do this, our
%\alert{
current
%}
proof does not rule out the presence of attacks that
rely on these features.

A more fundamental limitation of the model is that it forbids reentering an
enclave that has been interrupted, via $\vdash_{\mi{mac}}$.
Allowing reentrancy essentially causes the same complications as allowing multi-threaded enclaves, and these
are substantial complications that also lead to new kinds of attacks \cite{asyncshock}.
We leave investigation of these issues to future work.

Third, our model and implementation make
%\alert{
other
%}
simplifications that we
believe to be non-essential and that could be removed with additional work
but without providing important new insights.
For instance, we assumed that enclaves have no read/write access to untrusted
memory. A straightforward alternative is to allow these accesses, but to also
make them observable to the untrusted context in \SH. Essentially, this alternative
forces the enclave developer to be aware of the fact that accessing untrusted
memory is an interaction with the attacker. A better alternative (putting
less security responsibility with the enclave developer) is to rely on a
trusted run-time that can access unprotected memory to
%copy in/copy out
%parameters and results, and then turn off access to unprotected memory
%before calling enclave application code.
copy in/out parameters and results, and then turn off access to unprotected memory
before calling enclaved code.
This is very similar to how
%Supervisor Mode Access Prevention (SMAP) prevents the kernel from the security
%risks of accessing user memory.
%Our model could easily be extended to model
%such a trusted run-time by considering memory that is copied in/out as
%a large CPU register.
Supervisor Mode Access Prevention prevents the kernel from the security risks of accessing user memory.
Our model could easily be extended to model such a trusted run-time by considering memory copied in/out as
a large CPU register.
It is important to emphasize however that the implementation of such 
trusted enclave runtime environments has been shown to be error-prone \cite{tale2worlds}.

Another such non-essential limitation is the fact that we do not support
nested interrupts, or interrupt priority. It is straightforward to extend
our model with the possibility of multiple pending interrupts and a policy
to select which of these pending interrupts to handle. One only has to
take care that the interrupt arrival time used to compute padding is the
arrival time of the interrupt that will be handled first.

In summary, to provide hard mathematical security guarantees, one often abstracts from some details and provable security only provides assurance to the extent that the assumptions made are valid and the simplifications non-essential.
The discussion above shows that this is the case for a relevant class of attacks and systems, and hence that
our countermeasure for these attacks is well-designed.
Since there is no 100\% security, attacks remain possible for more complex
systems (e.g. including caches and speculation), or for more powerful
attackers (e.g. with physical access to the system).

%But there is no
%such thing as 100\% security, and attacks remain possible for more complex
%systems (e.g. including caches and speculation), or for more powerful
%attackers (e.g. with physical access to the system).

%%% Local Variables:
%%% mode: latex
%%% TeX-master: "../paper"
%%% End:

%-------------------------------------------------------------------------------

%-------------------------------------------------------------------------------
\section{Related Work}\label{sec:related}
    
    Our work is motivated by the recent wave of software-based side-channel attacks and controlled-channel attacks that rely on architectural or micro-architectural processor features.
    The area is too large to survey here, but good recent surveys include Ge et al.~\cite{cache-attacks} for timing attacks, Gruss' PhD thesis~\cite{gruss-thesis} for software-based microarchitectural attacks before Spectre/Meltdown, and~\cite{transient} for transient execution based attacks.
    The attacks most relevant to this paper are the pure interrupt-based attacks. Van Bulck et al.~\cite{nemesis} were the first to show how just measuring interrupt latency can be a powerful attack vector against both high-end enclaved execution systems like Intel SGX, and against low-end systems like the Sancus system that we based our work on.
    Independently, He et al.~\cite{sgxlinger} developed a similar attack for Intel SGX.
    % \marginpar{\alert{deleted concurrently}}

    There is an extensive body of work on defenses against software-based side-channel attacks.
    The three surveys mentioned above (\cite{cache-attacks,gruss-thesis,transient}) also survey defenses, including both software-based defenses like the constant-time programming model and hardware-based defenses such as cache-partitioning.
    To the best of our knowledge, our work proposes the first defense specifically designed and proved to protect against pure interrupt-based side-channel attacks.
    De Clerck et al.~\cite{ruan} have proposed a design for secure interruptibility of enclaved execution, but they have not considered side-channels -- their main concern is to make sure that there are no direct leaks of, e.g., register contents on interrupts.
    Most closely related to ours is the work on SecVerilog~\cite{secverilog} that also aims for formal assurances.
    To guarantee timing-sensitive non-interference properties, SecVerilog uses a security-typed hardware description language.
    However, this approach has not yet been applied to the issue of interrupt-based attacks.
    %\alert{
    Similarly, Zagieboylo et al.~\cite{zagieboylo2019using} describe an ISA with information-flow labels and use it to guarantee timing-insensitive information flow at the architectural level.
    %}

    %Most closely related to our work in the sense that they also aim for formal assurances is the work on SecVerilog \cite{secverilog} that uses a security-typed hardware description
    %language to guarantee timing-sensitive non-interference properties.
    %However, SecVerilog has not yet been applied to the issue of interrupt-based attacks.

    An alternative approach to interruptible secure remote computation is pursued by VRASED \cite{vrased}. In contrast to enclaved execution, their design only relies on memory access control for the attestation key, not for the software modules being attested. They prove that a carefully designed hardware/software co-design can securely do remote
    attestation.

    Our security criterion is directly influenced by a long line of work that considers {\em full abstraction} as a criterion for secure compilation. The idea was first coined
    by Abadi~\cite{abadi1999protection}, and has been applied in many settings, including compilation to JavaScript~\cite{fournet2013fully}, various intermediate compiler
    passes~\cite{ahmed2008typed,ahmed2011equivalence}, and compilation to platforms that support enclaved execution~\cite{agten2012secure,patrignani2015fully,patrignani2015secure}.
    But none of these works consider timing-sensitivity or interrupts: they study compilations higher up the software stack than what we consider in this paper.
    Patrignani et al.~\cite{patrignani2019formal} have provided a good survey of this entire line of work on secure compilation.

    %TODO for anyone: please suggest other work we should discuss here:
    %\begin{itemize}
    %  \item Full abstraction for secure compilation, i.e., first idea by~\cite{abadi1999protection},~\cite{fournet2013fully},~\cite{ahmed2008typed,ahmed2011equivalence} in relation to TAL~\cite{morrisett1999from}+differences of PMA wrt TAL,~\cite{agten2012secure,patrignani2015fully,patrignani2015secure} for a full toolchain for Java to PMA assembly, (optional additions: JavaJr~\cite{jeffrey2005javajr} for the idea of fully abstract trace semantics, TPC wrt FA and if do we expect to satisfy that?~\cite{patrignani2017secure}, optional recent alternatives to FA~\cite{abate2018exploring} and why we did not used them.)
    %  For a survey see~\cite{patrignani2019formal}
    %\end{itemize}

    %% increase table row spacing, adjust to taste
    %% \renewcommand{\arraystretch}{1.3}
    %-------------------------------------------------------------------------------
    \section{Conclusions and future work}\label{sec:conclusion}
    %-------------------------------------------------------------------------------

    We have proposed an approach to formally assure that extending a microprocessor with
    a new feature does not weaken the isolation mechanisms that the processor offers.
    We have shown that the approach is applicable to an IoT-scale microprocessor, by showing
    how to design interruptible enclaved execution that is as secure as uninterruptible enclaved
    execution.
    Despite this successful case study, some limitations of the approach remain, and we
    plan to address them in future.
    %plan to investigate these further in future work.

    First, as discussed in Section~\ref{sec:discussion}, our approach currently applies only to
    ``small'' micro-processors for which we can define a cycle-accurate operational semantics.
    While this obviously makes it possible to rigorously reason about timing-based side-channels, it is
    also difficult to scale to larger processors. To handle larger processors, we need models that
    can abstract away many details of the processor implementation, yet keeping enough detail to
    model relevant micro-architectural attacks. A very recent and promising example of such a
    model was proposed by Disselkoen et al.~\cite{CodeThatNeverRan}.
    An interesting avenue
    for future work is to consider such models for our approach instead of the cycle-accurate models.

    Second, the security criterion we proposed is binary: an extension is either secure, or it is
    not. The criterion does not distinguish {\em low bandwidth} side-channels from {\em high-bandwidth}
    side-channels. An important challenge for future work is to introduce some kind of {\em quantification}
    of the weakening of security, so that it becomes feasible to allow the introduction of some bounded
    amount of leakage.

%-------------------------------------------------------------------------------

%-------------------------------------------------------------------------------
\section*{Acknowledgements}
    We would like to thank the anonymous referees and the paper shepherd for their insightful comments and detailed suggestions that helped to greatly improve our presentation.
Matteo Busi and Pierpaolo Degano have been partially supported by the University of Pisa project PRA\_2018\_66
\emph{DECLware: Declarative methodologies for designing and deploying applications}.
This research is partially funded by the Research Fund KU Leuven, by the Agency for Innovation and Entrepreneurship (Flanders), and by a gift from Intel Corporation.
Jo Van Bulck is supported by a grant of the Research Foundation -- Flanders~(FWO).
Letterio Galletta has been partially supported by EU Horizon 2020 project
No 830892 \emph{SPARTA} and by MIUR project PRIN 2017FTXR7S \emph{IT MATTERS} (Methods and Tools for Trustworthy Smart Systems).

%-------------------------------------------------------------------------------

\bibliographystyle{IEEEtran}
\bibliography{paper}

\onecolumn
\appendices
\section{Common definitions for \SH and \SL}
  % !TEX root = ../report.tex
\subsection{Memory and memory layout}
The memory is modeled as a (finite) function mapping $2^{16}$ locations to bytes $b$ (just like in the original Sancus);
Given a memory $\M$, we denote the operation of retrieving the byte associated to the location $l$ as $\M(l)$.

On top of that, and for simplicity, we define read and write operations that work on words (i.e., pair of bytes) and we write $w = b_1 b_0$ to denote that the most significant byte of a word $w$ is $b_1$ and its least significant byte is $b_0$.

The read operation is standard, except that it retrieves two consecutive bytes from a given memory location $l$ (in a little-endian fashion, as in the MSP430):
\begin{align*}
    \M [l] &\triangleq b_1 b_0 \quad \text{if }  \M(l) = b_0 \land \M(l+1) = b_1
\end{align*}

The write operation is more complex because it deals with unaligned memory accesses.
We faithfully model detailed aspects of Sancus, like unaligned accesses, because we want to prove that these detailed aspects do not lead to potential attacks.
\begin{align*}
    (\M [l \mapsto b_1 b_0]) (l') &\triangleq
                                    \begin{cases}
                                        b_0 & \text{if } l' = l \\
                                        b_1 & \text{if } l' = l+1 \\
                                        \M (l') & \text{o.w.}\\
                                    \end{cases}
\end{align*}
Indeed writing $b_0 b_1$ in location $l$ in $\M$ means to build an updated memory that maps $l$ to $b_0$, $l+1$ to $b_1$ and is unchanged otherwise.

Note that reads and writes to $l = \mtt{0xFFFF}$ are undefined operations ($l+1$ would overflow hence it is undefined).
The memory access control relation explicitly forbids these accesses (see below).

Since modeling the memory as a function gives no clues on how the enclave is organized, we assume a fixed \emph{memory layout} $\L$ throughout the whole formalization that describes how the enclave is laid out in memory.
The protected code and the protected data are placed in consecutive, non-overlapping memory sections.
The memory layout $\L$ is used to regulate how the protected data are accessed: actually, it permits only protected code to manipulate protected data, and to jump to a protected address and to execute the instruction stored therein.
The first address of the protected code section also works as the entry point of the software module.
Note that memory operations enforce no memory access control w.r.t.\ $\L$, since these checks are performed during the execution of each instruction (see below).
In addition, the memory layout defines the entry point $\mi{isr}$ of the interrupt service routine, out of the protected sections.
Also, we assume the location $\mtt{0xFFFE}$ to be reserved to store the address of the first instruction to be executed when the CPU starts.
Formally, a memory layout is defined as
\[
    \L \triangleq \langle
        \mi{ts}, \mi{te},
        \mi{ds}, \mi{de},
        \mi{isr}
        \rangle
\]
where:
\begin{itemize}
    \item $[\mi{ts}, \mi{te})$ is the protected code section
    \item $[\mi{ds}, \mi{de})$ is the protected data section
    \item $\mi{isr}$ is the entry point for the ISR
\end{itemize}

\noindent
Also, we assume that:
\begin{itemize}
    \item $\mtt{0xFFFE} \not\in [\mi{ts}, \mi{te}) \cup [\mi{ds}, \mi{de})$
    \item $[\mi{ts}, \mi{te}) \cap [\mi{ds}, \mi{de}) = \emptyset$
    \item $\mi{isr} \not\in [ts, te) \cup [ds, de)$
\end{itemize}

\subsection{Register files}
\SH, just like the original Sancus, has sixteen $16$-bit registers three of which $\reg{0}$, $\reg{1}$, $\reg{2}$ are used for dedicated functions, whereas the others are for general use.
($\reg{3}$ is a constant generator in the real machine, but we ignore that use in our formalization.)
More precisely, $\reg{0}$ (hereafter denoted as $\rpc$) is the program counter and points to the next instruction to be executed.
Instruction accesses are performed by word and the $\rpc$ is aligned to even addresses.
The register $\reg{1}$ ($\rsp$ hereafter) is the stack pointer and it is used, as usual, by
the CPU to store the pointer to the activation record of the current procedure.
Also the stack pointer is aligned to even addresses.
The register $\reg{2}$ ($\rsr$ hereafter) is the status register and contains different
pieces of information encoded as flags.
For example, the fourth bit, called $\mtt{GIE}$, is set to 1 when interrupts are enabled.
Other bits are set, e.g., when an operation produces a carry or when the result
of an operation is zero.

Formally, our \emph{register file} $\R$ is a function that maps each register $\rn{r}$ to a word.
The read operation is standard:
\begin{align*}
    \R [\rn{r}] \triangleq w \text{ if } \R(\rn{r}) = w
\end{align*}

Instead, the write operation requires accommodating the hardware itself and our security requirements (see Section III in the paper for motivation and intuition):
\begin{align*}
    \R [\rn{r} \mapsto w] \triangleq \lambda [\rn{r'}].
                                    \begin{cases}
                                    w \& \mtt{0xFFFE} & \text{if } \rn{r'} = \rn{r} \land (\rn{r} = \rpc \lor \rn{r} = \rsp)\\
                                    (w \& \mtt{0xFFF7}) \mid (\R[\rsr] \& \mtt{0x8}) & \text{if } \rn{r'} = \rn{r} = \rsr \land \moderelPM{\R[\rpc]}\\
                                    w & \text{if } \rn{r'} = \rn{r} \land (\rn{r} \neq \rpc \land \rn{r} \neq \rsp)\\
                                    %\  \land  \\ \qquad  \;\;\;\; (\rn{r} \neq \rsr \lor \moderelUM{\R[\rpc]})\\
                                    \R [\rn{r'}] \quad \text{o.w.}
                                    \end{cases}
\end{align*}
In the definition above we use the relation $\moderel{\R[\rpc]}$, for $m \in \{\mtt{PM}, \mtt{UM}\}$ that is defined in subsection~\ref{subsec:pm-um}.
It indicates that the execution is carried on in protected or in unprotected mode.
Note that the least-significant bit of the program counter and of the stack pointer are \emph{always} masked to $0$ (as it happens in MSP430), and that the $\mtt{GIE}$ bit of the status register is always masked to its previous value when in protected mode (i.e., it cannot be changed when the CPU is running protected code).

\subsubsection{Special register files}

We define the following special register files:
\begin{align*}
    \R_0 &\triangleq \{ \rpc \mapsto 0, \rsp \mapsto 0, \rsr \mapsto 0, \reg{3} \mapsto 0, \ldots, \reg{15} \mapsto 0 \}\\
    \R^{\mi{init}}_{\M} &\triangleq \{ \rpc \mapsto \M[\mtt{0xFFFE}], \rsp \mapsto 0, \rsr \mapsto \mtt{0x8}, \reg{3} \mapsto 0, \ldots, \reg{15} \mapsto 0 \}
\end{align*}

where
\begin{itemize}
    \item $\rpc$ is set to $\M[\mtt{0xFFFE}]$ as it does in the MSP430
    \item $\rsp$ is set to $0$ and we expect untrusted code to set it up in a setup phase, if any
    \item $\rsr$ is set to $\mtt{0x8}$, i.e., register is clear except for the $\mtt{GIE}$ flag
\end{itemize}

\subsection{I/O Devices}
We formalize Sancus \emph{I/O devices} as (simplified) \emph{deterministic I/O automata} $\D \triangleq \langle \Delta, \deltainit, \xleadsto{a}_D \rangle$ over a common signature $A$:
\begin{itemize}
    \item $A$ includes the following actions (below $w$ is a word):
        \begin{itemize}
            \item $\epsilon$, a silent, internal action;
            \item $\mi{rd}(w)$, an output action (i.e., read request from the CPU);
            \item $\mi{wr}(w)$, an input action (i.e., write request from the CPU);
            \item $\mi{int?}$ an output action telling that an interrupt was raised in the last state.
        \end{itemize}
    \item $\emptyset \neq \Delta$ is the \emph{finite} set of internal states of the device
    \item $\deltainit \in \Delta$ is the \emph{single} initial state
    \item $\delta \xleadsto{a}_D \delta' \subseteq \Delta \times A \times \Delta$ is the transition function that takes one step in the device while doing action $a \in A$, starting in state $\delta$ and ending in state $\delta'$.
    (We write $\overline{a}$ for a string of actions and we omit $\epsilon$ when unnecessary.)
    The transition function is such that $\forall \delta$ either $\delta \xleadsto{\epsilon}_D \delta'$ or
    $\delta \xleadsto{\mi{int?}}_D \delta''$ (i.e., one and only one of the two transitions must be possible), also at most one $\mi{rd}(w)$ action must be possible starting from a given state.
\end{itemize}
Note: to keep the presentation simple we assume to have a special state which is the destination of any action not explicitly defined.

\subsection{Contexts, software modules and whole programs}
\begin{definition}
    We call \emph{software module} a memory $\M_M$ containing both protected data and code sections.
\end{definition}

Intuitively, the context is the part of the whole program that can be manipulated by an attacker:
\begin{definition}
    A \emph{context} $C$ is a pair $\langle \M_C, \D \rangle$, where $\D$ is a device and $\M_C$ defines the contents of all memory locations \emph{outside} the protected sections of the layout.
\end{definition}

\begin{definition}
    Given a context $C = \langle \M_C, \D \rangle$ and a software module $\M_M$ such that $\dom{\M_C} \,\cap\,\dom{\M_M} = \emptyset$, a whole program is
    \[
        C[\M_M] \triangleq \langle \M_C \uplus \M_M, \D \rangle.
    \]
\end{definition}

\subsection{Instruction set}
The instruction set $ \mi{Inst} \ni i $ is the same for both \SL and \SH and is (almost)  that of the MSP430.
An overview of the instruction set is in Table~\ref{tab:op-summary-rep}.
For each instruction the table includes its operands, an  intuitive meaning of
its semantics, its duration and the number of words it occupies in memory.
The durations are used to define the function $\cycles{i}$ and implicitly determine a value $\MT$, greater than or equal to the duration of longest instruction.
Here we choose $\MT = 6$, in order to maintain the compatibility with the real MSP430 (whose longest instruction takes $6$ cycles).
Since instructions are stored in either the unprotected or in the protected code section of the memory $\M$, for
getting them we use the meta-function $\decode{\M}{l}$ that decodes the contents of the cell(s) starting at location $l$, returning an instruction in the table if any and  $\bot$ otherwise.
\begin{table}[tb]
    \centering
    \begin{tabular}{@{}llll@{}}
    \toprule
    \textbf{Instr. $i$} &
    \textbf{Meaning} % (loc abbreviates ``location'')
    & \textbf{Cycles}  & \textbf{Size} \\
    \midrule
    $\RETI$ & Returns from interrupt. & $5$ & $1$\\
    $\NOP$  & No-operation.           & $1$                 & $1$\\
    $\HLT$  & Halt.                   & $1$                 & $1$\\
    $\NOT{r}$  & $\rn{r} \leftarrow \lnot \rn{r}$. (Emulated in MSP430) & $2$ & $2$\\
    $\IN{r}$  & Reads word from the device and puts it in $\rn{r}$. & $2$ & $1$\\
    $\OUT{r}$  & Writes word in register $\rn{r}$ to the device. & $2$ & $1$\\
    $\AND{r_1}{r_2}$  & $\rn{r_2} \leftarrow \rn{r_1}\ \&\ \rn{r_2}$. & $1$ & $1$\\
    $\JMP{r}$ & Sets $\rpc$ to the value in $\rn{r}$. & $2$ & $1$\\
    $\JZ{r}$ & Sets $\rpc$ to the value in $\rn{r}$ if bit 0 in $\rsr$ is set. & $2$ & $1$\\
    $\MOV{r_1}{r_2}$  & $\rn{r_2} \leftarrow \rn{r_1}$. & $1$ & $1$\\
    $\MOVL{r_1}{r_2}$  & Loads in $\rn{r_2}$ the word in starting in location pointed by $\rn{r_1}$. & $2$ & $1$\\
    $\MOVS{r_1}{r_2}$  & Stores the value of $\rn{r_1}$ starting at location pointed by $\rn{r_2}$. & $4$ & $2$\\
    $\MOVI{w}{r_2}$  & $\rn{r_2} \leftarrow w$. & $2$ & $2$\\
    $\ADD{r_1}{r_2}$  & $\rn{r_2} \leftarrow \rn{r_1} + \rn{r_2}$. & $1$ & $1$\\
    $\SUB{r_1}{r_2}$  & $\rn{r_2} \leftarrow \rn{r_1} - \rn{r_2}$. & $1$ & $1$\\
    $\CMP{r_1}{r_2}$  & Zero bit in $\rsr$ set if $\rn{r_2} - \rn{r_1}$ is zero. & $1$ & $1$\\
    % Any & Except for jumps and $\MOVL{r_1}{\rpc}$ if destination register is $\rpc$ & $1$ additional cycle & \\
    \bottomrule
    \end{tabular}

\caption{Summary of the assembly language considered.}\label{tab:op-summary-rep}
\end{table}

\subsection{Configurations}
Given an I/O device $\D$, the internal state of the CPU is described by configurations
%$c \in \mb{C}$
of the form:
\[
    c \triangleq \cfg{\B}{\delta}{t}{t_a}{\M}{\R}{\vopc} \in \mb{C}, \quad \text{where}
\]
\begin{itemize}
    \item $\delta$ is the current state of the I/O device;
    \item $t$ is the current time of the CPU;
    \item $t_a$ is either the arrival time of the last pending interrupt, or $\bot$ if there are none;
    \item $\M$ is the current memory;
    \item $\R$ is the current content of the registers;
    \item $\vopc$
    % is the previous program counter;
    is the value of the program counter before executing the current instruction
    \item $\B$ is called the \emph{backup} and can assume different values:
        \begin{itemize}
            \item $\BBot$, indicating either that the CPU is not handling an interrupt or it is handling one originated in unprotected mode interrupt
            \item $\langle \R, \vopc, t_{\mi{pad}}\rangle$, refers to the case in which an interrupt handler whose interrupt originated in protected mode is being executed.
            The triple includes the register file and the old program counter at the time the interrupt originated and the value $t_{\mi{pad}}$, which indicates the remaining padding time that must be applied before returning into protected mode.
        \end{itemize}
\end{itemize}
% \marginpar{\alert{$\vopc$ definition as suggested by Reviewer A.}}
%

The initial states of the CPU are represented by the initial configurations from which the computation starts.
%For that, let $C$ be a context, and $\M_M$ be a software module that define a whole program $C[\M_M] = \langle \M, \D \rangle$, then the corresponding initial configuration is:
The initial configuration for a whole program $C[\M_M] = \langle \M, \D \rangle$ is:
\[
    \initconf{C}{\M_M} \triangleq \cfg{\BBot}{\deltainit}{0}{\bot}{\M}{\R^{\mi{init}}_{\M_C}}{\mtt{0xFFFE}} \text{ where}
\]
\begin{itemize}
    \item the state of the I/O device $\D$ is $\deltainit$;
    \item the initial value of the clock is $0$ and no interrupt has arrived yet;
    \item the memory is initialized to the whole program memory $\M_C \uplus \M_M$;
    \item registers are initialized to their initial values, i.e., all the registers are set to $0$ except that $\rpc$ is set to $\mtt{0xFFFE}$ (the address from which the CPU gets the initial program counter), i.e.,  $\rpc = \M[\mtt{0xFFFE}]$ (as in Sancus), and that $\rsr$ is set to $\mtt{0x8}$ (the register is clear except for the $\mtt{GIE}$ flag);
    \item the previous program counter is also initialized to $\mtt{0xFFFE}$;
    \item the backup is set to $\BBot$ to indicate absence of any backup.
\end{itemize}

Dually, $\haltconf$ is the only configuration denoting termination, more specifically it is an opaque and distinguished configuration that indicates graceful termination.

Also, we define \emph{exception handling} configurations, through which the processor goes whenever a halt happens in protected mode or a violation happens in any mode.

Intuitively these configurations serve as a starting point for the exception handling routine provided by the attacker, whose entry point address resides at address $\mtt{0xFFFE}$:
\[
    \exconf{\cfg{\B}{\delta}{t}{t_a}{\M}{\R}{\vopc}} \triangleq \cfg{\BBot}{\delta}{t}{\bot}{\M}{\R_0 [\rpc \mapsto \M[\mtt{0xFFFE}]]}{\mtt{0xFFFE}}.
\]

\subsubsection{I/O device wrapper}
The main transition system relies on an auxiliary transition system that synchronizes the evolution of the I/O device with that of the CPU.
For that, we define a ``wrapper'' around the device $\D$:
\[
    \dwrap{k}{\delta, t, t_a}{\delta', t', t'_a}
\]
Intuitively, assume that the device is in state $\delta$, the clock time is $t$ and the last interrupt was raised at time $t_a$.
Then, after $k$ cycles the new clock time will be $t' = t+k$, the last interrupt was raised at time $t'_a$ and the new state will be $\delta'$.
Note that when no interrupt has to be handled, $t_a$ and $t'_a$ have the value $\bot$.

Formally:
\begin{mathpar}
    \inferrule
    {
        a \in \{ \epsilon, \mi{int?} \}\\
        \bigwedge\limits_{i=0}^{k-1} \delta_i \xleadsto{a}_D \delta_{i+1}\\
        t'_a = {\begin{cases}
            t + j & \text{if }
                \exists 0 \leq j < k .\, \delta_j \xleadsto{\mi{int?}}_D \delta_{j+1} \land
                \\ &
                \forall j' < j.\, \delta_{j'} \leadsto_D \delta_{j'+1} \\
                % vs. \forall j < j' < k.\, \delta_{j'} \leadsto_D \delta_{j'+1} \\
            t_a & \text{o.w.}
        \end{cases}}
    }
    {
        \dwrap{k}{\delta_0, t, t_a}{\delta_k, (t+k), t'_a}
    }
\end{mathpar}

\begin{property}\label{prop:dwrap-deterministic}
    If $\dwrap{k}{\delta, t, t_a}{\delta', t', t'_a}$ and $\dwrap{k}{\delta, t, t_a}{\delta'', t'', t''_a}$, then $\delta' = \delta''$, $t' = t''$ and $t'_a = t''_a$.
\end{property}
\begin{proof}
    Trivial.
\end{proof}

\subsection{CPU mode}\label{subsec:pm-um}

There are two further relations used by the main transition systems, specifying the \emph{CPU mode} and the \emph{memory access control}, MAC.

The first tells when the given address, is an address in the protected code memory ($\mtt{PM}$) or in the unprotected one ($\mtt{UM}$):
\[
    \vpc \text{, with } \mtt{m} \in \{ \mtt{PM}, \mtt{UM} \}
\]
Formally:
\begin{mathpar}
    \inferrule
    {
        \vpc \in [\L.\mi{ts}, \L.\mi{te})
    }
    {
        \moderelPM{\vpc}
    }

    \inferrule
    {
        \vpc \not\in [\L.\mi{ts}, \L.\mi{te}) \cup [\L.\mi{ds}, \L.\mi{de})
    }
    {
        \moderelUM{\vpc}
    }
\end{mathpar}

Also, we lift the definition to configurations as follows:
\begin{mathpar}
    \inferrule
    {
        \moderel{\R[\rpc]}
    }
    {
        \moderel{\cfg{\B}{\delta}{t}{t_a}{\M}{\R}{\vopc}}
    }

    \inferrule
    { }
    {
        \moderelUM{\haltconf}
    }
\end{mathpar}

\subsection{Memory access control}\label{subsec:macrel}

The second relation holds whenever the instruction $i$ can be executed in a CPU configuration in which the previous program counter is $\vopc$, the registers are $\R$ and the backup is $\B$, and takes the following form:
\[
    \macrel{i}{\vopc}{\R}{\B}
\]
More precisely, it uses the predicate $\mac{f}{rght}{t}$ (defined in \tablename~\ref{tab:mac-rep}) that holds whenever from the location $f$ we have the rights $\mtt{rght}$ on location $t$.
The predicate checks that
$(1)$ the code we came from (i.e., that in location $\vopc$) can actually execute instructions located at $\R[\rpc]$;
$(2)$ $i$ can be executed in current CPU mode, and if $i$ is a memory operation;
$(3)$ from $\R[\rpc]$ we have the rights to perform the requested operation in memory.
\begin{table}[tb]
    \begin{center}
        \begin{tabular}{@{}llcccc@{}}
            & & \multicolumn{4}{c}{$t$}\\
            \cmidrule(l){3-6}
            & \multicolumn{1}{l}{} & Entry Point & Prot. code & Prot. Data & Other \\
            \midrule
            \multicolumn{1}{l|}{\multirow{2}{*}{$f$}} & \multicolumn{1}{l|}{Entry Point/Prot. code} & r-x & r-x & rw- & --x \\
            \multicolumn{1}{l|}{} & \multicolumn{1}{l|}{Other} & --x & --- & --- & rwx \\
            \bottomrule
        \end{tabular}
    \end{center}

\caption{Definition of $\mac{f}{rght}{t}$ function, where $f$ and $t$ are locations.}\label{tab:mac-rep}
\end{table}
Formally, the definition of the relation is the following:
\newcommand{\neqll}[1]{{#1} \neq 2^{16}-1}
\begin{mathpar}
    \inferrule*
    {
       \neqll{\R[\rsp]}\\
       \neqll{\R[\rsp]+2}\\
       \mac{\vopc}{x}{\R[\rpc]}\\
       \mac{\vopc}{x}{\R[\rpc]+1}\\
       \mac{\R[\rpc]}{r}{\R[\rsp]}\\
       \mac{\R[\rpc]}{r}{\R[\rsp]+1}\\
       \mac{\R[\rpc]}{r}{\R[\rsp]+2}\\
       \mac{\R[\rpc]}{r}{\R[\rsp]+3}
    }
    {
        \macrel{\RETI}{\vopc}{\R}{\bot}
    }

    \inferrule*
    {
       i \in \{ \NOP, \AND{r_1}{r_2}, \ADD{r_1}{r_2}, \SUB{r_1}{r_2}, \CMP{r_1}{r_2}, \MOV{r_1}{r_2}, \JMP{r}, \JZ{r} \}\\
       \mac{\vopc}{x}{\R[\rpc]}\\
       \mac{\vopc}{x}{\R[\rpc]+1}
    }
    {
        \macrel{i}{\vopc}{\R}{\bot}
    }

    \inferrule*
    {
       i \in \{ \NOT{r}, \MOVI{w}{r} \}\\
       \mac{\vopc}{x}{\R[\rpc]}\\
       \mac{\vopc}{x}{\R[\rpc]+1}\\
       \mac{\vopc}{x}{\R[\rpc]+2}\\
       \mac{\vopc}{x}{\R[\rpc]+3}\\
    }
    {
        \macrel{i}{\vopc}{\R}{\bot}
    }

    \inferrule*
    {
       i \in \{ \IN{r}, \OUT{r} \}\\
       \moderelUM{\R[\rpc]}\\
       \mac{\vopc}{x}{\R[\rpc]}\\
       \mac{\vopc}{x}{\R[\rpc]+1}
    }
    {
        \macrel{i}{\vopc}{\R}{\bot}
    }

    \inferrule*
    {
        \neqll{\R[\rn{r_1}]}\\
        \neqll{\R[\rn{r_1}]+1}\\
        \mac{\R[\rpc]}{r}{\R[\rn{r_1}]}\\
        \mac{\R[\rpc]}{r}{\R[\rn{r_1}]+1}\\
        \mac{\vopc}{x}{\R[\rpc]}\\
        \mac{\vopc}{x}{\R[\rpc]+1}
    }
    {
        \macrel{\MOVL{r_1}{r_2}}{\vopc}{\R}{\bot}
    }

    \inferrule*
    {
       \neqll{\R[\rn{r_2}]}\\
       \neqll{\R[\rn{r_2}]+1}\\
       \mac{\R[\rpc]}{w}{\R[\rn{r_2}]}\\
       \mac{\R[\rpc]}{w}{\R[\rn{r_2}]+1}\\
       \mac{\vopc}{x}{\R[\rpc]}\\
       \mac{\vopc}{x}{\R[\rpc]+1}\\
       \mac{\vopc}{x}{\R[\rpc]+2}\\
       \mac{\vopc}{x}{\R[\rpc]+3}
    }
    {
        \macrel{\MOVS{r_1}{r_2}}{\vopc}{\R}{\bot}
    }

    %%% KEEP IN MIND: this rule is invoked _after_ a jump in any case, and the jump must not be to ts when \B \neq \bot and we cannot set the GIE to 1
    \inferrule*
    {
       i \neq \RETI\\
       \B \neq \bot\\
       \macrel{i}{\vopc}{\R}{\bot}\\
       \R[\rsr].\mtt{GIE} = \mtt{0}\\
       \R[\rpc] \neq \mi{ts}
    }
    {
        \macrel{i}{\vopc}{\R}{\B}
    }

    \inferrule*
    {
        \B \neq \BBot
    }
    {
        \macrel{\RETI}{\vopc}{\R}{\B}
    }
\end{mathpar}
Note that
$(i)$ for each word that is accessed in memory we also check that the first location is not the last byte of the memory (except for the program counter, for which the decode function would fail since it would try to access undefined memory);
$(ii)$ word accesses must be checked once for each byte of the word;
and $(iii)$ checks on $\rpc$ guarantee that a memory violation does not happen while decoding.

\FloatBarrier

\section{The main transition system and interrupt logic for \SH and \SL}
  The main transition systems for our versions of Sancus share a large part of inference rules, and heavily differ on the way interrupts are handled, as in \SH there are none.
  Hereafter we assume as given a context $C = \langle \M_C, \D \rangle$.

  \subsection{\SH}
    % !TEX root = ../report.tex
We now present the operational semantics of \SH that relies a very simple auxiliary transition system for interrupts.

\subsubsection{Main transition system}

We represent how the \SH configuration $c$ becomes with a computation step $c'$ by the main transition system, with transition of the following form:
\[
    \shmain{c}{c'}
\]
%where $\D$ is an I/O device and $c, c' \in \mb{C}$.

Figures~\ref{fig:sh-maints},~\ref{fig:sh-maints2} and \ref{fig:sh-maints3} report the full set of rules that define the main transition system of \SH.
%
% !TEX root = ../report.tex

\ifdefined\LANG
\renewcommand{\LANG}{sh}
\else
\newcommand{\LANG}{sh}
\fi

% !TEX root = ../report.tex

\IfStrEq{\LANG}{sh}{
    \ifdefined\sint
    \renewcommand{\sint}[3]{\shint{#1}{#2}{#3}}
    \else
    \newcommand{\sint}[3]{\shint{#1}{#2}{#3}}
    \fi

    \ifdefined\smain
    \renewcommand{\smain}[2]{\shmain{#1}{#2}}
    \else
    \newcommand{\smain}[2]{\shmain{#1}{#2}}
    \fi

    \ifdefined\srulename
    \renewcommand{\srulename}[1]{\shrulename{#1}}
    \else
    \newcommand{\srulename}[1]{\shrulename{#1}}
    \fi

    \ifdefined\slang
    \renewcommand{\slang}[0]{\SH}
    \else
    \newcommand{\slang}[0]{\SH}
    \fi

    \ifdefined\slblone
    \renewcommand{\slblone}[0]{\label{fig:sh-maints}}
    \else
    \newcommand{\slblone}[0]{\label{fig:sh-maints}}
    \fi

    \ifdefined\slbltwo
    \renewcommand{\slbltwo}[0]{\label{fig:sh-maints2}}
    \else
    \newcommand{\slbltwo}[0]{\label{fig:sh-maints2}}
    \fi

    \ifdefined\slblthree
    \renewcommand{\slblthree}[0]{\label{fig:sh-maints3}}
    \else
    \newcommand{\slblthree}[0]{\label{fig:sh-maints3}}
    \fi

    \ifdefined\smainarrow
    \renewcommand{\smainarrow}[0]{\shmainarrow}
    \else
    \newcommand{\smainarrow}[0]{\shmainarrow}
    \fi
}{
    \ifdefined\sint
    \renewcommand{\sint}[3]{\slint{#1}{#2}{#3}}
    \else
    \newcommand{\sint}[3]{\slint{#1}{#2}{#3}}
    \fi

    \ifdefined\smain
    \renewcommand{\smain}[2]{\slmain{#1}{#2}}
    \else
    \newcommand{\smain}[2]{\slmain{#1}{#2}}
    \fi

    \ifdefined\srulename
    \renewcommand{\srulename}[1]{\slrulename{#1}}
    \else
    \newcommand{\srulename}[1]{\slrulename{#1}}
    \fi

    \ifdefined\slang
    \renewcommand{\slang}[0]{\SL}
    \else
    \newcommand{\slang}[0]{\SL}
    \fi

    \ifdefined\slblone
    \renewcommand{\slblone}[0]{\label{fig:sl-maints}}
    \else
    \newcommand{\slblone}[0]{\label{fig:sl-maints}}
    \fi

    \ifdefined\slbltwo
    \renewcommand{\slbltwo}[0]{\label{fig:sl-maints2}}
    \else
    \newcommand{\slbltwo}[0]{\label{fig:sl-maints2}}
    \fi

    \ifdefined\slblthree
    \renewcommand{\slblthree}[0]{\label{fig:sl-maints3}}
    \else
    \newcommand{\slblthree}[0]{\label{fig:sl-maints3}}
    \fi

    \ifdefined\smainarrow
    \renewcommand{\smainarrow}[0]{\slmainarrow}
    \else
    \newcommand{\smainarrow}[0]{\slmainarrow}
    \fi
}

\ifdefined\bnotspecial
\renewcommand{\bnotspecial}[0]{\B \neq \langle \bot, \bot, t_\mi{pad} \rangle}
\else
\newcommand{\bnotspecial}[0]{\B \neq \langle \bot, \bot, t_\mi{pad} \rangle}
\fi

\begin{figure}[bt]
    \begin{scriptsize}
    \begin{mathpar}
        \inferrule*[
            lab={\srulename{(CPU-Decode-Fail)}}
        ]
        {
            \bnotspecial\\
            \decode{\M}{\R[\rpc]} = \bot
        }
        {
            \smain{\cfg{\B}{\delta}{t}{t_a}{\M}{\R}{\vopc}}{\exconf{\cfg{\B}{\delta}{t}{t_a}{\M}{\R}{\vopc}}}
        }

        \inferrule*[
            lab={\srulename{(CPU-HLT-UM)}},
            right={$\decode{\M}{\R[\rpc])} = \HLT$}]
        {
            \bnotspecial\\
            \moderelUM{\cfg{\B}{\delta}{t}{t_a}{\M}{\R}{\vopc}} \\
        }
        {
            \smain{\cfg{\B}{\delta}{t}{t_a}{\M}{\R}{\vopc}}{\haltconf}
        }

        \inferrule*[
            lab={\srulename{(CPU-HLT-PM)}},
            right={$i = \decode{\M}{\R[\rpc])} = \HLT$}]
        {
            \bnotspecial\\
            \moderelPM{\cfg{\B}{\delta}{t}{t_a}{\M}{\R}{\vopc}} \\
        }
        {
            \smain{\cfg{\B}{\delta}{t}{t_a}{\M}{\R}{\vopc}}{\exconf{\cfg{\B}{\delta}{t + \cycles{i}}{t_a}{\M}{\R}{\vopc}}}
        }

        \inferrule*[
            lab={\srulename{(CPU-Violation-PM)}},
            right={$i = \decode{\M}{\R[\rpc])} \neq \bot$}
        ]
        {
            \bnotspecial\\
            \nmacrel{i}{\vopc}{\R}{\B}
        }
        {
            \smain{\cfg{\B}{\delta}{t}{t_a}{\M}{\R}{\vopc}}{\exconf{\cfg{\B}{\delta}{t + \cycles{i}}{t_a}{\M}{\R}{\vopc}}}
        }

        \inferrule*[
        lab={\srulename{(CPU-MovL)}},
        right={$i = \decode{\M}{\R[\rpc])} = \MOVL{r_1}{r_2}$}]
        {
            \bnotspecial\\
            \macrel{i}{\vopc}{\R}{\B}\\
            \R' = \R[\rpc \mapsto \R[\rpc] + 2][\rn{r_2} \mapsto \M[\R[\rn{r_1}]]]\\
            \dwrap{\cycles{i}}{\delta, t, t_a}{\delta', t', t'_a}\\
            \sint{\D}{\cfg{\B}{\delta'}{t'}{t'_a}{\M}{\R'}{\R[\rpc]}}{\cfg{\B'}{\delta''}{t''}{t''_a}{\M'}{\R''}{\R[\rpc]}}
        }
        {
            \smain{\cfg{\B}{\delta}{t}{t_a}{\M}{\R}{\vopc}}{\cfg{\B'}{\delta''}{t''}{t''_a}{\M'}{\R''}{\R[\rpc]}}
        }

        \inferrule*[
        lab={\srulename{(CPU-MovS)}},
        right={$i = \decode{\M}{\R[\rpc])} = \MOVS{r_1}{r_2}$}]
        {
            \bnotspecial\\
            \macrel{i}{\vopc}{\R}{\B}\\
            \R' = \R[\rpc \mapsto \R[\rpc] + 4]\\
            \M' = \M[\R[\rn{r_2}] \mapsto \R[\rn{r_1}]] \\
            \dwrap{\cycles{i}}{\delta, t, t_a}{\delta', t', t'_a}\\
            \sint{\D}{\cfg{\B}{\delta'}{t'}{t'_a}{\M'}{\R'}{\R[\rpc]}}{\cfg{\B'}{\delta''}{t''}{t''_a}{\M''}{\R''}{\R[\rpc]}}
        }
        {
            \smain{\cfg{\B}{\delta}{t}{t_a}{\M}{\R}{\vopc}}{\cfg{\B'}{\delta''}{t''}{t''_a}{\M''}{\R''}{\R[\rpc]}}
        }

        \inferrule*[
        lab={\srulename{(CPU-Mov)}},
        right={$i = \decode{\M}{\R[\rpc])} = \MOV{r_1}{r_2}$}]
        {
            \bnotspecial\\
            \macrel{i}{\vopc}{\R}{\B}\\
            \R' = \R[\rpc \mapsto \R[\rpc] + 2][\rn{r_2} \mapsto \R[\rn{r_1}]]\\
            \dwrap{\cycles{i}}{\delta, t, t_a}{\delta', t', t'_a}\\
            \sint{\D}{\cfg{\B}{\delta'}{t'}{t'_a}{\M}{\R'}{\R[\rpc]}}{\cfg{\B'}{\delta''}{t''}{t''_a}{\M'}{\R''}{\R[\rpc]}}
        }
        {
            \smain{\cfg{\B}{\delta}{t}{t_a}{\M}{\R}{\vopc}}{\cfg{\B'}{\delta''}{t''}{t''_a}{\M'}{\R''}{\R[\rpc]}}
        }

        \inferrule*[
        lab={\srulename{(CPU-MovI)}},
        right={$i = \decode{\M}{\R[\rpc])} = \MOVI{w}{r}$}]
        {
            \bnotspecial\\
            \macrel{i}{\vopc}{\R}{\B}\\
            \R' = \R[\rpc \mapsto \R[\rpc] + 4][\rn{r} \mapsto w]\\
            \dwrap{\cycles{i}}{\delta, t, t_a}{\delta', t', t'_a}\\
            \sint{\D}{\cfg{\B}{\delta'}{t'}{t'_a}{\M}{\R'}{\R[\rpc]}}{\cfg{\B'}{\delta''}{t''}{t''_a}{\M'}{\R''}{\R[\rpc]}}
        }
        {
            \smain{\cfg{\B}{\delta}{t}{t_a}{\M}{\R}{\vopc}}{\cfg{\B'}{\delta''}{t''}{t''_a}{\M'}{\R''}{\R[\rpc]}}
        }

        \inferrule*[
        lab={\srulename{(CPU-Nop)}},
        right={$i = \decode{\M}{\R[\rpc]} = \NOP$}]
        {
            \bnotspecial\\
            \macrel{i}{\vopc}{\R}{\B}\\
            \R' = \R[\rpc \mapsto \R[\rpc] + 2]\\
            \dwrap{\cycles{i}}{\delta, t, t_a}{\delta', t', t'_a}\\
            \sint{\D}{\cfg{\B}{\delta'}{t'}{t'_a}{\M}{\R'}{\R[\rpc]}}{\cfg{\B'}{\delta''}{t''}{t''_a}{\M'}{\R''}{\R[\rpc]}}
        }
        {
            \smain{\cfg{\B}{\delta}{t}{t_a}{\M}{\R}{\vopc}}{\cfg{\B'}{\delta''}{t''}{t''_a}{\M'}{\R''}{\R[\rpc]}}
        }
    \end{mathpar}
    \end{scriptsize}

    \caption{Rules of the main transition system for \slang. (part I)}
    \slblone
\end{figure}
~
\begin{figure}[bt]
    \begin{scriptsize}
    \begin{mathpar}
        \inferrule*[
        lab={\srulename{(CPU-Jz0)}},
        right={$i = \decode{\M}{\R[\rpc]} = \JZ{r} \land \R[\rsr].Z = 0$}]
        {
            \bnotspecial\\
            \macrel{i}{\vopc}{\R}{\B}\\
            \R' = \R[\rpc \mapsto \R[\rpc] + 2]\\
            \dwrap{\cycles{i}}{\delta, t, t_a}{\delta', t', t'_a}\\
            \sint{\D}{\cfg{\B}{\delta'}{t'}{t'_a}{\M}{\R'}{\R[\rpc]}}{\cfg{\B'}{\delta''}{t''}{t''_a}{\M'}{\R''}{\R[\rpc]}}
        }
        {
            \smain{\cfg{\B}{\delta}{t}{t_a}{\M}{\R}{\vopc}}{\cfg{\B'}{\delta''}{t''}{t''_a}{\M'}{\R''}{\R[\rpc]}}
        }

        \inferrule*[
        lab={\srulename{(CPU-Jz1)}},
        right={$i = \decode{\M}{\R[\rpc]} = \JZ{r} \land \R[\rsr].Z = 1$}]
        {
            \bnotspecial\\
            \macrel{i}{\vopc}{\R}{\B}\\
            \R' = \R[\rpc \mapsto \R[\rn{r}]]\\
            \dwrap{\cycles{i}}{\delta, t, t_a}{\delta', t', t'_a}\\
            \sint{\D}{\cfg{\B}{\delta'}{t'}{t'_a}{\M}{\R'}{\R[\rpc]}}{\cfg{\B'}{\delta''}{t''}{t''_a}{\M'}{\R''}{\R[\rpc]}}
        }
        {
            \smain{\cfg{\B}{\delta}{t}{t_a}{\M}{\R}{\vopc}}{\cfg{\B'}{\delta''}{t''}{t''_a}{\M'}{\R''}{\R[\rpc]}}
        }

        \inferrule*[
        lab={\srulename{(CPU-Jmp)}},
        right={$i = \decode{\M}{\R[\rpc]} = \JMP{r}$}]
        {
            \bnotspecial\\
            \macrel{i}{\vopc}{\R}{\B}\\
            \R' = \R[\rpc \mapsto \R[\rn{r}]]\\
            \dwrap{\cycles{i}}{\delta, t, t_a}{\delta', t', t'_a}\\
            \sint{\D}{\cfg{\B}{\delta'}{t'}{t'_a}{\M}{\R'}{\R[\rpc]}}{\cfg{\B'}{\delta''}{t''}{t''_a}{\M'}{\R''}{\R[\rpc]}}
        }
        {
            \smain{\cfg{\B}{\delta}{t}{t_a}{\M}{\R}{\vopc}}{\cfg{\B'}{\delta''}{t''}{t''_a}{\M'}{\R''}{\R[\rpc]}}
        }

        \inferrule*[
        lab={\srulename{(CPU-Reti-Chain)}},
        right={$i = \decode{\M}{\R[\rpc]} = \RETI$}]
        {
            \bnotspecial\\
            \B \neq \BBot \\
            \macrel{i}{\vopc}{\R}{\B}\\
            \dwrap{\cycles{i}}{\delta, t, t_a}{\delta', t', t'_a}\\
            \R[\rsr.\mtt{GIE}] = \mtt{1}\\
            t'_a \neq \bot\\
            \sint{\D}{\cfg{\B}{\delta'}{t'}{t'_a}{\M}{\R}{\R[\rpc]}}{\cfg{\B}{\delta''}{t''}{t''_a}{\M'}{\R'}{\R[\rpc]}}
        }
        {
            \smain{\cfg{\B}{\delta}{t}{t_a}{\M}{\R}{\vopc}}{\cfg{\B}{\delta''}{t''}{t''_a}{\M'}{\R'}{\R[\rpc]}}
        }

        \inferrule*[
        lab={\srulename{(CPU-Reti-PrePad)}},
        right={$i = \decode{\M}{\R[\rpc]} = \RETI$}]
        {
            \bnotspecial\\
            \B \neq \BBot \\
            \macrel{i}{\vopc}{\R}{\B}\\
            \dwrap{\cycles{i}}{\delta, t, t_a}{\delta', t', t'_a}\\
            (\R[\rsr.\mtt{GIE}] = \mtt{0} \ \lor\ t'_a = \bot)\\
        }
        {
            \smain{\cfg{\B}{\delta}{t}{t_a}{\M}{\R}{\vopc}}
            {
                \cfg{\langle \bot, \bot, \B.t_\mi{pad} \rangle}{\delta'}{t'}{t'_a}{\M}{\B.\R}{\B.\vopc}
            }
        }

        \inferrule*[lab={\srulename{(CPU-Reti-Pad)}}]
        {
            \B = \langle \bot, \bot, t_\mi{pad} \rangle \\
            \dwrap{t_\mi{pad}}{\delta, t, t_a}{\delta', t', t'_a}\\
            \sint{\D}{\cfg{\BBot}{\delta'}{t'}{t'_a}{\M}{\R}{\vopc}}{\cfg{\B'}{\delta''}{t''}{t''_a}{\M}{\R'}{\vopc}}\\
        }
        {
            \smain{\cfg{\B}{\delta}{t}{t_a}{\M}{\R}{\vopc}}{\cfg{\B'}{\delta''}{t''}{t''_a}{\M}{\R'}{\vopc}}
        }

        \inferrule*[
        lab={\srulename{(CPU-Reti)}},
        right={$i = \decode{\M}{\R[\rpc]} = \RETI$}]
        {
            \bnotspecial\\
            \macrel{i}{\vopc}{\R}{\bot}\\
            \R' = \R[\rpc \mapsto \M[\R[\rsp] + 2], \rsr \mapsto \M[\R[\rsp]], \rsp \mapsto \R[\rsp] + 4]\\
            \dwrap{\cycles{i}}{\delta, t, t_a}{\delta', t', t'_a}
        }
        {
            \smain{\cfg{\BBot}{\delta}{t}{t_a}{\M}{\R}{\vopc}}{\cfg{\BBot}{\delta'}{t'}{t'_a}{\M}{\R'}{\R[\rpc]}}
        }

        \inferrule*[
        lab={\srulename{(CPU-In)}},
        right={$i = \decode{\M}{\R[\rpc]} = \IN{r}$}]
        {
            \bnotspecial\\
            \macrel{i}{\vopc}{\R}{\B}\\
            \delta \xleadsto{\mi{rd(w)}}_D \delta'\\
            \R' = \R[\rpc \mapsto \R[\rpc] + 2][\rn{r} \mapsto w]\\
            \dwrap{\cycles{i}}{\delta', t, t_a}{\delta'', t', t'_a}\\
            \sint{\D}{\cfg{\B}{\delta''}{t'}{t'_a}{\M}{\R'}{\R[\rpc]}}{\cfg{\B'}{\delta'''}{t''}{t''_a}{\M'}{\R''}{\R[\rpc]}}
        }
        {
            \smain{\cfg{\B}{\delta}{t}{t_a}{\M}{\R}{\vopc}}{\cfg{\B'}{\delta'''}{t''}{t''_a}{\M'}{\R''}{\R[\rpc]}}
        }

        \inferrule*[
        lab={\srulename{(CPU-Out)}},
        right={$i = \decode{\M}{\R[\rpc]} = \OUT{r}$}]
        {
            \bnotspecial\\
            \macrel{i}{\vopc}{\R}{\B}\\
            \R' = \R[\rpc \mapsto \R[\rpc] + 2]\\
            \delta \xleadsto{\mi{wr(\R[\rn{r}])}}_D \delta'\\
            \dwrap{\cycles{i}}{\delta',t, t_a}{\delta'', t', t'_a}\\
            \sint{\D}{\cfg{\B}{\delta''}{t'}{t'_a}{\M}{\R'}{\R[\rpc]}}{\cfg{\B'}{\delta'''}{t''}{t''_a}{\M'}{\R''}{\R[\rpc]}}
        }
        {
            \smain{\cfg{\B}{\delta}{t}{t_a}{\M}{\R}{\vopc}}{\cfg{\B'}{\delta'''}{t''}{t''_a}{\M'}{\R''}{\R[\rpc]}}
        }
\end{mathpar}
\end{scriptsize}

\caption{Rules of the main transition system for \slang. (part II)}
\slbltwo
\end{figure}
~
\begin{figure}[bt]
    \begin{scriptsize}
    \begin{mathpar}
        \inferrule*[
        lab={\srulename{(CPU-Not)}},
        right={$i = \decode{\M}{\R[\rpc]} = \NOT{r}$}]
        {
            \bnotspecial\\
            \macrel{i}{\vopc}{\R}{\B}\\
            \R' = \R[\rpc \mapsto \R[\rpc] + 2][\rn{r} \mapsto \lnot\R[\rn{r}]]\\
            \dwrap{\cycles{i}}{\delta, t, t_a}{\delta', t', t'_a}\\
            \sint{\D}{\cfg{\B}{\delta'}{t'}{t'_a}{\M}{\R'}{\R[\rpc]}}{\cfg{\B'}{\delta''}{t''}{t''_a}{\M'}{\R''}{\R[\rpc]}}
        }
        {
            \smain{\cfg{\B}{\delta}{t}{t_a}{\M}{\R}{\vopc}}{\cfg{\B'}{\delta''}{t''}{t''_a}{\M'}{\R''}{\R[\rpc]}}
        }

        \inferrule*[
        lab={\srulename{(CPU-And)}},
        right={$i = \decode{\M}{\R[\rpc]} = \AND{r_1}{r_2}$}]
        {
            \bnotspecial\\
            \macrel{i}{\vopc}{\R}{\B}\\
            \R' = \R[\rpc \mapsto \R[\rpc] + 2][\rn{r_2} \mapsto \R[\rn{r_1}] \& \R[\rn{r_2}]]\\
            \R'' = \R'[\rsr.\mtt{N} \mapsto \R'[\rn{r_2}] \& \mtt{0x8000}, \rsr.\mtt{Z} \mapsto (\R'[\rn{r_2}] == 0), \rsr.\mtt{C} \mapsto (\R'[\rn{r_2}] \neq 0), \rsr.\mtt{V} \mapsto 0]\\
            \dwrap{\cycles{i}}{\delta, t, t_a}{\delta', t', t'_a}\\
            \sint{\D}{\cfg{\B}{\delta'}{t'}{t'_a}{\M}{\R''}{\R[\rpc]}}{\cfg{\B'}{\delta''}{t''}{t''_a}{\M'}{\R'''}{\R[\rpc]}}
        }
        {
            \smain{\cfg{\B}{\delta}{t}{t_a}{\M}{\R}{\vopc}}{\cfg{\B'}{\delta''}{t''}{t''_a}{\M'}{\R'''}{\R[\rpc]}}
        }

        \inferrule*[
        lab={\srulename{(CPU-Cmp)}},
        right={$i = \decode{\M}{\R[\rpc]} = \CMP{r_1}{r_2}$}]
        {
            \bnotspecial\\
            \macrel{i}{\vopc}{\R}{\B}\\
            \R' = \R[\rpc \mapsto \R[\rpc] + 2][\rn{r_2} \mapsto \R[\rn{r_1}] - \R[\rn{r_2}]]\\
            \R'' = \R'[\rsr.\mtt{N} \mapsto (\R'[\rn{r_2}] < 0), \rsr.\mtt{Z} \mapsto (\R'[\rn{r_2}] == 0), \rsr.\mtt{C} \mapsto (\R'[\rn{r_2}] \neq 0), \rsr.\mtt{V} \mapsto \mi{overflow}(\R[\rn{r_1}] - \R[\rn{r_2}])]\\
            \dwrap{\cycles{i}}{\delta, t, t_a}{\delta', t', t'_a}\\
            \sint{\D}{\cfg{\B}{\delta'}{t'}{t'_a}{\M}{\R''}{\R[\rpc]}}{\cfg{\B'}{\delta''}{t''}{t''_a}{\M'}{\R'''}{\R[\rpc]}}        }
        {
            \smain{\cfg{\B}{\delta}{t}{t_a}{\M}{\R}{\vopc}}{\cfg{\B'}{\delta''}{t''}{t''_a}{\M'}{\R'''}{\R[\rpc]}}
        }

        \inferrule*[
        lab={\srulename{(CPU-Add)}},
        right={$i = \decode{\M}{\R[\rpc]} = \ADD{r_1}{r_2}$}]
        {
            \bnotspecial\\
            \macrel{i}{\vopc}{\R}{\B}\\
            \R' = \R[\rpc \mapsto \R[\rpc] + 2][\rn{r_2} \mapsto \R[\rn{r_1}] + \R[\rn{r_2}]]\\
            \R'' = \R'[\rsr.\mtt{N} \mapsto (\R'[\rn{r_2}] < 0), \rsr.\mtt{Z} \mapsto (\R'[\rn{r_2}] == 0), \rsr.\mtt{C} \mapsto (\R'[\rn{r_2}] \neq 0), \rsr.\mtt{V} \mapsto \mi{overflow}(\R[\rn{r_1}] + \R[\rn{r_2}])]\\
            \dwrap{\cycles{i}}{\delta, t, t_a}{\delta', t', t'_a}\\
            \sint{\D}{\cfg{\B}{\delta'}{t'}{t'_a}{\M}{\R''}{\R[\rpc]}}{\cfg{\B'}{\delta''}{t''}{t''_a}{\M'}{\R'''}{\R[\rpc]}}        }
        {
            \smain{\cfg{\B}{\delta}{t}{t_a}{\M}{\R}{\vopc}}{\cfg{\B'}{\delta''}{t''}{t''_a}{\M'}{\R'''}{\R[\rpc]}}
        }

        \inferrule*[
        lab={\srulename{(CPU-Sub)}},
        right={$i = \decode{\M}{\R[\rpc]} = \SUB{r_1}{r_2}$}]
        {
            \bnotspecial\\
            \macrel{i}{\vopc}{\R}{\B}\\
            \R' = \R[\rpc \mapsto \R[\rpc] + 2][\rn{r_2} \mapsto \R[\rn{r_1}] - \R[\rn{r_2}]]\\
            \R'' = \R'[\rsr.\mtt{N} \mapsto (\R'[\rn{r_2}] < 0), \rsr.\mtt{Z} \mapsto (\R'[\rn{r_2}] == 0), \rsr.\mtt{C} \mapsto (\R'[\rn{r_2}] \neq 0), \rsr.\mtt{V} \mapsto \mi{overflow}(\R[\rn{r_1}] - \R[\rn{r_2}])]\\
            \dwrap{\cycles{i}}{\delta, t, t_a}{\delta', t', t'_a}\\
            \sint{\D}{\cfg{\B}{\delta'}{t'}{t'_a}{\M}{\R''}{\R[\rpc]}}{\cfg{\B'}{\delta''}{t''}{t''_a}{\M'}{\R'''}{\R[\rpc]}}
        }
        {
            \smain{\cfg{\B}{\delta}{t}{t_a}{\M}{\R}{\vopc}}{\cfg{\B'}{\delta''}{t''}{t''_a}{\M'}{\R'''}{\R[\rpc]}}
        }
    \end{mathpar}
\end{scriptsize}

\caption{Rules of the main transition system for \slang. (part III)}
\slblthree
\end{figure}

\subsubsection{Interrupts in \SH}

Intuitively the transition system implements the logic that decides what happens when an interrupt arrives, and its transitions have the following form:
\[
    \shint{\D}{\cfg{\B}{\delta}{t}{t_a}{\M}{\R}{\vopc}}{\cfg{\B'}{\delta'}{t'}{t'_a}{\M'}{\R'}{\vopc}}
\]

\noindent
Interrupts in \SH are \emph{always} ignored, thus the configuration is left unchanged.
\begin{mathpar}
\inferrule*[lab={\shrulename{INT}}]
{ }
{
    \shint{\D}{\cfg{\B}{\delta}{t}{t_a}{\M}{\R}{\vopc}}{\cfg{\B}{\delta}{t}{t_a}{\M}{\R}{\vopc}}
}
\end{mathpar}

\FloatBarrier

  \subsection{\SL}
    % !TEX root = ../report.tex

The operational semantics of \SL\ is given by a main transition system and one for interrupts that are handled securely both in protected and unprotected mode.

\subsubsection{Main transition system}

As above, the main transition system describes how the \SL configurations $c, c'$ evolve during the execution given an I/O device $\D$.
Its transitions have the following form:
\[
    \slmain{c}{c'}
\]
%where $\D$ is an I/O device and $c, c' \in \mb{C}$.

Figures~\ref{fig:sl-maints},~\ref{fig:sl-maints2} and \ref{fig:sl-maints3} report the full set of rules that define the main transition system of \SL.
%
% !TEX root = ../report.tex

\ifdefined\LANG
\renewcommand{\LANG}{sl}
\else
\newcommand{\LANG}{sl}
\fi

\subsubsection{Interrupts in \SL}

What happens in \SL\ when an interrupt arrives is specified by the transition system with transitions of the form (note that they differ from those of \SH\ only in the arrow, that here is $\slint{\cdot}{\cdot}{\cdot}$)
\[
    \slint{\D}{\cfg{\B}{\delta}{t}{t_a}{\M}{\R}{\vopc}}{\cfg{\B'}{\delta'}{t'}{t'_a}{\M'}{\R'}{\vopc}}
\]
\noindent
The following inference rules incorporate the mitigation described in depth in the paper to handle interrupts also in protected mode (see rule \slrulename{(INT-PM-P)} below).
%
%Since in \SL interrupts have to be managed also in protected mode, we apply the mitigation as described in depth in the paper.
%Formally, the $\slint{\D}{\cdot}{\cdot}$ relation is induced by the following set of rules:
%
%\begin{figure}[h]\label{fig:interruptsSL}
%\begin{scriptsize}
\begin{mathpar}
    \inferrule*[lab={\slrulename{(INT-UM-P)}}]
    {
        \moderelUM{\vopc}\\
        \R[\rsr].\mtt{GIE} = 1\\
        t_a \neq \bot\\
        \R' = \R[\rpc \mapsto \mi{isr}, \rsr \mapsto 0, \rsp \mapsto \R[\rsp]-4]\\
        \M' = \M[\R[\rsp]-2 \mapsto \R[\rpc], \R[\rsp]-4 \mapsto \R[\rsr]]\\
        \dwrap{6}{\delta, t, \bot}{\delta', t', t'_a}
    }
    {
        \slint{\D}{\cfg{\B}{\delta}{t}{t_a}{\M}{\R}{\vopc}}{\cfg{\B}{\delta'}{t'}{t'_a}{\M'}{\R'}{\vopc}}
    }

    \inferrule*[lab={\slrulename{(INT-UM-NP)}}]
    {
        \moderelUM{\vopc}\\
        (\R[\rsr].\mtt{GIE} = \mtt{0} \lor t_a = \bot)
    }
    {
        \slint{\D}{\cfg{\B}{\delta}{t}{t_a}{\M}{\R}{\vopc}}{\cfg{\B}{\delta}{t}{t_a}{\M}{\R}{\vopc}}
    }

    \inferrule*[lab={\slrulename{(INT-PM-P)}}]
    {
        k = \MT - (t - t_a)\\
        \moderelPM{\vopc}\\
        \R[\rsr].\mtt{GIE} = \mtt{1}\\
        t_a \neq \bot\\
        \R' = \R_0[\rpc \mapsto \mi{isr}]\\
        \dwrap{6+k}{\delta, t, \bot}{\delta', t', t'_a}\\
        \B' = \langle \R, \vopc, t - t_a \rangle
    }
    {
        \slint{\D}{\cfg{\B}{\delta}{t}{t_a}{\M}{\R}{\vopc}}{\cfg{\B'}{\delta'}{t'}{\bot}{\M}{\R'}{\vopc}}
    }

    \inferrule*[lab={\slrulename{(INT-PM-NP)}}]
    {
        \moderelPM{\vopc}\\
        (\R[\rsr].\mtt{GIE} = \mtt{0} \lor t_a = \bot)
    }
    {
        \slint{\D}{\cfg{\B}{\delta}{t}{t_a}{\M}{\R}{\vopc}}{\cfg{\B}{\delta}{t}{t_a}{\M}{\R}{\vopc}}
    }
\end{mathpar}
%\end{scriptsize}
%\caption{The transition system for handling interrupts in \SL.}
%\end{figure}
%
It might be worthy to briefly describe what happens upon ``corner cases'':
\begin{itemize}
    \item Whenever an interrupt has to be handled in protected mode, but the current instruction lead the CPU in unprotected mode, the padding mechanism is applied as in the standard case \emph{including} the padding after the $\RETI$. Indeed, if partial padding (resp.~no padding at all) was applied then the duration of the padding (resp.~of the last instruction) would be leaked to the attacker (cf.~definition below).
    \item Interrupts arising during the padding \emph{before} the interrupt service routine is invoked need to be ignored, since the padding duration and the instruction duration would be leaked otherwise (cf. definition below, rule~\slrulename{(INT-PM-P)} ignores any interrupt happening during the cycles needed for the interrupt logic and for the padding).
    \item Interrupts happening \emph{during} the execution of the interrupt service routine are simply ``chained'' and handled as soon the current routine is completed (see rule~\slrulename{(CPU-Reti-Chain)}).
    \item Finally, interrupts happening during the padding \emph{after} the interrupt service routine are handled as any other interrupt happening in protected mode (see rule~\slrulename{(CPU-Reti-Pad)}).
\end{itemize}

\FloatBarrier

\section{Security theorems}\label{app:security}
  % !TEX root = ../report.tex

%\normalsize
Security of \SL is obtained by proving it fully abstract w.r.t.\ \SH.
We define full abstraction here relying on the convergence of whole programs.

\begin{definition}
    Let $C = \langle \M_C, \D \rangle$ be a context, and $\M_M$ be a software module.
    A whole program $C[\M_M]$ converges in \SH (written $\sconv{C[\M_M]}$) \emph{iff}
    \[
        \shmainstar{\D}{\initconf{C}{\M_M}}{\haltconf}. % \setminus \{ \resetconf \}.
    \]

    \noindent
    Similarly, the same whole program converges in \SL (written $\tconv{C[\M_M]}$) \emph{iff}
    \[
        \slmainstar{\D}{\initconf{C}{\M_M}}{\haltconf}. % \setminus \{ \resetconf \}.
    \]
\end{definition}

The following definition formalizes the notion of contextual equivalence of two software modules.
Recall from the paper that contextually equivalent software modules behave in the same way under any attacker (i.e., context).
\begin{definition}
    Two software modules $\M_M$ and $\M_{M'}$ are contextually equivalent in \SH, written $\M_M \seq \M_{M'}$, \emph{iff}
    \[
        \forall C .\ \left(\sconv{C[\M_M]} \iff \sconv{C[\M_{M'}]} \right).
    \]

    Similarly, two software modules $\M_M$ and $\M_{M'}$ are contextually equivalent in \SL, written $\M_M \teq \M_{M'}$, \emph{iff}
    \[
        \forall C .\ \left(\tconv{C[\M_M]} \iff \tconv{C[\M_{M'}]} \right).
    \]
\end{definition}

\begin{restatable}[Full abstraction]{thm}{fullabstractionrep}\label{thm:fa-rep}
    $\forall \M_M, \M_{M'}.\ (\M_M \seq \M_{M'} \iff \M_M \teq \M_{M'})$.
\end{restatable}

For proving the full abstraction theorem we first easily establish that
$(\M_M \seq \M_{M'} \Leftarrow \M_M \teq \M_{M'})$ (Lemma~\ref{lemma:reflection}), i.e.\ reflection of behavious.
Then, the other implication, i.e.\ preservation of behaviours is proved by Lemma~\ref{lemma:preservation} following the strategy summarized in~\figurename~\ref{fig:strategy-rep}.
There we use the trace equivalence $\ttreq$ of Definition~\ref{def:trace-eq}.
Intuitively, we say that a module $M$ plugged in a context performs a trace made of those actions performed by $M$ that can be observed by an attacker, i.e.\ when a call to $M$ occurs and when instead $M$ returns;
also information about the contents of the registers will be recorded in both cases, and also on the flow of time in the second case.
Two modules are then equivalent if they exhibit the same traces.
Proving preservation is then done in two steps, the composition of which gives $(iii)$ in~\figurename~\ref{fig:strategy-rep}.
First Lemma~\ref{lemma:pres-tr} establishies $(ii)$ in~\figurename~\ref{fig:strategy-rep}: two modules equivalent in \SH\ are trace equivalent.
Then Lemma~\ref{lemma:refl-tr} establishes $(i)$ in~\figurename~\ref{fig:strategy-rep}: two modules that are trace equivalent are also equivalent in \SL.

\begin{figure}
    \centering
    \begin{tikzpicture}
        \matrix (m) [matrix of math nodes,row sep=3em,column sep=4em,minimum width=2em] {
        \M_M \seq \M_{M'} & \\
        \M_M \teq \M_{M'} & \M_M \ttreq \M_{M'} \\};
        \path[-stealth]
          (m-2-2) edge [double] node [below] {$(i)$} (m-2-1)
          (m-1-1) edge [double] node [above] {$(ii)$} (m-2-2)
          (m-1-1) edge [double] node [left] {$(iii)$} (m-2-1);
    \end{tikzpicture}

    \caption{An illustration of the proof strategy of preservation of behaviours.}\label{fig:strategy-rep}
\end{figure}
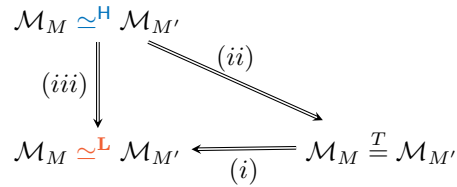

\subsection{Reflection of behaviors}

To prove the reflection of behaviors, i.e., that for all $\M_M, \M_{M'}.\ \M_M \teq \M_{M'}$ implies $\M_M \seq \M_{M'}$ we first need to introduce the notion of \emph{interrupt-less context} $\nI{C}$ for a context $C$.
Intuitively, $\nI{C}$ behaves as $C$ but never raises any interrupt.
In practice, we obtain it from $C$ by removing in the device the transitions that may raise an interrupt.
Formally:
\begin{definition}\label{def:nointctx}
    Let $\D = \langle \Delta, \deltainit, \xleadsto{a}_D \rangle$ be an I/O device.
    Given a context $C = \langle \M_C, \D \rangle$, we define its corresponding \emph{interrupt-less context} as $\nI{C} = \langle \M_C, \nI{\xleadsto{a}_D} \rangle$ where:
    \begin{itemize}
        \item $\nI{\D} = \langle \Delta, \deltainit, \nI{\xleadsto{a}_D} \rangle$, and
        \item $\nI{\xleadsto{a}_D} \triangleq\ \xleadsto{a}_D \, \cup\, \{ (\delta, \epsilon, \delta') \mid (\delta, \mi{int?}, \delta') \in\ \xleadsto{a}_D\} \setminus \{ (\delta, \mi{int?}, \delta') \mid (\delta, \mi{int?}, \delta') \in\ \xleadsto{a}_D\}.$
    \end{itemize}
\end{definition}
\noindent
Note that $\nI{\D}$ is actually a device, due to the contraints on its transition function.

The behavior of interrupt-less contexts in \SL has a direct correspondence to the behavior of their standard counterparts in \SH (recall that \SH ignores all the interrupts).
In fact:
\begin{lemma}\label{lemma:nointeq}
    For any module $\M_M$, context $C$, and corresponding interrupt-less context $\nI{C}$:
    \[
        \tconv{\nI{C}[\M_M]} \iff \sconv{C[\M_M]}
    \]
\end{lemma}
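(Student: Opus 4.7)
The plan is to prove a strong lock-step correspondence between the \SL execution of $\nI{C}[\M_M]$ and the \SH execution of $C[\M_M]$, and then conclude from that bisimulation that one reaches $\haltconf$ iff the other does. Concretely I would show by induction on the length of the \SL computation that at every reachable configuration the invariant ``$\B = \BBot$ and $t_a = \bot$'' holds, and that the corresponding \SH configuration agrees on the memory, register file, $\vopc$, CPU time $t$, and device state (possibly differing only on the bookkeeping field $t_a$, which \SH may set but never consults).

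The core observation powering the invariant is that in \SL using $\nI{\D}$ the device-wrapper relation $\dwrap{k}{\delta, t, t_a}{\delta', t', t'_a}$ can never produce $t'_a \neq \bot$ from $t_a = \bot$: its only clause that sets $t'_a = t+j$ requires a transition $\delta_j \xleadsto{\mi{int?}}_D \delta_{j+1}$, and by Definition~\ref{def:nointctx} $\nI{\D}$ contains \emph{no} $\mi{int?}$ transitions at all. Starting from $\initconf{\nI{C}}{\M_M}$, where $t_a = \bot$, this means $t_a$ remains permanently $\bot$, which makes \slrulename{(INT-UM-P)} and \slrulename{(INT-PM-P)} inapplicable. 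The only interrupt-logic rules that can fire are thus \slrulename{(INT-UM-NP)} and \slrulename{(INT-PM-NP)}, both of which leave the configuration unchanged, mirroring exactly \SH's sole \shrulename{INT} rule. Since no step can install a non-trivial backup, $\B = \BBot$ persists, which rules out the \SL-specific \RETI rules \slrulename{(CPU-Reti-Chain)}, \slrulename{(CPU-Reti-PrePad)}, and \slrulename{(CPU-Reti-Pad)} that require $\B \neq \BBot$ or $\B = \langle \bot, \bot, t_\mi{pad} \rangle$. What remains of \SL is precisely the rule set of \SH.

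With the invariant in hand, the bisimulation reduces to a case analysis on the instruction dispatched per step: each applicable \SL rule using $\nI{\D}$ is mirrored by the homonymous \SH rule using $\D$, with identical updates to memory, registers, $\vopc$, and CPU time. The device-state trajectories in $\D$ and $\nI{\D}$ stay pointwise identical because the well-formedness constraint on devices forces every state to have exactly one $\epsilon$- or $\mi{int?}$-successor; relabelling $\mi{int?}$ as $\epsilon$ therefore preserves the successor map step by step, so each $\IN{r}$ observes the same word $w$ on both sides. Both directions of the biconditional then follow immediately: a convergent derivation on one side translates cycle by cycle into a convergent derivation on the other.

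The main obstacle is the careful bookkeeping inside $\dwrap{\cdot}{\cdot}{\cdot}$, namely verifying that the two devices really do keep producing the same intermediate states despite the subtle relabelling introduced by Definition~\ref{def:nointctx}. This is where the uniqueness of silent-or-interrupting successors is essential: it guarantees that the deterministic $\epsilon$-or-$\mi{int?}$ path in $\D$ becomes a pure $\epsilon$-path in $\nI{\D}$ with exactly the same intermediate states, so that later $\mi{rd}(w)$ and $\mi{wr}(w)$ transitions match in both executions. Without that constraint the device projections could diverge and peripheral reads could differ, breaking the bisimulation.
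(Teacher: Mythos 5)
Your proposal is correct and follows essentially the same route as the paper's (much terser) proof: the key observation in both is that $\nI{\D}$ has no $\mi{int?}$ transitions, so the wrapper never moves $t_a$ off $\bot$, the only applicable interrupt-logic rules are the no-op ones, $\B$ stays $\BBot$, and the remaining rules of \SL coincide with those of \SH. Your version merely makes explicit what the paper leaves implicit — the invariant $\B = \BBot \wedge t_a = \bot$, the fact that \SH may record a non-$\bot$ $t_a$ but never consults it, and that the relabelling of $\mi{int?}$ to $\epsilon$ preserves the device's successor map so that reads via $\IN{r}$ agree — all of which are sound refinements rather than a different argument.
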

\begin{proof}
    By definition of $\dwrap{k}{\cdot}{\cdot}$, the value $t_a$ in the CPU configuration (that signals the presence of an unhandled interrupt) is changed only when an interrupt has been raised since the last time it was checked.

    Since any $\mi{int?}$ action has been substituted with an $\epsilon$, $t_a$ is never changed from its initial $\bot$ value.

    Since the only difference in behavior between the two levels is in the interrupt logic, and since the ISR in $\nI{C}$ is never invoked (thus, it does not affect the program behavior), $\slint{\D}{\cdot}{\cdot}$ behaves exactly as $\shint{\D}{\cdot}{\cdot}$.
    So, $\tconv{\nI{C}[\M_M]}$ implies $\sconv{C[\M_M]}$ and vice versa.
\end{proof}

Given Definition~\ref{def:nointctx} and Lemma~\ref{lemma:nointeq} it is relatively easy to prove reflection, since whole programs in \SH behave just like a subset of whole programs in \SL:
\begin{lemma}[Reflection]\label{lemma:reflection}
\[
    \forall \M_M, \M_{M'}.\ (\M_M \teq \M_{M'} \implies \M_M \seq \M_{M'}).
\]
\end{lemma}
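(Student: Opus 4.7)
The plan is to directly exploit Lemma~\ref{lemma:nointeq} together with the interrupt-less context construction of Definition~\ref{def:nointctx}. I would assume $\M_M \teq \M_{M'}$ and fix an arbitrary \SH context $C$; the goal then reduces to establishing $\sconv{C[\M_M]} \iff \sconv{C[\M_{M'}]}$, from which $\M_M \seq \M_{M'}$ follows by unfolding the definition of $\seq$.

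First I would form the interrupt-less counterpart $\nI{C}$, checking that the transformation preserves the deterministic I/O automaton constraints (it only relabels $\mi{int?}$ outputs as silent $\epsilon$-steps, so at most one outgoing action is still defined per state, and the I/O labels are untouched), hence $\nI{C}$ is a legitimate \SL context. I would then assemble the chain
\[
\sconv{C[\M_M]} \iff \tconv{\nI{C}[\M_M]} \iff \tconv{\nI{C}[\M_{M'}]} \iff \sconv{C[\M_{M'}]},
\]
whose outer two biconditionals are direct instances of Lemma~\ref{lemma:nointeq}, and whose middle biconditional is obtained by specializing the hypothesis $\M_M \teq \M_{M'}$ (which is a universal statement over all \SL contexts) to the particular context $\nI{C}$.

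I do not expect any real obstacle in this direction: the substantive content is already packaged inside Lemma~\ref{lemma:nointeq}, which itself rests on the observation that when the device never emits $\mi{int?}$, the field $t_a$ in the CPU configuration stays at $\bot$ throughout, so the \SL interrupt logic collapses to the trivial \SH one and the main transition systems coincide step by step. The only detail to double-check is that $C \mapsto \nI{C}$ indeed lands in the domain of \SL contexts and that the unprotected memory $\M_C$ is untouched by the transformation, so the initial configurations $\initconf{C}{\M_M}$ and $\initconf{\nI{C}}{\M_M}$ agree on everything except the device signature. The conceptual weight of the full abstraction result lies entirely on the preservation side ($\M_M \seq \M_{M'} \Rightarrow \M_M \teq \M_{M'}$), which demands the trace semantics, the isolation lemmata, and the backtranslation machinery described later.
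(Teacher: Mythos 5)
Your proposal is correct and follows essentially the same route as the paper's own proof: both reduce the claim to Lemma~\ref{lemma:nointeq} by instantiating the hypothesis $\M_M \teq \M_{M'}$ at the interrupt-less context $\nI{C}$ of Definition~\ref{def:nointctx} and then transferring convergence back to \SH. The only presentational difference is that you make the three-link chain of biconditionals and the well-formedness check on $\nI{C}$ explicit, which the paper states more tersely.
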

\begin{proof}
We can expand the hypothesis using the definition of $\teq$ and $\seq$ as follows:
\[
    (\forall C.\, \tconv{C[\M_M]} \iff \tconv{C[\M_{M'}]}) \implies (\forall C'.\, \sconv{C'[\M_M]} \iff \sconv{C'[\M_{M'}]}).
\]

For any $C'$ we can build the corresponding interrupt-less context $C'_{\not\,I}$.

Since interrupt-less contexts are a (strict) subset of all the contexts, by hypothesis:
\[
    \tconv{C'_{\not\,I}[\M_M]} \iff \tconv{C'_{\not\,I}[\M_{M'}]}.
\]

But from Lemma~\ref{lemma:nointeq} it follows that
\[
    \sconv{C'[\M_M]} \iff \sconv{C'[\M_{M'}]}.
\]
\end{proof}

\subsection{Preservation of behaviors}
The preservation of behaviors is stated as follows:
\begin{lemma}\label{lemma:preservation}
\[
    \forall \M_M, \M_{M'}.\ (\M_M \seq \M_{M'} \Rightarrow  \M_M \teq \M_{M'}).
\]
\end{lemma}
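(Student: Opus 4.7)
The plan is to follow the strategy sketched in Figure~\ref{fig:strategy-rep}: introduce a trace semantics and corresponding trace equivalence $\ttreq$, and then establish the two implications $(i)$ $\M_M \ttreq \M_{M'} \Rightarrow \M_M \teq \M_{M'}$ and $(ii)$ $\M_M \seq \M_{M'} \Rightarrow \M_M \ttreq \M_{M'}$; transitivity then gives preservation. The traces I would use record only the attacker-observable events at the PM/UM boundary: an observable $\jmpin{\R}$ produced whenever control enters the enclave (recording the registers visible to the enclave), an observable $\jmpout{\dt}{\R}$ produced whenever control leaves the enclave (recording the registers seen by the context together with the total cycle count $\dt$ spent in PM since the last entry), and a $\cnv$ marker for termination. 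The padding design of \SL is precisely what makes $\dt$ the \emph{only} timing information that leaks across the boundary, so this set of observables should suffice.

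For $(i)$, I would prove the two ``isolation lemmata'' promised in \isolationlemmata. Concretely, I would show that along any \SL execution of $C[\M_M]$ the configuration factors into an ``UM part'' and a ``PM part'' such that (a) during a UM run the context evolves independently of the enclave, determined solely by $\M_C$, $\D$ and the register snapshot handed back at the last $\jmpout{\dt}{\R}$; and (b) during a PM run the enclave evolves independently of the context, determined solely by $\M_M$ and the register snapshot delivered by the last $\jmpin{\R}$, interruptions included (since interrupts in PM are handled with constant latency $T$, are bookkept in the software-inaccessible $\B$, and their resume-to-end padding exactly masks the interrupted instruction's residual cycles). Once these are in place, a straightforward induction on the length of the computation gives: if $C[\M_M]$ produces trace $\btr$ then so does $C[\M_{M'}]$ for any $\M_{M'}$ with the same traces, and in particular convergence is preserved, yielding $\M_M \teq \M_{M'}$.

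For $(ii)$ I would argue by contraposition: assuming $\M_M$ and $\M_{M'}$ have distinct trace sets, pick a minimal differing trace $\btr$ and build a distinguishing context $C = \langle \M_C, \D \rangle$ via backtranslation. The key difficulty is that the 64KB address space is far too small to hard-code an arbitrary trace in $\M_C$, so I would split the work between the two components that the attacker controls: $\M_C$ (\algomem) holds a small driver that, on each enclave return, reads the next ``expected observable'' from the device and branches to $\cnv$ or to an infinite loop depending on whether the observed registers and elapsed cycles (measurable via the cycle-accurate timer exposed by $\D$) match; while the device $\D$ (\algodev) uses its unbounded state space to store the trace prefix, advance per step, and also schedules the exact interrupts needed to witness any timing-sensitive discrepancy in $\jmpout{\dt}{\R}$. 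By construction $C$ converges on the module producing $\btr$ and diverges on the other, contradicting $\M_M \seq \M_{M'}$.

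The main obstacle is undoubtedly the backtranslation in $(ii)$: one must show that the device/driver pair I sketch really does distinguish \emph{any} differing observable, including the subtle case where the traces differ only in the cycle count $\dt$ of some $\jmpout{\dt}{\R}$. Here I would lean heavily on \SL's constant interrupt latency and double padding, which guarantee that reading the timer immediately before and after an enclave call gives exactly $\dt$ (plus a fixed bias) regardless of when interrupts were scheduled; this is what lets the driver compare the measured value against the one stored in $\D$. Sub-cases I expect to be fiddly are entries triggered by $\mtt{0xFFFE}$ after an exception, $\RETI$-chained interrupts, and interrupts arriving during post-ISR padding, but each should succumb to the same device-scripted comparison once the isolation lemmata from $(i)$ are available to guarantee that the enclave's behavior along $\btr$ is reproducible in $C$.
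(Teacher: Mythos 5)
Your overall architecture is exactly the paper's: the same two observables at the PM/UM boundary plus $\cnv$, the same two isolation lemmata for implication $(i)$, and a backtranslation by contraposition for implication $(ii)$, composed via the diagram of \figurename~\ref{fig:strategy-rep}. Implication $(i)$ as you sketch it is essentially the paper's Lemmata~\ref{lemma:umeqtraces} and~\ref{lemma:pmeqtraces} built on $U$- and $P$-equivalence, and is fine at this level of detail.

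There is, however, a concrete gap in your backtranslation for $(ii)$. The distinguishing context you construct must refute $\M_M \seq \M_{M'}$, so it is an \SH context --- and \SH ignores interrupts entirely (the ISR is never invoked, and $\mi{int?}$ actions from the device have no effect on the CPU). Your device therefore cannot ``schedule the exact interrupts needed to witness any timing-sensitive discrepancy,'' and the driver cannot compare a measured time against the $\dt$ stored from the \SL trace, because that $\dt$ includes ISR execution time and padding that simply do not occur in \SH. What is needed, and what your sketch omits, is a reduction of any $\dt$-discrepancy in \SL to a discrepancy in the pure instruction-cycle sums, which \emph{is} observable interrupt-free: the paper shows that each complete interrupt segment contributes a constant $11+\MT$ cycles on top of the interrupted instructions (Property~\ref{prop:btr-timings}), and that under a fixed deterministic device two trace-equivalent-prefixed modules handle the \emph{same number} of interrupts at the same absolute times (Property~\ref{prop:ueq-pres-handlereti}), so the padding overhead cancels and a difference in $\dt$ forces a difference in $\sum\ilen{\cdot}$ (Property~\ref{prop:ttreq-ilen}, proved using interrupt-limited contexts). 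Only then can the \SH device detect the difference by counting its own synchronous $\epsilon$-steps and answering the context's $\IN{\rpc}$ request with a halt or a loop address depending on when the request arrives. Without this bridge, the case where the traces differ only in the cycle count of some $\jmpout{\dt}{\R}$ --- which you correctly identify as the crux --- does not go through as written.
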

Its proof is harder than the one of reflection and requires the definition of a trace semantics whose traces, intuitively, correspond to the behaviors that an attacker can observe in \SL.

\subsubsection{Fine-grained and coarse-grained trace semantics}\label{sssec:tracesem}

To simplify the extraction of the traces we first define a very fine-grained trace semantics and then we transform it to a more coarse-grained one to match what attackers can observe.

The fine-grained trace semantics has the following observables ($k \in \mb{N}$):
\begin{align*}
    \ta \Coloneqq&\
    \usilent \mid \psilent{k} \mid \cnv\\
     &\ \jmpin{\R} \mid \jmpout{k}{\R} \\
     &\ \reti{k} \mid \handle{k}.
\end{align*}
Traces are defined as strings of observables $\ta$, and we denote the empty trace as $\emptystr$.

Intuitively, $\usilent$ denotes actions performed by the context that are not observed, $\psilent{k}$ indicates an internal action taking $k$ cycles.
The observable $\cnv$ indicates that termination occurred.
A $\jmpin{\R}$ happens when the CPU enters protected mode, $\jmpout{k}{\R}$ happens when it exits.
Finally, $\handle{k}$ and $\reti{k}$ denote when the processor starts executing the interrupt service routine from protected mode and when it returns from it, respectively.
\begin{figure}
\begin{mathpar}
    \inferrule*
    [
        lab={(Obs-Internal-PM)}
    ]
    {
        \moderelPM{\R[\rpc]}\\
        \slmain{\cfg{\B}{\delta}{t}{t_a}{\M}{\R}{\vopc}}{\cfg{\BBot}{\delta'}{t+k}{t'_a}{\M'}{\R'}{\vopcp}}\\
        \moderelPM{\R'[\rpc]}\\
    }
    {
        \atrsem{\D}{\cfg{\B}{\delta}{t}{t_a}{\M}{\R}{\vopc}}{\psilent{k}}{\cfg{\BBot}{\delta'}{t+k}{t'_a}{\M'}{\R'}{\vopcp}}
    }

    \inferrule*
    [
        lab={(Obs-JmpIn)}
    ]
    {
        \moderelUM{\R[\rpc]}\\
        \slmain{\cfg{\BBot}{\delta}{t}{t_a}{\M}{\R}{\vopc}}{\cfg{\BBot}{\delta'}{t'}{t'_a}{\M'}{\R'}{\vopcp}}\\
        \moderelPM{\R'[\rpc]}\\
    }
    {
        \atrsem{\D}{\cfg{\BBot}{\delta}{t}{t_a}{\M}{\R}{\vopc}}{\jmpin{\R'}}{\cfg{\BBot}{\delta'}{t'}{t'_a}{\M'}{\R'}{\vopcp}}
    }

    \inferrule*
    [
        lab={(Obs-Reti)}
    ]
    {
        \moderelUM{\R[\rpc]}\\
        \B \neq \BBot\\
        \slmain{\cfg{\B}{\delta}{t}{t_a}{\M}{\R}{\vopc}}{\cfg{\langle \bot, \bot, t_\mi{pad} \rangle}{\delta'}{t+k}{t'_a}{\M'}{\R'}{\vopcp}}\\
        %\moderelPM{\R'[\rpc]}\\
    }
    {
        \atrsem{\D}{\cfg{\B}{\delta}{t}{t_a}{\M}{\R}{\vopc}}{\reti{k}}{\cfg{\langle \bot, \bot, t_\mi{pad} \rangle}{\delta'}{t'}{t'_a}{\M'}{\R'}{\vopcp}}
    }

    \inferrule*
    [
        lab={(Obs-JmpOut)}
    ]
    {
        \moderelPM{\R[\rpc]}\\
        \slmain{\cfg{\BBot}{\delta}{t}{t_a}{\M}{\R}{\vopc}}{\cfg{\BBot}{\delta'}{t+k}{t'_a}{\M'}{\R'}{\vopcp}}\\
        \moderelUM{\R'[\rpc]}\\
    }
    {
        \atrsem{\D}{\cfg{\BBot}{\delta}{t}{t_a}{\M}{\R}{\vopc}}{\jmpout{k}{\R'}}{\cfg{\B'}{\delta'}{t+k}{t'_a}{\M'}{\R'}{\vopcp}}
    }

    \inferrule*
    [
        lab={(Obs-JmpOut-PostPoned)}
    ]
    {
        \moderelUM{\R[\rpc]}\\
        %\B \neq \BBot\\
        \slmain{\cfg{\langle \bot, \bot, t_\mi{pad} \rangle}{\delta}{t}{t_a}{\M}{\R}{\vopc}}{\cfg{\BBot}{\delta'}{t+k}{t'_a}{\M'}{\R'}{\vopcp}}\\
        \moderelUM{\R'[\rpc]}\\
    }
    {
        \atrsem{\D}{\cfg{\langle \bot, \bot, t_\mi{pad} \rangle}{\delta}{t}{t_a}{\M}{\R}{\vopc}}{\jmpout{k}{\R'}}{\cfg{\B'}{\delta'}{t+k}{t'_a}{\M'}{\R'}{\vopcp}}
    }

    \inferrule*
    [
        lab={(Obs-Handle)}
    ]
    {
        \slmain{\cfg{\B}{\delta}{t}{t_a}{\M}{\R}{\vopc}}{\cfg{\B'}{\delta'}{t+k}{t'_a}{\M'}{\R'}{\vopcp}}\\
        \moderelUM{\R'[\rpc]}\\
        \B' \neq \BBot
    }
    {
        \atrsem{\D}{\cfg{\B}{\delta}{t}{t_a}{\M}{\R}{\vopc}}{\handle{k}}{\cfg{\B'}{\delta'}{t+k}{t'_a}{\M'}{\R'}{\vopcp}}
    }

    \inferrule*
    [
        lab={(Obs-Internal-UM)}
    ]
    {
        \moderelUM{\R[\rpc]}\\
        \slmain{\cfg{\B}{\delta}{t}{t_a}{\M}{\R}{\vopc}}{\cfg{\B}{\delta'}{t'}{t'_a}{\M'}{\R'}{\vopcp}}\\
        \moderelUM{\R'[\rpc]}\\
    }
    {
        \atrsem{\D}{\cfg{\B}{\delta}{t}{t_a}{\M}{\R}{\vopc}}{\usilent}{\cfg{\B}{\delta'}{t'}{t'_a}{\M'}{\R'}{\vopcp}}
    }

    \inferrule*
    [
        lab={(Obs-Final)}
    ]
    {
        \moderelUM{\R[\rpc]}\\
        \slmain{\cfg{\B}{\delta}{t}{t_a}{\M}{\R}{\vopc}}{\haltconf}
    }
    {
        \atrsem{\D}{\cfg{\B}{\delta}{t}{t_a}{\M}{\R}{\vopc}}{\cnv}{\haltconf}
    }
\end{mathpar}

\caption{Formal definition of relation $\atrarrow{\ta}$ for fine-grained observables.}\label{fig:finegrained-obs}
\end{figure}

The relation $\atrarrow{\ta}$ in~\figurename~\ref{fig:finegrained-obs} formally defines how observables can be extracted from the execution of a whole program.
It is worth noting that the relation $\atrarrow{\ta}$ is defined in such a way that each transition $\slmain{c}{c'}$ has a corresponding transition $\atrsem{\D}{c}{\ta}{c'}$ for some $\ta$,
possibly the non observable one, $\xi$.

Fine-grained traces $\bar \alpha$ are obtained by transitively and reflexively closing $\atrarrow{\ta}$, written $\atrarrowstar{\bar\ta}$.
Note that in any trace $\bar\ta$, only the observables $\psilent{k}, \reti{k}$ or $\handle{k}$ can occur between a $\jmpin{\R}$ and a $\jmpout{\dt}{\R}$.

When an interrupt has to be handled, the trace that is observed starts with an $\handle{\cdot}$, followed by a sequence of $\usilent$ and, if a $\RETI$ is executed, a $\reti{k}$ ($k$ always has value $\cycles{\RETI}$) is observed.

If the interrupted instruction was a jump from protected mode to unprotected mode, the $\reti{\cdot}$ is followed by a $\jmpout{\cdot}{\cdot}$ (cf. rules~\rulename{(Obs-Handle)},~\rulename{(Obs-Internal-UM)},~\rulename{(Obs-Reti)} and~\rulename{(Obs-JmpOut-PostPoned)}), otherwise a $\psilent{\cdot}$ -- or a $\handle{\cdot}$ if an interrupt has to be handled -- is observed.

Actually, these traces contain more information than what an attacker (i.e., the context) can observe.
To match what the context can observe we introduce more coarse-grained traces with the following observables, where $\jmpin{\R}$ and $\jmpout{\dt}{\R}$ represent invoking a module and returning from it:
\begin{align*}
    \tb \Coloneqq&\ \cnv \mid \jmpin{\R} \mid \jmpout{\dt}{\R}.
\end{align*}
Traces $\btr$ are defined as strings of $\tb$ actions with $\emptystr$ as the empty trace.

Note that observables for interrupts and silent actions are not visible anymore.
In addition, $\jmpout{\dt}{\R}$ has a $\dt$ parameter that models that an attacker can just measure the end-to-end time of a piece of code running in protected mode.
\begin{definition}[Traces of a module]
    The set of (observable) traces of the module $M$ is
    \[
        \ttr{\M_M} \triangleq \{ \btr \mid \exists C = \langle \M_C, \D \rangle.\, \btrsemstar{\D}{\initconf{C}{\M_M}}{\btr}{c'} \}.
    \]
    where $\btrarrowstar{\cdot}$ is the reflexive and transitive closure of the $\btrarrow{\cdot}$ relation defined in~\figurename~\ref{fig:coarsegrained-obs}.
\end{definition}
\begin{figure}
    % \begin{mathpar}
    %     \inferrule
    %     {
    %         \atrsemstar{\D}{c}{\atr}{c'}
    %     }
    %     {
    %         \btrsem{\D}{c}{\atob{0}{\atr}}{c'}
    %     }
    % \end{mathpar}

    % where

    % \begin{align*}
    %     &\atob{n}{\emptystr} \triangleq \emptystr\\
    %     &\atob{n}{\cnv} \triangleq \cnv\\
    %     &\atob{n}{\usilent \cdot \atr} \triangleq \atob{n}{\atr}\\
    %     &\atob{n}{\psilent{k} \cdot \atr} = \atob{n}{\reti{k} \cdot \atr} = \atob{n}{\handle{k} \cdot \atr} \triangleq \atob{n+k}{\atr}\\
    %     &\atob{n}{\jmpin{\R} \cdot \atr} \triangleq \jmpin{\R} \cdot \atob{0}{\atr}\\
    %     &\atob{n}{\jmpout{k}{\R} \cdot \atr} \triangleq \jmpout{n+k}{\R} \cdot \atob{0}{\atr}
    % \end{align*}
    \begin{mathpar}
        \inferrule
        {
            \atrsemstar{\D}{\initconf{C}{\M_M}}{\usilent \cdots \usilent \cdot \jmpin{\R}}{c}
        }
        {
            \btrsem{\D}{\initconf{C}{\M_M}}{\jmpin{\R}}{c}
        }

        \inferrule
        {
            \atrsemstar{\D}{\initconf{C}{\M_M}}{\usilent \cdots \usilent \cdot \cnv}{\haltconf}
        }
        {
            \btrsem{\D}{\initconf{C}{\M_M}}{\cnv}{\haltconf}
        }

        \inferrule
        {
            \exists c.\, \btrsem{\D}{c}{\jmpout{\dt}{\R'}}{c'}\\
            \atrsemstar{\D}{c'}{\usilent \cdots \usilent \cdot \jmpin{\R''}}{c''}
        }
        {
            \btrsem{\D}{c'}{\jmpin{\R''}}{c''}
        }

        \inferrule
        {
            \exists c.\, \btrsem{\D}{c}{\jmpout{\dt}{\R'}}{c'}\\
            \atrsemstar{\D}{c'}{\usilent \cdots \usilent \cdot \cnv}{\haltconf}
        }
        {
            \btrsem{\D}{c'}{\cnv}{\haltconf}
        }

        \inferrule
        {
            \exists c.\, \btrsem{\D}{c}{\jmpin{\R'}}{c'}\\
            \atrsemstar{\D}{c'}{\ta^{(0)} \cdots \ta^{(n-1)} \cdot \jmpout{k''}{\R''}}{c''}\\
            \forall 0 \leq i < n.\, \ta_i \notin \{ \jmpout{\_}{\_}, \cnv \}\\
            \dt = k'' + \sum_{i=0}^{n-1}\mi{time}(\ta^{(i)})
        }
        {
            \btrsem{\D}{c'}{\jmpout{\dt}{\R''}}{c''}
        }

        \inferrule
        {
            \exists c.\, \btrsem{\D}{c}{\jmpin{\R'}}{c'}\\
            \atrsemstar{\D}{c'}{\ta_0 \cdots \ta_{n-1}  \cdot \cnv}{\haltconf}\\
            \forall 0 \leq i < n.\, \ta_i \notin \{ \jmpout{\_}{\_}, \cnv \}\\
        }
        {
            \btrsem{\D}{c'}{\cnv}{\haltconf}
        }
    \end{mathpar}

    where
    \[
        \mi{time}(\ta) =
        \begin{cases}
            k & \text{ if } \ta \in \{ \reti{k}, \handle{k}, \psilent{k}, \jmpout{k}{\R} \}\\
            0 & \text{o.w.}
        \end{cases}
    \]
    \caption{Formal definition of relation $\btrarrow{\btr}$ for coarse-grained observables.}\label{fig:coarsegrained-obs}
\end{figure}

We eventually define when two modules are trace equivalent:
\begin{definition}\label{def:trace-eq}
Two modules are \emph{(coarse-grained) trace equivalent}, written $\M_M \ttreq \M_{M'}$, \emph{iff}
\[
    \ttr{\M_M} = \ttr{\M_{M'}}.
\]
\end{definition}

\medskip
\paragraph{Notation.}
If not specified, let $x \in \{1, 2\}$, in the rest of the report.
Moreover, beside using $c, c_1, c_2, \dots$, possibly dashed, to denote configurations, we will write
$c^{(n)}_x = \cfg{\B^{(n)}_x}{\delta^{(n)}_x}{t^{(n)}_x}{t^{(n)}_{a_x}}{\M^{(n)}_x}{\R^{(n)}_x}{\vopc^{(n)}_x}$ for the configuration reached after $n$ execution steps from the initial configuration $c^{(0)}_x$.
Similarly, the components of a context $C_x$ will be accordingly indexed.
Also, we will denote with $c^{(i)}_x$ the configuration \emph{right before} the action of index $i$ in a given fine or coarse-grained trace.

Finally, we define some notions and prove a property that will be of use in the rest of the report.
The first definition defines a partitioning of fine-grained traces in sub-traces that correspond to handling interrupts and those that are not.
We call \emph{(complete) interrupt segments} those starting with an $\handle{\cdot}$ action (in the $i^{th}$ position in the given trace) and ending with a $\reti{\cdot}$ action (in the $j^{th}$ position).
In this way the set of interrupt segments is a set of pairs $(i, j)$, as defined below.
\begin{definition}[Complete interrupt segments]\label{def:completely-handled-ints}
    Let $\atr = \ta_0 \ \cdots\ \ta_n$ be a fine-grained trace.
    The set  $\mb{I}_\atr$ of \emph{complete interrupt segments} of $\atr$ is defined as follows:
    \[
        \mb{I}_\atr \triangleq \{ (i, j) \mid \ta_i = \handle{k}\ \land\ \ta_j = \reti{k'}\ \land\ i < j \ \land\ \forall i < l < j .\ \ta_l = \usilent \}.
    \]
\end{definition}

The second definition expresses the time taken by the current protected-mode instruction in the given configuration to be executed.
\begin{definition}
    We define the \emph{length of the current protected-mode instruction} in configuration $c$ as
    \[
        \ilen{c} \triangleq \begin{cases}
            \cycles{\decode{\M}{\R[\rpc]}} & \text{if } \moderelPM{c} \,\land\, \B = \bot\\
            0 & \text{o.w.}
        \end{cases}
    \]
\end{definition}
\begin{property}\label{prop:btr-timings}
%    Let $\atr = \ta^{(0)} \cdots \ta^{(n-1)} \cdot \jmpout{k^{(n)}}{\R'}$.
    If $\moderelPM{c^{(0)}}$ and $\atrsemstar{\D}{c^{(0)}}{\atr}{c^{(n+1)}}$,
    with $\atr = \ta^{(0)} \cdots \ta^{(n-1)} \cdot \jmpout{k^{(n)}}{\R'}$,
    then
    $k + \sum_{i=0}^{n-1} \mi{time}(\ta^{(i)}) = \sum_{i=0}^{n} \ilen{c^{(i)}} + (11 + \MT)\cdot{|\mb{I}_{\atr}|}$.
\end{property}
\begin{proof}
    By definition of the interrupt logic and the operational semantics of \SL, for each interrupt handled in protected mode we perform a $0 \leq k \leq \MT$ padding \emph{before} invoking the interrupt service routine and an additional padding of $(\MT - k)$ cycles \emph{after} its execution, i.e., the padding time introduced for each complete interrupt segment amounts to $\MT$.
    Also, since the interrupt logic always requires $6$ cycles to jump to the interrupt service routine and $5$ cycles are required upon $\RETI$ it easily follows that:
    \[
        k + \sum_{i=0}^{n-1} \mi{time}(\ta^{(i)}) = \sum_{i=0}^{n} \ilen{c^{(i)}} + (11 + \MT)\cdot{|\mb{I}_{\atr}|}.
    \]
\end{proof}

Before we move to the actual proof of preservation of behaviours, it is convenient introducing two relations (actually, two equivalences) between configurations and to establish a number of useful properties.
Roughly, the equivalences holds two configurations cannot be kept apart by looking at those parts that can be inspected when the CPU is operating in either protected mode or unprotected mode, respectively.
\begin{definition}
    We say that two configurations are \emph{$P$-equivalent} (written $c \peq c'$) \emph{iff}
    \begin{align*}
        &(c = c' = \haltconf)\ \lor\\
        &(c = \cfg{\B}{\delta}{t}{t_a}{\M}{\R}{\vopc}\ \land\
        c' = \cfg{\B'}{\delta'}{t'}{t'_a}{\M'}{\R'}{\vopcp} \ \land\
        \M \pmem \M' \ \land\\
        &\qquad\qquad\qquad
        \moderel{\vopc}\ \land \ \moderel{\vopcp} \ \land\
        \R \preg{m}  \R' \ \land\
        \B \bowtie \B')\\
    \end{align*}
    where
    \begin{itemize}
        \item $\M \pmem \M'$ iff $\forall l \in [\mi{ts}, \mi{te}) \cup [\mi{ds}, \mi{de}).\ M[l] = \M'[l]$.
        \item $\R \preg{m} \R'$ iff $(\mtt{m} = \mtt{PM} \implies \R = \R')$
        \item $\B \bowtie \B'$ iff $(\B \neq \BBot \ \land\  \B' \neq \BBot) \lor (\B = \B' = \BBot)$.
    \end{itemize}
\end{definition}
\begin{definition}
    We say that two configurations are \emph{$U$-equivalent} (written $c \ueq c'$) \emph{iff}
    \begin{align*}
        &(c = c' = \haltconf)\ \lor\\
        &(c = \cfg{\B}{\delta}{t}{t_a}{\M}{\R}{\vopc}\ \land\
        c' = \cfg{\B'}{\delta'}{t'}{t'_a}{\M'}{\R'}{\vopcp} \ \land\
        \M \umem \M' \ \land\\
        &\qquad\qquad\qquad
        \moderel{c}\ \land \ \moderel{c'} \ \land\ \delta = \delta' \ \land\ t = t' \ \land\ t_a = t'_a \ \land\
        \R \ureg{m}  \R' \ \land\
        \B \bowtie \B')\\
    \end{align*}
    where
    \begin{itemize}
        \item $\M \umem \M'$ iff $\forall l \not\in [\mi{ts}, \mi{te}) \cup [\mi{ds}, \mi{de}).\ M[l] = \M'[l]$
        \item $\R \ureg{m} \R'$ iff $(\mtt{m} = \mtt{UM} \implies \R = \R') \ \land\ \R[\rsr.\mtt{GIE}] = \R'[\rsr.\mtt{GIE}]$
        \item $\B \bowtie \B'$ iff $(\B \neq \BBot \ \land\  \B' \neq \BBot) \lor (\B = \B' = \BBot)$.
    \end{itemize}
\end{definition}
\begin{property}
    Both $\peq$ and $\ueq$ are equivalence relations.
\end{property}
\begin{proof}
    Trivial.
\end{proof}

\subsubsection{Properties of \texorpdfstring{$P$}{P}-equivalence}
    % !TEX root = ../report.tex

The first property says that if a configuration can take a step, also another P-equivalent configuration can.
\begin{property}\label{prop:peq-samedecode}
    If $c_1 \peq c_2$, $\moderelPM{c_1}$, $\slmaind{\D'}{c_1}{c'_1}$ then $\decode{\M_1}{\R_1[\rpc]} = \decode{\M_2}{\R_2[\rpc]}$ and $\slmaind{\D'}{c_2}{c'_2}$.
\end{property}
\begin{proof}
    Since $c_1 \peq c_2$ and $\moderelPM{c_1}$, it also holds that $\moderelPM{c_2}$.
    Also, the instruction $\decode{\M_1}{\R_1[\rpc]}$ is decoded in both $\M_1$ and $\M_2$ at the same protected address, hence $\decode{\M_1}{\R_1[\rpc]} = \decode{\M_2}{\R_2[\rpc]}$, and $\slmaind{\D'}{c_2}{c'_2}$.
\end{proof}
\begin{property}\label{prop:peq-main-peq}
    If $c_1 \peq c_2$, $\moderelPM{c_1}$, $\slmaind{\D}{c_1}{c'_1}$, $\slmaind{\D'}{c_2}{c'_2}$ and $\B'_1 \bowtie \B'_2$ then $c'_1 \peq c'_2$.
\end{property}
\begin{proof}
    Since $c_1 \peq c_2$, $\moderelPM{c_1}$ and $\slmaind{\D}{c_1}{c'_1}$, by Property~\ref{prop:peq-samedecode}, $i = \decode{\M_1}{\R_1[\rpc]} = \decode{\M_2}{\R_2[\rpc]}$ and $\slmaind{\D'}{c_2}{c'_2}$.

    Sinc $\B'_1 \bowtie \B'_2$, we have two cases:
    \begin{enumerate}
        \item \emph{Case $\B'_1 = \B'_2 = \BBot$.}
            In this case we know that no interrupt handling started during the step, and by exhaustive cases on $i$ we can show $c'_1 \peq c'_2$:
            \begin{itemize}
                \item \emph{Case $i \in \{ \HLT,  \IN{r}, \OUT{r} \}$}.
                    In both cases we have $c'_1 = \exconf{c_1} \peq \exconf{c_2} = c'_2$.
                \item \emph{Otherwise}.
                    The relevant values in $c'_1$ and $c'_2$ just depend on values that coincide also in $c_1$ and $c_2$.
                    Hence, by determinism of the rules, we get $c'_1 \peq c'_2$.
            \end{itemize}
        \item \emph{Case $\B'_1 \neq \BBot$ and $\B'_2 \neq \BBot$}.
            In this case an interrupt was handled, but the same instruction was indeed executed in protected mode, hence $\M'_1 \pmem \M'_2$.
            Also, $\R'_1 \preg{UM} \R'_2$ holds trivially, $\B'_1 \bowtie \B'_2$ by hypothesis and $\moderelUM{\vopcp_1}$ and $\moderelUM{\vopcp_2}$.
            Thus, $c'_1 \peq c'_2$.
    \end{enumerate}
\end{proof}

Some sequences of fine-grained traces preserve $P$-equivalence.
\begin{property}\label{prop:peq-epsilon-jmpin-peq}
    If $c_1 \peq c_2$,
    $\atrsemstar{\D}{c_1}{\soverbrace{\usilent\ \cdots\ \usilent}^{\ell_1}}{c'_1 \atrarrow{\jmpin{\R}} c''_1}$,
    $\atrsemstar{\D'}{c_2}{\soverbrace{\usilent\ \cdots\ \usilent}^{\ell_2}}{c'_2 \atrarrow{\jmpin{\R}} c''_2}$,
    then $c''_1 \peq c''_2$.
\end{property}
\begin{proof}
    We show by Noetherian induction over $(\ell_1, \ell_2)$ that $\M'_1 \pmem \M'_2$.
    For that, we use well-founded relation $(\ell_1, \ell_2) \prec (\ell'_1, \ell'_2)$ \emph{iff} $\ell_1 < \ell'_1 \land \ell_2 < \ell'_2$.

    \begin{itemize}
        \item \emph{Case $(0, 0)$.}
            Trivial.
        \item \emph{Case $(0, \ell_2)$, with $\ell_2 > 0$. (and symmetrically $(\ell_1, 0)$, with $\ell_1 > 0$)}
            We have to show that
            \[
                \atrsemstar{\D}{c_1}{\emptystr}{c'_1}
                \land
                \atrsemstar{\D'}{c_2}{\soverbrace{\usilent\ \cdots\ \usilent}^{\ell_2}}{c'_2}
                \Rightarrow
                \M'_1 \pmem \M'_2
            \]
            Since from $c_1$ there is no step, $c_1 = c'_1$.
            Moreover a sequence of $\usilent$ was observed starting from $c_2$, and since both configurations are in unprotected mode and no violation occurred (see Table~\ref{tab:mac-rep}) the protected memory is unchanged.
            Thus, by transitivity of $\pmem$, we have $\M'_1 = \M_1 \pmem \M_2 \pmem \M'_2$.
        \item \emph{Case $(\ell_1, \ell_2) = (\ell'_1 + 1, \ell'_2 + 1)$.}
            If
            \[
                \atrsemstar{\D}{c_1}{\soverbrace{\usilent\ \cdots\ \usilent}^{\ell'_1}}{c'''_1}
                \land
                \atrsemstar{\D'}{c_2}{\soverbrace{\usilent\ \cdots\ \usilent}^{\ell'_2}}{c'''_2}\Rightarrow
                \M'''_1 \pmem \M'''_2 \text{ (IHP)}
            \]
            then
            \[
                \atrsemstar{\D}{c_1}{\soverbrace{\usilent\ \cdots\ \usilent}^{\ell'_1}}{c'''_1 \atrarrow{\usilent} c'_1}
                \land
                \atrsemstar{\D'}{c_2}{\soverbrace{\usilent\ \cdots\ \usilent}^{\ell'_2}}{c'''_2 \atrarrow{\usilent} c'_2}\Rightarrow
                \M'_1 \pmem \M'_2.
            \]

            By (IHP) we know that $\M'''_1 \pmem \M'''_2$.
            Indeed, since we observed $\usilent$ it means that $\moderel{\vopc'_1} \ \land\ \moderel{\vopc'_2}$.
            Moreover (see Figure~\ref{fig:finegrained-obs}) since $\usilent$ was observed starting from $c'''_1$ and from $c'''_2$ and since both configurations are in unprotected mode, protected memory is unchanged.
            Thus, $\M'_1 \pmem \M'''_1 \pmem \M'''_2 \pmem \M'_2$.
    \end{itemize}

    Since the instruction generating $\ta = \jmpin{\R}$ was executed in unprotected mode, we have that $\M''_1 \pmem \M''_2$.
    Also $\R''_1 = \R \preg{PM} \R = \R''_2$, $\moderelUM{\vopcp''_1}$, $\moderelUM{\vopcp''_2}$ and $\B''_1 \bowtie \B''_2$.
\end{proof}

\begin{property}\label{prop:peq-handle-reti-peq}
    If $c_1 \peq c_2$,
    $\atrsemstar{\D}{c_1}{\handle{k_1}}{c'_1 \atrarrowstar{\soverbrace{\usilent\ \cdots\ \usilent}^{\ell_1}}{c''_1 \atrarrow{\reti{k'_1}} c'''_1}}$,\\
    $\atrsemstar{\D'}{c_2}{\handle{k_2}}{c'_2 \atrarrowstar{\soverbrace{\usilent\ \cdots\ \usilent}^{\ell_2}}{c''_2 \atrarrow{\reti{k'_2}} c'''_2}}$,
    then $c'''_1 \peq c'''_2$.
\end{property}
\begin{proof}
    Since upon observation of $\handle{k_x}$ the protected memory cannot be modified, we know that $\M'_1 \pmem \M'_2$.

    We show by Noetherian induction over $(\ell_1, \ell_2)$ that $\M''_1 \pmem \M''_2$.
    For that, we use well-founded relation $(\ell_1, \ell_2) \prec (\ell'_1, \ell'_2)$ \emph{iff} $\ell_1 < \ell'_1 \land \ell_2 < \ell'_2$.

    \begin{itemize}
        \item \emph{Case $(0, 0)$.}
            Trivial.
        \item \emph{Case $(0, \ell_2)$, with $\ell_2 > 0$ (and symmetrically $(\ell_1, 0)$, with $\ell_1 > 0$).}
            We have to show that
            \[
                \atrsemstar{\D}{c'_1}{\emptystr}{c''_1}
                \land
                \atrsemstar{\D'}{c'_2}{\soverbrace{\usilent\ \cdots\ \usilent}^{\ell_2}}{c''_2}
                \Rightarrow
                \M''_1 \pmem \M''_2
            \]
            Since from $c'_1$ there is no step, $c''_1 = c'_1$.
            Moreover a sequence of $\usilent$ was observed starting from $c'_2$, and since both configurations are in unprotected mode and no violation occurred (see Table~\ref{tab:mac-rep}) the protected memory is unchanged.
            Thus, by transitivity of $\pmem$, we have $\M''_1 = \M'_1 \pmem \M'_2 \pmem \M''_2$.
        \item \emph{Case $(\ell_1, \ell_2) = (\ell'_1 + 1, \ell'_2 + 1)$.}
                If
                \[
                    \atrsemstar{\D}{c'_1}{\soverbrace{\usilent\ \cdots\ \usilent}^{\ell'_1}}{c^{iv}_1}
                    \land
                    \atrsemstar{\D'}{c'_2}{\soverbrace{\usilent\ \cdots\ \usilent}^{\ell'_2}}{c^{iv}_2}\Rightarrow
                    \M^{iv}_1 \pmem \M^{iv}_2 \text{ (IHP)}
                \]
                then
                \[
                    \atrsemstar{\D}{c'_1}{\soverbrace{\usilent\ \cdots\ \usilent}^{\ell'_1}}{c^{iv}_1 \atrarrow{\usilent} c''_1}
                    \land
                    \atrsemstar{\D'}{c'_2}{\soverbrace{\usilent\ \cdots\ \usilent}^{\ell'_2}}{c^{iv}_2 \atrarrow{\usilent} c''_2}\Rightarrow
                    \M''_1 \pmem \M''_2.
                \]

                By (IHP) we know that $\M^{iv}_1 \pmem \M^{iv}_2$.
                Indeed, since we observed $\usilent$ it means that $\moderelUM{\vopc''_1} \ \land\ \moderelUM{}{\vopc''_2}$.
                Moreover (see Figure~\ref{fig:finegrained-obs}) since $\usilent$ was observed starting from $c^{iv}_1$ and from $c^{iv}_2$ and since both configurations are in unprotected mode, no violation occurred and by Table~\ref{tab:mac-rep} protected memory is unchanged.
                Thus, by transitivity of $\pmem$, we have $\M''_1 \pmem \M^{iv}_1 \pmem \M^{iv}_2 \pmem \M''_2$.
    \end{itemize}

    Thus, we have that $\M'''_1 \pmem \M'''_2$, since $\ta = \reti{\cdot}$ does not modify protected memory.
    Also $\R'''_1 \preg{UM} \R'''_2$, $\B'''_1 \bowtie \B'''_2$, $\moderelUM{\vopcp_1}$ and $\moderelUM{\vopcp_2}$, by definition of $\ta = \reti{\cdot}$.
\end{proof}

\begin{property}\label{prop:peq-alpha-alphapeq}
    If $c_1 \peq c_2$,
    $\moderelPM{c_1}$,
    $\atrsem{\D}{c_1}{\ta_1}{c'_1}$,
    $\atrsem{\D'}{c_2}{\ta_2}{c'_2}$,
    $\ta_1, \ta_2 \neq \handle{\cdot}$
    then
    $\ta_1 = \ta_2$ and $c'_1 \peq c'_2$.
\end{property}
\begin{proof}
    By definition of fine-grained traces we know that the transition leading to the observation of $\ta_1$ happens upon the execution of an instruction that must also be executed starting from $c_2$ (by Property~\ref{prop:peq-samedecode}) and that $c'_1 \peq c'_2$ (by Property~\ref{prop:peq-main-peq}).
    Also, since $\moderelPM{c_1}$, we know that $\ta_1 \in \{ \psilent{k_1}, \jmpout{k_1}{\R_1} \}$.
    Thus, in both cases and since by hypothesis $\ta_2 \neq \handle{\cdot}$, it must be that $\ta_2 = \ta_1$.
\end{proof}

\begin{property}\label{prop:peq-swap-ctx}
    If
    $c_1 \peq c_2$,
    $\atrsemstar{\D}{c_1}{\psilent{k^{(0)}_1} \,\cdots\, \psilent{k^{(n_1-1)}_1} \cdot \ta_1}{c'_1}$,
    $\atrsemstar{\D'}{c_2}{\psilent{k^{(0)}_2} \,\cdots\, \psilent{k^{(n_2-1)}_2} \cdot \ta_2}{c'_2}$,
    and $\ta_1, \ta_2 \neq \handle{\cdot}$
    then
    $\psilent{k^{(0)}_1} \cdots \psilent{k^{(n_1-1)}_1} \cdot \ta_1 = \psilent{k^{(0)}_2} \cdots \psilent{k^{(n_2-1)}_2} \cdot \ta_2$ and $c'_1 \peq c'_2$.
\end{property}
\begin{proof}
    Corollary of Property~\ref{prop:peq-alpha-alphapeq}.
\end{proof}

\noindent
$P$-equivalence is preserved by complete interrupt segments (recall Definition~\ref{def:completely-handled-ints}).
Indeed, from now onwards denote
\begin{align*}
\atr_x \,\in\,
    &\{\emptystr\}\, \cup \\
    &\{ \ta^{(0)}_x \cdots \ta^{(n_x - 1)}_x \mid n_x \geq 1 \, \land\, \ta^{(n_x - 1)}_x = \reti{k^{(n_x - 1)}_x} \, \land\, \\
    &\qquad\qquad\qquad\qquad\forall i.\, 0 \leq i \leq n_x - 1.\, \ta^{(i)}_x \notin \{ \cnv, \jmpin{\R^{(i)}_x}, \jmpout{k^{(i)}_x}{\R^{(i)}_x} \} \}.
\end{align*}
% and
% \begin{align*}
%     \atr'_x &= \psilent{k^{(n_x)}_x} \cdots \psilent{k^{(n_x + m_x - 1)}_x} \cdot \ta'_x
% \end{align*}
% with $\ta'_x \in \{ \jmpout{k^{(n_x + m_x)}_x}{\R_x}, \handle{k^{(n_x + m_x)}_x} \}$.
%
\begin{property}\label{prop:peq-pres-handlereti}
    Let $\D$ and $\D'$ be two devices.

    \noindent
    If
    $c^{(0)}_1 \peq c^{(0)}_2$,
    $\atrsem{\D}{c_1}{\jmpin{\R}}
    {
        c^{(0)}_1
        \atrarrowstar{\atr_1}
        c^{(n_1)}_1
        % \atrarrow{\atr'_1}
        % c'_1
    }$ and
    $\atrsem
    {\D'}
    {c_2}
    {\jmpin{\R}}
    {
        c^{(0)}_2
        \atrarrowstar{\atr_2}
        c^{(n_2)}_2
        % \atrarrow{\atr'_2}
        % c'_2
    }$
    then
    $c^{(n_1)}_1 \peq c^{(n_2)}_2$.
\end{property}
\begin{proof}
    % Note that, when $n_x = 0$ the sub-traces $\atr_x$ are empty.
    % Otherwise, since both end with $\reti{k^{(n_x - 1)}_x}$ and since $\moderelPM{c^{(0)}_1}$ (hence, $\moderelPM{c^{(0)}_2}$ by $\peq$), any $\handle{\cdot}$ has a corresponding $\reti{\cdot}$ forming a complete interrupt segment.

    % \noindent
    We first show by induction on $|\mb{I}_{\atr_1}|$ (see Definition~\ref{def:completely-handled-ints}) that
    \begin{align*}
        \atrsemstar{\D}{c^{(0)}_1}{\atr_1}{c^{(n_1)}_1}
        \ \land\
        \atrsemstar{\D'}{c^{(0)}_2}{\atr_2}{c^{(n_2)}_1}
        \Rightarrow
        c^{(n_1)}_1 \peq c^{(n_2)}_2
    \end{align*}
    assuming wlog that ${|\mb{I}_{\atr_2}|} \leq {|\mb{I}_{\atr_1}|}$.

    \begin{itemize}
        \item \emph{Case ${|\mb{I}_{\atr_1}|} = 0$.}
            Trivial.
        \item \emph{Case ${|\mb{I}_{\atr_1}|} = {|\mb{I}_{\atr'_1}|} + 1$.}
            If
            \begin{align*}
                \atrsemstar{\D}{c^{(0)}_1}{\atr'_1}{c^{(n'_1)}_1} \ \land\ \atrsemstar{\D'}{c^{(0)}_2}{\atr'_2}{c^{(n'_2)}_2}
                \Rightarrow
                c^{(n'_1)}_1 \peq c^{(n'_2)}_2 \text{ (IHP)}
            \end{align*}
            then
            \begin{align*}
                \atrsemstar{\D}{c^{(0)}_1}{\atr_1}{c^{(n_1)}_1} \ \land\ \atrsemstar{\D'}{c^{(0)}_2}{\atr_2}{c^{(n_2)}_2}
                \Rightarrow
                c^{(n_1)}_1 \peq c^{(n_2)}_2
            \end{align*}
            Now let $(i_1, j_1)$ be the new interrupt segment of $\atr_1$ that we split it as follows:
            \begin{align*}
                \atr_1 =\ \atr'_1 \cdot \psilent{k^{(n'_1)}_1} \cdots \psilent{k^{(i_1-1)}_1} \cdot \handle{k^{(i_1)}_1} \cdots \reti{k^{(j_1)}_1}
            \end{align*}
            The following two exhaustive cases may arise.
                \begin{enumerate}
                    \item \emph{Case ${|\mb{I}_{\atr_1}|} = {|\mb{I}_{\atr_2}|}$.}
                        For some $(i_2, j_2)$ we then have:
                        \begin{align*}
                            \atr_2 =\ \atr'_2 \cdot
                            \psilent{k^{(n'_2)}_2} \cdots \psilent{k^{(i_2-1)}_2} \cdot
                            \handle{k^{(i_2)}_2} \cdots \reti{k^{(j_2)}_2}
                        \end{align*}
                        By Properties~\ref{prop:peq-swap-ctx} and~\ref{prop:peq-handle-reti-peq} we know that $c^{(n_1)}_1 \peq c^{(n_2)}_2$, being reached through $\ta^{(j_1)}_1$ and $\ta^{(j_2)}_2$.
                    \item \emph{Case ${|\mb{I}_{\atr_2}|} < {|\mb{I}_{\atr_1}|}$.}
                        In this case we have
                        \begin{align*}
                            \atr_2 =\ \atr'_2 \cdot
                            \psilent{k^{(n'_2)}_2} \cdots \psilent{k^{(n_2 - 2)}_2} \cdot \psilent{k^{(n_2 - 1)}_2}
                        \end{align*}
                        with $c^{\ell}_1 \peq c^{\ell}_2$ for $n'_2 \leq \ell \leq n_2 - 2 = i_1-1$,
                        where the last equality holds because the module is executing from configurations that are $P$-equivalent.
                        As soon as the interrupt arrives, the same instruction is executed (Property~\ref{prop:peq-samedecode}) that causes the same changes in the registers, the old program counter and the protected memory.
                        In turn the first two are stored in the backup before handling the interrupt.
                        They are then restored by the $\RETI$, observed as $\ta^{(j_1)}_1$, while the protected memory is left untouched.
                        Consequently, we have that $c^{(n_1)}_1 \peq c^{(n_2)}_2$, that are the configurations reached through $\ta^{(j_1)}_1$ and $\psilent{k^{(n_2)-1}_2}$.
                \end{enumerate}
    \end{itemize}
\end{proof}

Finally, we can show that $P$-equivalence is preserved by coarse-grained traces:
\begin{property}\label{prop:init-jmpin}
    If
    $\btrsem{\D}{\initconf{C}{\M_M}}{\jmpin{\R}}{c_1}$ and
    $\btrsem{\D'}{\initconf{C'}{\M_M}}{\jmpin{\R}}{c_2}$
    then $c_1 \peq c_2$.
\end{property}
\begin{proof}
    By definition of coarse-grained traces, we have that in both premises $\jmpin{\R}$ is preceded by a sequence of $\usilent$ actions (possibly in different numbers).
    Since neither $\usilent$ actions nor $\jmpin{\R}$ ever change the protected memory (by definition of memory access control) and since the $\jmpin{\R}$ sets the registers to the values in $\R$, it follows that $c_1 \peq c_2$.
\end{proof}

The following definition gives an equality up to timings among coarse-grained traces:
\begin{definition}\label{def:btr-approx-btrp}
    Let $\btr = \tb_0 \ldots \tb_n$ and $\btr' = \tb'_0 \ldots \tb'_{n'}$ be two coarse-grained traces.
    We say that $\btr$ is \emph{equal up to timings} to $\btr'$ (written $\btr \tapprox \btr'$) \emph{iff}
    \[
        n = n' \land (\forall i \in \{ 0, \ldots, n \} .\, \tb_i = \tb'_i \lor (\tb_i = \jmpout{\dt}{\R} \land \tb'_i = \jmpout{\dt'}{\R})).
    \]
\end{definition}

and the following property shows that if traces that are equal up to timings preserve $P$-equivalence:
\begin{property}\label{prop:peq-pres-btr}
    If
    $c_1 \peq c_2$,
    $\btrsemstar{\D}{c_1}{\btr}{c'_1}$,
    $\btrsemstar{\D'}{c_2}{\btr'}{c'_2}$ and
    $\btr \tapprox \btr'$
    then $c'_1 \peq c'_2$.
\end{property}
\begin{proof}
    The thesis easily follows from Property~\ref{prop:peq-epsilon-jmpin-peq} and Property~\ref{prop:peq-pres-handlereti}.
    % \alert
    % {
    %     TODO: be more explicit and go by induction, i.e.,
    %     \begin{itemize}
    %         \item If $\btr = \emptystr$, then trivial.
    %         \item If $\btr = \btr' \cdot \tb$, then by cases:
    %         \begin{itemize}
    %             \item If $\tb = \cnv$, trivial by definition of $\cnv$
    %             \item If $\tb = \jmpin{\R}$ easy by Property~\ref{prop:peq-epsilon-jmpin-peq}
    %             \item If $\tb = \jmpout{\dt}{\R}$ then by Property~\ref{prop:peq-pres-handlereti} we know that $P$-equivalence is preserved up to the configurations right after the last $\reti{\cdot}$. Property~\ref{prop:peq-swap-ctx} finally guarantees that $c_1 \peq c_2$.
    %         \end{itemize}
    %     \end{itemize}
    % }
\end{proof}

\subsubsection{Properties of \texorpdfstring{$U$}{U}-equivalence}
    % !TEX root = ../report.tex
Also for U-equivalent configurations it holds that when one takes a step, also the other does.
\begin{property}\label{prop:ueq-samedecode}
    If $c_1 \ueq c_2$, $\moderelUM{c_1}$ then $\decode{\M_1}{\R_1[\rpc]} = \decode{\M_2}{\R_2[\rpc]}$.
\end{property}
\begin{proof}
    Since $c_1 \ueq c_2$ and $\moderelUM{c_1}$, it also holds that $\moderelUM{c_2}$.
    Also, the instruction $\decode{\M_1}{\R_1[\rpc]}$ is decoded in both $\M_1$ and $\M_2$ at the same unprotected address, hence $\decode{\M_1}{\R_1[\rpc]} = \decode{\M_2}{\R_2[\rpc]}$.
\end{proof}

Next we prove that $\ueq$ is preserved by unprotected-mode steps of the \SL operational semantics:
\begin{property}\label{prop:ueq-main-ueq}
    If $c_1 \ueq c_2$, $\moderelUM{c_1}$ and $\slmain{c_1}{c'_1}$, then $\slmain{c_2}{c'_2} \ \land\ c'_1 \ueq c'_2$.
\end{property}
\begin{proof}
    Since $c_1 \ueq c_2$, $\moderelUM{c_1}$ and $\slmain{c_1}{c'_1}$, by Property~\ref{prop:ueq-samedecode}, $i = \decode{\M_1}{\R_1[\rpc]} = \decode{\M_2}{\R_2[\rpc]}$.

    To show that $c'_1 \ueq c'_2$, we consider the following exhaustive cases:
    \begin{itemize}
        \item \emph{Case $i = \bot$.}
            Since $c_1 \ueq c_2$ we get $\moderelUM{c_2}$ and by definition of $\slmaind{\cdot}{\cdot}{\cdot}$ we get $c'_1 = \exconf{c_1}$ and $c'_2 = \exconf{c_2}$.
            However, by definition of $\exconf{\cdot}$, we have that $\M'_1 \umem \M'_2$, $\moderelUM{c'_1}$, $\moderelUM{c'_2}$, $\delta'_1 = \delta_1 = \delta_2 = \delta'_2$, $t'_1 = t_1 = t_2 = t'_2$, $t'_{a_1} = t_{a_1} = t_{a_2} = t'_{a_2}$, $\R'_1 \ureg{m} \R'_2$, and $\BBot = \B'_1 \bowtie \B'_2 = \BBot$, i.e., $c'_1 \ueq c'_2$.
        \item \emph{Case $i = \HLT$.}
            Trivial, since $c'_1 = \haltconf = c'_2$.
        \item \emph{Case $i \neq \bot$.}
            We have the following exhaustive sub-cases, depending on $c_1'$:
            \begin{itemize}
                \item \emph{Case $c'_1 = \exconf{c_1}$. }
                    In this case a violation occurred, i.e., $\nmacrel{i}{\vopc_1}{\R_1}{\B_1}$.
                    However, the same violation also occurs for $c_2$, since the only parts that may keep $c_1$ apart from $c_2$ are $\vopc$ and $\B$, and thus $c'_1 \ueq c'_2$ because:
                    \begin{itemize}
                        \item $\vopc_2 \neq \vopc_1$, cannot cause a failure since unprotected code is executable from anywhere,
                        \item $\B_1 = \langle \R_1, \vopc_1, t_{\mi{pad}_1} \rangle \neq \langle \R_2, \vopc_2, t_{\mi{pad}_2} \rangle = \B_2$, cannot cause a failure since the additional conditions on the configuration imposed by the memory access control only concern values that are the same in both configurations.
                    \end{itemize}

                \item \emph{Case $c'_1 \neq \exconf{c_1}$ and $i = \RETI$.}
                    If $\B_1 = \BBot$, then $\B_1 = \B_2 = \B'_1 = \B'_2 = \BBot$, hence rule~\slrulename{(CPU-Reti)} of Figure~\ref{fig:sl-maints2} applies and we get $c'_1 \ueq c'_2$ since $\R'_1 = \R'_2$ and $\dwrap{\cdot}{\cdot}{\cdot}$ is a deterministic relation (Property~\ref{prop:dwrap-deterministic}).
                    If $\B_1 \neq \BBot$ it must also be that $\B_2 \neq \BBot$ by $U$-equivalence, so either rule~\slrulename{(CPU-Reti-Chain)} or rule~\slrulename{(CPU-Reti-PrePad)} applies.
                    In the first case we get $c'_1 \ueq c'_2$ because $c_1 \ueq c_2$ and by determinism of $\dwrap{\cdot}{\cdot}{\cdot}$ and $\slint{\D}{\cdot}{\cdot}$.
                    In the second case we get $c'_1 \ueq c'_2$ since $\langle \bot, \bot, t'_{\mi{pad}_1} \rangle = \B'_1 \bowtie \B'_2 = \langle \bot, \bot, t'_{\mi{pad}_2} \rangle$ and $\R'_1 \ureg{PM} \R'_2$ holds since we restored the register files from backups in which the interrupts were enabled (otherwise the CPU would not have handled the interrupt it is returning from).
                \item \emph{Case $c'_1 \neq \exconf{c_1}$ and $i \not\in \{\bot, \HLT, \RETI \}$.}
                    All the other rules depend on both
                    $(i)$ parts of the configurations that are equal due to $c_1 \ueq c_2$, and on
                    $(ii)$ $\dwrap{5}{\cdot}{\cdot}$ and $\slint{\D}{\cdot}{\cdot}$ which are deterministic and have the same inputs (since $c_1 \ueq c_2$).
                    Hence, $c'_1 \ueq c'_2$ as requested.
            \end{itemize}
        \end{itemize}
\end{proof}

%The above property carries on fine-grained traces (made of a single observable):
The above property carries on fine-grained traces, provided that the computation is carried on in unprotected mode:
\begin{property}\label{prop:ueq-finesingle-ueq}
    If $c_1 \ueq c_2$,
    $\moderelUM{c_1}$,
    $\atrsem{\D}{c_1}{\ta}{c'_1}$
    then
    $\atrsem{\D}{c_2}{\ta}{c'_2}$
    and
    $c'_1 \ueq c'_2$.
\end{property}
%u
\begin{proof}
    Properties~\ref{prop:ueq-samedecode} and~\ref{prop:ueq-main-ueq} guarantee that $c'_1 \ueq c'_2$ and $i = \decode{\M_1}{\R_1[\rpc]} = \decode{\M_2}{\R_2[\rpc]}$.
    Thus, since the same $i$ is executed under $U$-equivalent configurations and since $c'_1 \ueq c'_2$, we have that $\atrsem{\D}{c_2}{\ta}{c'_2}$.
\end{proof}
\begin{property}\label{prop:ueq-fine-ueq}
    If $c_1 \ueq c_2$,
    $\moderelUM{c_1}$,
    $\atrsemstar{\D}{c_1}{\usilent\cdots\usilent\cdot\ta}{c'_1}$ and
    $\ta \in \{ \usilent, \cnv, \jmpin{\R}, \reti{k} \}$
    then
    $\atrsemstar{\D}{c_2}{\usilent\cdots\usilent\cdot\ta}{c'_2}$
    and
    $c'_1 \ueq c'_2$.
\end{property}
\begin{proof}
    The proof goes by induction on the length $n$ of $\usilent\cdots\usilent$.
    \begin{itemize}
        \item \emph{Case $n = 0$.} Property~\ref{prop:ueq-finesingle-ueq} applies.
        \item \emph{Case $n' = n + 1$.}
            By induction hypothesis for some $c'''_1$, $c'''_2$, $c''_1$ and $c''_2$ we have $\atrsem{\D}{c_1}{\soverbrace{\usilent\cdots\usilent}^{n'}}{ c'''_1 \atrarrow{\ta} c''_1}$, $\atrsem{\D}{c_2}{\soverbrace{\usilent\cdots\usilent}^{n'}}{c'''_2 \atrarrow{\ta} c''_2}$ and $c''_1 \ueq c''_2$.
            Thus, if $\atrsem{\D}{c'''_1}{\usilent}{c^{iv}_1}$ (i.e., we observe a further $\usilent$ starting from $c_1$), by Property~\ref{prop:ueq-finesingle-ueq} we get $\atrsem{\D}{c'''_2}{\usilent}{c^{iv}_2}$ and $c^{iv}_1 \ueq c^{iv}_2$.
            Finally, by Property~\ref{prop:ueq-finesingle-ueq} applies on $c^{iv}_1$ and $c^{iv}_2$ we get the thesis.
    \end{itemize}
\end{proof}

Now we move our attention to $\handle{\cdot}$.
\begin{property}\label{prop:um-pres-tauhandle}
    If $c_1^{(0)} \ueq c_2^{(0)}$,
    $\atrsemstar{\D}{c_1^{(0)}}{\psilent{k_1^{(0)}} \ \cdots\  \psilent{k_1^{(n_1 - 1)}} \cdot \handle{k_1^{(n_1)}}}{c_1^{(n_1+1)}}$ and\\
    $\atrsemstar{\D}{c_2^{(0)}}{\psilent{k_2^{(0)}} \ \cdots\  \psilent{k_2^{(n_2 - 1)}} \cdot \handle{k_2^{(n_2)}}}{c_2^{(n_2+1)}}$
    then
    $c_1^{(n_1+1)} \ueq c_2^{(n_2+1)}$.
\end{property}
\begin{proof}
    \begin{itemize}
        \item By definition of fine-grained semantics, $\handle{k_x^{(n_x)}}$ only happens when an interrupt is handled with $c_x^{(n_x)}$ in protected mode.
        \item By definition of $\slint{\D}{\cdot}{\cdot}$, $\R^{(n_1+1)}_1 = \R^{(n_2+1)}_2 = \R_0 [\rpc \mapsto \mi{isr}]$.
        \item Since unprotected memory cannot be changed by protected mode actions without causing a violation (that would cause the observation of a $\jmpout{\cdot}{\cdot}$) and is not changed upon $\RETI$ when it happens in a configuration with backup different from $\BBot$ (cf. rules~\slrulename{(CPU-Reti-*)}), $\M^{(n_1+1)}_1 \umem \M^{(n_2+1)}_2$.
        \item Since we observe $\handle{k_x^{(n_x)}}$ it must be that $\mtt{GIE} = \mtt{1}$ and it had to be such also in $c^{(0)}_x$ (because by definition the operations on registers cannot modified this flag in protected mode).
        Hence, $t^{i}_{a_x} = \bot$ for $0 \leq i \leq n_x$.
        Let $t^\mi{int}_{a_1}$ and $t^\mi{int}_{a_2}$ be the arrival times of the interrupt that originated the observations $\handle{k_1^{(n_1)}}$ and $\handle{k_2^{(n_2)}}$, resp.
        By definition of $\dwrap{\cdot}{\cdot}{\cdot}$, $t^\mi{int}_{a_1}$ and $t^\mi{int}_{a_2}$ are the first absolute times after $t^{(n_1)}_1$ and $t^{(n_2)}_2$ in which an interrupt was raised and, since $\D$ is deterministic and $t^{(i)}_{a_x} = \bot$ for $0 \leq i \leq n_x$, it must be that $t^\mi{int}_{a_1} = t^\mi{int}_{a_2} = t^\mi{int}$ (recall that $c_1^{(0)} \ueq c_2^{(0)}$ and that $\IN{}\hspace{-0.5em}$ or $\OUT{}\hspace{-0.5em}$ instructions are forbidden in protected mode).

        Assume now that the instruction during which the interrupt occurred ended at time $t^f_x$.
        Then we can write $t^{(n_x+1)}$ as:
                \begin{align*}
                    t^{(n_x+1)} = t^{(n_x)} + k_x^{(n_x)} &=
                    t^{(n_x)} + \underbrace{t^\mi{int} - t^{(n_x)} + t^f_x - t^\mi{int}}_{\text{Duration of the instruction}} + \underbrace{\MT - t^f_x + t^\mi{int}}_{\text{Mitigation from~\slrulename{(INT-PM-P)}}} + 6\\
                    &= \cancel{t^{(n_x)}} + t^\mi{int} - \cancel{t^{(n_x)}} + \cancel{t^f_x} - \cancel{t^\mi{int}} + \MT - \cancel{t^f_x} + \cancel{t^\mi{int}} + 6\\
                    &= t^\mi{int} + \MT + 6
                \end{align*}
        and therefore $t^{(n_1+1)} = t^{(n_2+1)}$.

        \item Since $t^{(n_1+1)} = t^{(n_2+1)}$, $c_1^{(0)} \ueq c_2^{(0)}$ and no interaction with $\D$ via $\IN{}\hspace{-0.5em}$ or $\OUT{}\hspace{-0.5em}$ can occur in protected mode, the deterministic device $\D$ performed the same number of steps in both computations, and then $t^{(n_1 + 1)}_{a_1} = t^{(n_2 + 1)}_{a_2}$ and $\delta^{(n_1 + 1)}_{1} = \delta^{(n_2 + 1)}_{2}$.
    \end{itemize}

    Hence, $c_1^{(n_1+1)} \ueq c_2^{(n_2+1)}$ as requested.
\end{proof}

The following properties show that the combination of $U$-equivalence and trace equivalence induces some useful properties of modules and sequences of complete interrupt segments.
Before doing that we define the $(\overline{a}, n)$-interrupt-limited version of a context $C$ as the context that behaves as $C$ but such that
$(i)$ the transition relation of its device results from unrolling at most $n$ steps of its transition relation and
$(ii)$ its device never raises interrupts \emph{after} observing the sequence of actions $\overline{a}$:
% \alert
% {
%     P: You are defining a mapping from the states of $\D$ to [0..n] (or [1..n]), aren't you?
%     M: Yes

%     P: If this is the case (o.w. skip the rest, and I'll ask you by phone what I'm missing) would it help associating with each state $\delta$ a number $p$
% depending on the distance from $init$ (that encodes all the paths $\overline{a}'$), provided that no $int?$ is taken. Then you keep pairs $(\delta, p)$ and the sates of $\xleadsto{a}_D$ are those pairs with $p \leq n$.

%     M: That's exactly what $\ell(\cdot)$ does for any path

%     P: (The item (ii) above has no quantifier on $\overline{a}$ --- same on the second line of the definition of $\xleadsto{a}_D$.)

%     M: The quantifier on $\overline{a}$ is not needed since $\overline{a}$ is fixed by the definition.
% }
\begin{definition}\label{def:limintctx}
    Let $\D = \langle \Delta, \deltainit, \xleadsto{a}_D \rangle$ be an I/O device.
    Let $\overline{a}$ be a string over the signature $A$ of I/O devices and denote $\ell$ as the function that associates to each string over $A$ a unique natural number (e.g., its position in a suitable lexicographic order).
    Given a context $C = \langle \M_C, \D \rangle$, we define its corresponding \emph{$(\overline{a}, n)$-interrupt-limited context} as $\nIl{C}{\overline{a}, n} = \langle \M_C, \nIl{\D}{\overline{a}, n} \rangle$ where $\nIl{\D}{\overline{a}, n} = \langle \mi{img}(\nIl{\xleadsto{a}_D}{\overline{a}, n}) \cup \mi{dom}(\nIl{\xleadsto{a}_D}{\overline{a}, n}), 0, \nIl{\xleadsto{a}_D}{\overline{a}, n} \rangle$ and
    \begin{align*}
        \nIl{\xleadsto{a}_D}{\overline{a}, n} \triangleq\
            &\{ (p, a, p') \mid \forall \overline{a}'.\, p = \ell(\overline{a}') \land p' = \ell(\overline{a}'\cdot a) \land \delta_\mi{init} \xleadstostar{\overline{a}'}_D \delta \xleadsto{a}_D \delta' \land |\overline{a}'\cdot a| \leq n \}\ \setminus \\
            &\{ (p, \mi{int?}, p') \mid \forall \overline{a}'.\, p = \ell(\overline{a}\cdot\overline{a}') \land p' = \ell(\overline{a}\cdot\overline{a}'\cdot\mi{int?}) \}\ \cup\\
            &\{ (p, \epsilon, p') \mid  \forall \overline{a}'.\, p = \ell(\overline{a}\cdot\overline{a}') \land p' = \ell(\overline{a}\cdot\overline{a}'\cdot\mi{int?}) \land \delta_\mi{init} \xleadstostar{\overline{a}\cdot\overline{a}'}_D \delta \xleadsto{\mi{int?}}_D \delta' \land |\overline{a}\cdot\overline{a}'\cdot\mi{int?}| \leq n \}.
    \end{align*}
\end{definition}
\noindent
(Note that any $(\overline{a}, n)$-interrupt-limited context is actually a device, due to the constraint on its transition function).

Now, let
\begin{align*}
\atr_x \,\in\,
    &\{\emptystr\}\, \cup \\
    &\{ \ta^{(0)}_x \cdots \ta^{(n_x - 1)}_x \mid n_x \geq 1 \, \land\, \ta^{(n_x - 1)}_x = \reti{k^{(n_x - 1)}_x} \, \land\, \\
    &\qquad\qquad\qquad\qquad\forall i.\, 0 \leq i \leq n_x - 1.\, \ta^{(i)}_x \notin \{ \cnv, \jmpin{\R^{(i)}_x}, \jmpout{k^{(i)}_x}{\R^{(i)}_x} \} \}.
\end{align*}
\begin{property}\label{prop:ttreq-ilen}
    If
    \begin{itemize}
     \item $\M_M \ttreq \M_{M'}$
     \item $\btrsemstar{\D}{\initconf{C}{M_M}}{\btr\cdot\jmpin{\R}}{c^{(0)}_1}$
     \item $\btrsemstar{\D}{\initconf{C}{M_{M'}}}{\btr\cdot\jmpin{\R}}{c^{(0)}_2}$
     \item $c^{(0)}_1 \ueq c^{(0)}_2$
     \item for some $m_1 \geq 0$, $\atrsemstar{\D}{c^{(0)}_1}{\atr_1 \cdot \psilent{k^{(n_1)}_1} \cdots \psilent{k^{(n_1 + m_1 - 1)}_1} \cdot \jmpout{k^{(n_1 + m_1)}_1}{\R'}}{c^{(n_1+m_1+1)}_1}$
     \item for some $m_2 \geq 0$, $\atrsemstar{\D}{c^{(0)}_2}{\atr_2 \cdot \psilent{k^{(n_2)}_2} \cdots \psilent{k^{(n_2 + m_2 - 1)}_2} \cdot \jmpout{k^{(n_2 + m_2)}_2}{\R'}}{c^{(n_2+m_2+1)}_2}$
    \end{itemize}
     then
     $\sum_{i=0}^{n_1+m_1} \ilen{c^{(i)}_1} = \sum_{i=0}^{n_2+m_2} \ilen{c^{(i)}_2}$.
 \end{property}
 \begin{proof}
     We show this property by contraposition.
     Indeed, we show that if $\sum_{i=0}^{n_1+m_1} \ilen{c^{(i)}_1} \neq \sum_{i=0}^{n_2+m_2} \ilen{c^{(i)}_2}$ then $\M_M \nttreq \M_{M'}$.
     For that it suffices to show that
     \[
        \exists C'. \btrsemstar{\D'}{\initconf{C'}{\M_M}}{\btr\cdot\jmpin{\R}}{c^{(0)}_3 \btrarrow{\jmpout{\dt_3}{\R^{(n_3+m_3)}_3}} c^{(n_3+m_3+1)}_3}
     \]
     (i.e., $\atrsemstar{\D}{c^{(0)}_3}{\atr_3 \cdot \psilent{k^{(n_3)}_3} \cdots \psilent{k^{(n_3 + m_3 - 1)}_3} \cdot \jmpout{k^{(n_3 + m_3)}_3}{\R^{(n_3+m_3)}_3}}{c^{(n_3+m_3+1)}_3}$)

     \noindent
     such that
     \[
        \forall C''.\, \btrsemstar{\D''}{\initconf{C''}{\M_{M'}}}{\btr\cdot\jmpin{\R}}{c^{(0)}_4 \btrarrow{\jmpout{\dt_4}{\R^{(n_4+m_4+1)}_4}} c^{(n_4+m_4+1)}_4} \quad \text{ with } \dt_3 \neq \dt_4
     \]
     (i.e., $\atrsemstar{\D}{c^{(0)}_4}{\atr_4 \cdot \psilent{k^{(n_4)}_4} \cdots \psilent{k^{(n_4 + m_4 - 1)}_4} \cdot \jmpout{k^{(n_4 + m_4)}_4}{\R^{(n_4+m_4)}_4}}{c^{(n_4+m_4+1)}_4}$).

     Assume wlog that $\sum_{i=0}^{n_1+m_1} \ilen{c^{(i)}_1} < \sum_{i=0}^{n_2+m_2} \ilen{c^{(i)}_2}$.
     Noting that the first observable of $\btr\cdot\jmpin{\R}$ must be a $\jmpin{\cdot}$, by Properties~\ref{prop:init-jmpin} and~\ref{prop:peq-pres-btr}, we have that $c^{(0)}_1 \peq c^{(0)}_3$ and, similarly, $c^{(0)}_2 \peq c^{(0)}_4$.
     Thus, as a consequence of Properties~\ref{prop:peq-samedecode},~\ref{prop:peq-pres-handlereti} and~\ref{prop:peq-swap-ctx}, $\sum_{i=0}^{n_1+m_1} \ilen{c^{(i)}_1} = \sum_{i=0}^{n_3+m_3} \ilen{c^{(i)}_3}$ and $\sum_{i=0}^{n_2+m_2} \ilen{c^{(i)}_2} = \sum_{i=0}^{n_4+m_4} \ilen{c^{(i)}_4}$.

     Let $n \in \mb{N}$ be greater than the number of steps over the relation $\xleadsto{\cdot}_D$ in the computation $\slmainstar{\D}{\initconf{C}{\M_M}}{c^{(n_1+m_1+1)}_1}$ and let $\overline{a}$ be the sequence of actions over $\xleadsto{\cdot}_D$ in the computation $\slmainstar{\D}{\initconf{C}{\M_M}}{c^{(0)}_1}$.
     Choosing $C' = \nIl{C}{\overline{a}, n}$ we get $\dt_3 = \sum_{i=0}^{n_1+m_1} \ilen{c^{(i)}_1} = \sum_{i=0}^{n_3+m_3} \ilen{c^{(i)}_3}$.
     Any other context $C''$ that allows to observe the same $\btr\cdot\jmpin{\R}$ from $\initconf{C''}{\M_{M'}}$ raises $0$ or more interrupts ``after'' $c^{0}_4$, hence taking additional $S \geq 0$ cycles on top of those required for the instructions to be executed.
     Thus $\M_M \nttreq \M_{M'}$, since $\sum_{i=0}^{n_1+m_1} \ilen{c^{(i)}_1} < \sum_{i=0}^{n_2+m_2} \ilen{c^{(i)}_2}$ and $\sum_{i=0}^{n_1+m_1} \ilen{c^{(i)}_1} = \dt_3 < \dt_4 = \sum_{i=0}^{n_2+m_2} \ilen{c^{(i)}_2} + S$.
 \end{proof}
%  \alert{P: Unclear :-(}

 \begin{property}\label{prop:ueq-pres-handlereti}
   If
   \begin{itemize}
    \item $\btrsemstar{\D}{\initconf{C}{M_M}}{\btr\cdot\jmpin{\R}}{c^{(0)}_1}$
    \item $\btrsemstar{\D}{\initconf{C}{M_{M'}}}{\btr'\cdot\jmpin{\R}}{c^{(0)}_2}$
    \item $c^{(0)}_1 \ueq c^{(0)}_2$
    \item $\atrsemstar{\D}{c^{(0)}_1}{\atr_1 \cdot \psilent{k^{(n_1)}_1} \cdots \psilent{k^{(n_1 + m_1 - 1)}_1} \cdot \ta_1}{c^{(n_1+m_1+1)}_1}$ for some $m_1 \geq 0$ and $\ta_1 \in \{ \jmpout{k^{(n_1 + m_1)}_1}{\R'},$ $ \handle{k^{(n_1 + m_1)}_1} \}$
    \item $\atrsemstar{\D}{c^{(0)}_2}{\atr_2 \cdot \psilent{k^{(n_2)}_2} \cdots \psilent{k^{(n_2 + m_2 - 1)}_2} \cdot \ta_2}{c^{(n_2+m_2+1)}_2}$ for some $m_2 \geq 0$ and $\ta_2 \in \{ \jmpout{k^{(n_2 + m_2)}_2}{\R'},$ $ \handle{k^{(n_2 + m_2)}_2} \}$
   \end{itemize}
    then
   \begin{enumerate}
    \item ${|\mb{I}_{\atr_1}|} = {|\mb{I}_{\atr_2}|}$
    \item $c^{(n_1)}_1 \ueq c^{(n_2)}_2$.
   \end{enumerate}
\end{property}
\begin{proof}
    Assume wlog that $\sum_{i=0}^{n_1+m_1} \ilen{c^{(i)}_1} \leq \sum_{i=0}^{n_2+m_2} \ilen{c^{(i)}_2}$, and we prove by induction on ${|\mb{I}_{\atr_1}|}$ that
    \begin{align*}
        \atrsemstar{\D}{c^{(0)}_1}{\atr_1}{c^{(n_1)}_1}
        \ \land\
        \atrsemstar{\D}{c^{(0)}_2}{\atr_2}{c^{(n_2)}_1}
        \text{\ \ \ imply \ \ \ }
        c^{(n_1)}_1 \ueq c^{(n_2)}_2
        \ \land\
        {|\mb{I}_{\atr_1}|} = {|\mb{I}_{\atr_2}|}
    \end{align*}

    \begin{itemize}
        \item \emph{Case ${|\mb{I}_{\atr_1}|} = 0$.}
            Since no complete interrupt segment was observed it means that $\atr_1$ cannot end with a $\reti{\cdot}$, so it must be $\atr_1 = \emptystr$.
            Moreover, since $c^{(0)}_1 \ueq c^{(0)}_2$ and the value of the $\mtt{GIE}$ bit cannot be changed in protected mode, we know that:
            \begin{itemize}
                \item \emph{Case $\R^{(0)}_1 [\rsr.\mtt{GIE}] = \R^{(0)}_2 [\rsr.\mtt{GIE}] = \mtt{0}$.} Then no $\handle{\cdot}$ can be observed in $\atr_2$, hence it must be that $\atr_2 = \emptystr$ and the two thesis easily follow.
                \item \emph{Case $\R^{(0)}_1 [\rsr.\mtt{GIE}] = \R^{(0)}_2 [\rsr.\mtt{GIE}] = \mtt{1}$.} Then it means that no interrupt was raised by the device in the computation starting with $c^{(0)}_1$ and the same must happen in $c^{(0)}_2$ because of $U$-equivalence and $\sum_{i=0}^{n_1+m_1} \ilen{c^{(i)}_1} \leq \sum_{i=0}^{n_2+m_2} \ilen{c^{(i)}_2}$.
                Hence it must be that $\atr_2 = \emptystr$ and the two thesis easily follow.
            \end{itemize}

        \item \emph{Case ${|\mb{I}_{\atr_1}|} = {|\mb{I}_{\atr'_1}|} + 1$.}
            If
            \begin{align*}
                \atrsemstar{\D}{c^{(0)}_1}{\atr'_1}{c^{(n'_1)}_1}
                \ \land\
                \atrsemstar{\D}{c^{(0)}_2}{\atr'_2}{c^{(n'_2)}_2}
                \text{\ \ \ imply \ \ \ }
                c^{(n'_1)}_1 \ueq c^{(n'_2)}_2
                \ \land\
                {|\mb{I}_{\atr'_1}|} = {|\mb{I}_{\atr'_2}|}
                \text{ (IHP)}
            \end{align*}
            then
            \begin{align*}
                \atrsemstar{\D}{c^{(0)}_1}{\atr_1}{c^{(n_1)}_1}
                \ \land\
                \atrsemstar{\D}{c^{(0)}_2}{\atr_2}{c^{(n_2)}_2}
                \text{\ \ \ imply \ \ \ }
                c^{(n_1)}_1 \ueq c^{(n_2)}_2
                \ \land\
                 {|\mb{I}_{\atr_1}|} = {|\mb{I}_{\atr_2}|}
            \end{align*}

            Now let $(i_1, j_1)$ be the new interrupt segment of $\atr_1$, that we split as follows:
            \begin{align*}
                \atr_1 =\ \atr'_1 \cdot
                \psilent{k^{(n'_1)}_1} \cdots \psilent{k^{(i_1-1)}_1} \cdot
                \handle{k^{(i_1)}_1} \cdots \reti{k^{(j_1)}_1}.
            \end{align*}

            Since by (IHP) $c^{(n'_1)}_1 \ueq c^{(n'_2)}_2$ and $\D$ is deterministic and no successfully I/O ever happens in protected mode, the first new interrupt (i.e. the one leading to the observation of $\handle{k^{(i_1)}_1}$) is raised at the same cycle in both computations.
            Call $c^{(i_2)}_2$ the configuration at the beginning of the step of computation in which such interrupt was raised (the choice of indexes will be clear below).
%            We have three cases for the fine-grained action observed starting from such configuration:
            From this configuration only three cases for the fine-grained action might be observed:
            \begin{itemize}
                \item \emph{Case $\psilent{\cdot}$ and $\jmpout{\cdot}{\cdot}$.} Never happens, since $\B^{(i_2+1)}_2 \neq \bot$.
                \item \emph{Case $\handle{k^{(i_2)}_2}$.}
%                    In this case we know that $c^{(i_2+1)}_2 \ueq c^{(i_1+1)}_1$, due to Property~\ref{prop:um-pres-tauhandle}.
%                    Also, Property~\ref{prop:ueq-fine-ueq} guarantees that at some index $j_2$ a $\reti{k^{(j_2)}_2}$ is observed in $\atr_2$, i.e., a new interrupt segment $(i_2, j_2)$ is observed.
                    Property~\ref{prop:um-pres-tauhandle} ensures that $c^{(i_2+1)}_2 \ueq c^{(i_1+1)}_1$, and
                    Property~\ref{prop:ueq-fine-ueq} that at some index $j_2$ a $\reti{k^{(j_2)}_2}$ is observed in $\atr_2$, i.e., a new interrupt segment $(i_2, j_2)$ is observed.
                    Thus, ${|\mb{I}_{\atr_2}|} = {|\mb{I}_{\atr'_2}|} + 1 = {|\mb{I}_{\atr'_1}|} + 1 = {|\mb{I}_{\atr_1}|}$ (where the second equality holds by (IHP)).
                    Finally, by definition of $\atr_2$, we have that $n_1 = j_1+1$ and $n_2 = j_2+2$, hence $c^{(n_1)}_1 \ueq c^{(n_2)}_2$.
            \end{itemize}
    \end{itemize}
\end{proof}
The following property states that $U$-equivalent unprotected-mode configurations perform the same single coarse-grained action:
\begin{property}\label{prop:ueq-um-samebeta}
    If $c_1 \ueq c_2$, $\moderelUM{c_1}$ and $\btrsem{\D}{c_1}{\tb}{c'_1}$, then $\btrsem{\D}{c_2}{\tb}{c'_2}$ and $c'_1 \ueq c'_2$.
\end{property}
\begin{proof}
    Since $\moderelUM{c_1}$, the segment of fine-grained trace that originated $\tb$ (see Figure~\ref{fig:coarsegrained-obs}) is in the form:
    \[
        \atrsemstar{\D}{c_1}{\usilent \cdots \usilent \cdot \ta}{c'_1}
    \]
    with either $\ta = \cnv$ or $\ta = \jmpin{\R}$.

    \noindent
    Property~\ref{prop:ueq-fine-ueq} guarantees that:
    \[
        \atrsemstar{\D}{c_2}{\usilent \cdots \usilent \cdot \ta}{c'_2} \land c'_1 \ueq c'_2.
    \]
    Thus, $\btrsem{\D}{c_2}{\tb}{c'_2}$ and $c'_1 \ueq c'_2$.
\end{proof}
Finally, we can show that $U$-equivalence is preserved by coarse-grained traces:
\begin{property}\label{prop:ueq-pres-btr}
    If $c_1 \ueq c_2$, $\moderelUM{c_1}$, $\btrsemstar{\D}{c_1}{\btr}{c'_1}$, $\btrsemstar{\D}{c_2}{\btr}{c'_2}$, $\moderelUM{c'_1}$ and $\moderelUM{c'_2}$ then $c'_1 \ueq c'_2$.
\end{property}
\begin{proof}
    We show the property by induction on $n$, the length of $\btr$:
    \begin{itemize}
        \item \emph{Case $n=0$.}
            By definition of $\btrarrowstar{\emptystr}$ we know that it must be $c'_1 = c_1$ and $c'_2 = c'_2$ and the thesis easily follows.
        \item \emph{Case $n=n'+1$.}
            The only case in which a coarse-grained trace can be extended by just one action, while remaining in unprotected mode, is when the action is $\cnv$.
            In this case the hypothesis easily follows from the definition of $\cnv$ and $U$-equivalence.
        \item \emph{Case $n=n'+2$.}
            If
            \[
                \btrsemstar{\D}{c_1}{\btr}{c''_1} \ \land\
                \btrsemstar{\D}{c_2}{\btr}{c''_2} \ \land\
                \moderelUM{\R''_1[\rpc]} \ \land\
                \moderelUM{\R''_2[\rpc]}
                \text{\ \ imply \ }
                c''_1 \ueq c''_2
            \]
            then
            \[
                \btrsemstar{\D}{c_1}{\btr}{c''_1 \btrarrow{\tb\tb'} c'_1} \ \land\
                \btrsemstar{\D}{c_2}{\btr}{c''_2 \btrarrow{\tb\tb'} c'_2} \ \land\
                \moderelUM{\R'_1[\rpc]} \ \land\
                \moderelUM{\R'_2[\rpc]}
                \text{\ \ imply \ }
                c'_1 \ueq c'_2.
            \]

            By cases on $\tb\tb'$:
            \begin{itemize}
                \item  \emph{Case $\tb\tb' = \jmpin{\R}\, \cnv$.}
                    Directly follows from definition of $\cnv$ and $\ueq$.
                \item \emph{Case $\tb\tb' = \jmpin{\R}\, \jmpout{\dt}{\R'}$.}
                    By definition they are originated by
                    \begin{align*}
                        &\atrsemstar
                        {\D}
                        {c''_1}
                        {\usilent \cdots \usilent \cdot \jmpin{\R}}
                        {
                            c^{(0)}_1
                            \atrarrowstar{\ta^{(0)}_1\ \cdots\ \ta^{(n_1 - 1)}_1}
                            c^{(n_1)}_1
                            \atrarrow{\jmpout{k^{(n_1)}_1}{\R'}}
                            c'_1
                        }\\
                        &\atrsemstar
                        {\D}
                        {c''_2}
                        {\usilent \cdots \usilent \cdot \jmpin{\R}}
                        {
                            c^{(0)}_2
                            \atrarrowstar{\ta^{(0)}_2\ \cdots\ \ta^{(n_2 - 1)}_2}
                            c^{(n_2)}_2
                            \atrarrow{\jmpout{k^{(n_2)}_2}{\R'}}
                            c'_2
                        }.
                    \end{align*}

                    By \emph{(IHP)} and by Property~\ref{prop:ueq-fine-ueq} we can conclude that $c^{(0)}_1 \ueq c^{(0)}_2$.

                    Let $c^{(M_x)}_x$ be the configuration generated by the last $\reti{\cdot}$ in $\ta^{(0)}_x \ \cdots\ \ta^{(n_x - 1)}_x$.
                    By Property~\ref{prop:ueq-pres-handlereti} the number of completely handled interrupts is the same in the two traces and $c^{(M_1)}_1 \ueq c^{(M_2)}_2$.
                    Also:
                    \begin{itemize}
                        \item By definition of $\jmpout{k^{(n_1)}_1}{\R'}$ and $\jmpout{k^{(n_2)}_2}{\R'}$ we trivially get $\R'_1 = \R'_2 = \R'$.
                        \item Since unprotected memory cannot be changed in protected mode (see Table~\ref{tab:mac-rep}) and $c^{(M_1)}_1 \ueq c^{(M_2)}_2$, $\M'_1 \umem \M'_2$.
                        \item
                            Let $\atr_x = \ta^{(0)}_x\ \cdots\ \ta^{(n_x - 1)}_x\cdot\jmpout{k^{(n_x)}_x}{\R'}$.
                            By definition of $\tb = \jmpout{\dt}{\R'}$:
                            \begin{align*}
                                t'_1 &= t^{(0)}_1 + \dt + \sum_{(i_1, j_1) \in |\mb{I}_{\atr_1}|} (t^{(j_1)}_1 - t^{(i_1+1)}_1)\\
                                t'_2 &= t^{(0)}_2 + \dt + \sum_{(i_2, j_2) \in |\mb{I}_{\atr_2}|} (t^{(j_2)}_2 - t^{(i_2+1)}_2)
                            \end{align*}
                            But $t^{(0)}_1 = t^{(0)}_2$ since $c^{(0)}_1 \ueq c^{(0)}_2$.
                            Also, each operand in $(t^{(j_1)}_1 - t^{(i_1+1)}_1)$ equals the corresponding $(t^{(j_2)}_2 - t^{(i_2+1)}_2)$ because for each ($p^{th}$ element) $(i_1, j_1) \in \mb{I}_{\atr_1}$ and corresponding $(i_2, j_2) \in \mb{I}_{\atr_2}$, Property~\ref{prop:um-pres-tauhandle} guarantees that $t^{(i_1+1)}_1 = t^{(i_2+1)}_2$ and Property~\ref{prop:ueq-fine-ueq} guarantees that $t^{(j_1)}_1 = t^{(j_2)}_2$.
                        \item Finally, since no interaction with $\D$ via $\IN{}\hspace{-0.5em}$ or $\OUT{}\hspace{-0.5em}$ occurs in protected mode and since the same deterministic device performed the same number of steps (starting from $c^{(0)}_1 \ueq c^{(0)}_2$), it follows that $t'_{a_1} = t'_{a_2}$ and $\delta'_{1} = \delta'_{2}$.
                    \end{itemize}
                \end{itemize}
    \end{itemize}
\end{proof}

\subsubsection{Proof of preservation}\label{sssec:preservation}

Before proving the preservation and reflection of contextual equivalence, we prove the following facts about the trace semantics:
\begin{proposition}\label{prop:cnvimplconv}
    $\tconv{C[\M_M]}$ \emph{iff} $\exists \btr.\ \btrsemstar{\D}{\initconf{C}{\M_M}}{\btr \cdot \cnv}{\haltconf}$.
\end{proposition}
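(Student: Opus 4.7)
The plan is to establish a direct correspondence between the main transition system $\slmainarrow$ and the fine-grained observable relation $\atrarrow{\ta}$ of Figure~\ref{fig:finegrained-obs}, and then lift this correspondence to the coarse-grained trace semantics. The crucial observation is that the rules defining $\atrarrow{\ta}$ in Figure~\ref{fig:finegrained-obs} are a complete case split on main transitions: for every step $\slmain{c}{c'}$ there is exactly one rule producing some action $\ta$ (in particular, $\haltconf$ is reachable only through rule \rulename{(Obs-Final)}, whose action is $\cnv$), and conversely every premise $\atrsem{\D}{c}{\ta}{c'}$ is witnessed by an underlying main transition $\slmain{c}{c'}$.

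For the ($\Rightarrow$) direction, assume $\tconv{C[\M_M]}$, so there is a finite sequence $\initconf{C}{\M_M} \slmainarrow \cdots \slmainarrow \haltconf$. I would proceed by induction on the length $n$ of this sequence to produce a matching fine-grained trace $\atr = \ta^{(0)}\cdots\ta^{(n-1)}$ with $\atrsemstar{\D}{\initconf{C}{\M_M}}{\atr}{\haltconf}$. Since the halting configuration can only be reached by rule \rulename{(Obs-Final)} from an unprotected-mode configuration, the last observable is $\ta^{(n-1)} = \cnv$. I then collapse $\atr$ into a coarse-grained trace $\btr \cdot \cnv$ by walking through the rules of Figure~\ref{fig:coarsegrained-obs}: each $\jmpin{\R}$ (preceded by $\usilent$'s) produces a coarse observable $\jmpin{\R}$; each $\jmpout{k}{\R'}$ together with the surrounding $\psilent{\cdot}$, $\reti{\cdot}$ and $\handle{\cdot}$ actions produces a coarse observable $\jmpout{\dt}{\R'}$ with $\dt$ being the cumulative sum prescribed by $\mi{time}(\cdot)$; finally the terminal $\cnv$ yields a coarse $\cnv$. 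This produces the required $\btr \cdot \cnv$ with $\btrsemstar{\D}{\initconf{C}{\M_M}}{\btr\cdot\cnv}{\haltconf}$.

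For the ($\Leftarrow$) direction, assume $\btrsemstar{\D}{\initconf{C}{\M_M}}{\btr\cdot\cnv}{\haltconf}$. By the defining rules of $\btrarrow{\cdot}$ in Figure~\ref{fig:coarsegrained-obs}, each coarse step is witnessed by a fine-grained derivation $\atrsemstar{\D}{c}{\bar\alpha}{c'}$, and concatenating these derivations across all coarse steps produces a single fine-grained derivation $\atrsemstar{\D}{\initconf{C}{\M_M}}{\atr}{\haltconf}$ for some $\atr$ ending in $\cnv$. By the correspondence above, each $\atrsem{\D}{c}{\ta}{c'}$ entails $\slmain{c}{c'}$, hence $\slmainstar{\D}{\initconf{C}{\M_M}}{\haltconf}$, which is exactly $\tconv{C[\M_M]}$.

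The only mildly delicate part is the bookkeeping in the forward direction: one must verify that the partition of $\atr$ induced by the coarse-grained rules is indeed well defined, namely that $\cnv$ only appears at the end (because $\haltconf$ has no outgoing main transitions), and that between a matched pair $\jmpin{\R},\jmpout{k}{\R'}$ no observable other than $\psilent{\cdot}$, $\reti{\cdot}$ or $\handle{\cdot}$ occurs (which follows from the mode conditions in the observation rules). Both facts are immediate from Figure~\ref{fig:finegrained-obs}, so no genuine obstacle arises; the proof is essentially a careful unfolding of definitions.
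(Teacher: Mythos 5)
Your proof is correct and follows essentially the same route as the paper, which simply observes that the forward direction follows from the definitions of the fine- and coarse-grained trace relations (every main transition has a corresponding fine-grained observable, and $\haltconf$ is only reached via the rule emitting $\cnv$) and dismisses the backward direction as trivial; your write-up just makes the induction and the collapsing step explicit. One tiny imprecision in your closing remark: $\usilent$ actions can also occur between a $\jmpin{\R}$ and the matching $\jmpout{\dt}{\R'}$ (inside an interrupt-handling segment, between $\handle{\cdot}$ and $\reti{\cdot}$), but this is harmless since the coarse-grained rule only excludes $\jmpout{\_}{\_}$ and $\cnv$ from the interior and assigns $\usilent$ zero time.
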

\begin{proof}
    We split the proof in the two directions:
    \begin{itemize}
        \item \emph{Case $\Rightarrow$.}
            By definition of $\tconv{C[\M_M]}$, we know that $\slmainstar{\D}{\initconf{C}{\M_M}}{\haltconf}$.
            Thus, definition of fine-grained and coarse-grained traces (Figures~\ref{fig:finegrained-obs} and~\ref{fig:coarsegrained-obs}) guarantee that the last observed action is $\cnv$ as requested.
        \item \emph{Case $\Leftarrow$.} Trivial.
    \end{itemize}
\end{proof}
\begin{proposition}\label{prop:eqmode}
    Let $C = \langle \M_C, \D \rangle$.
    If $\btrsemstar{\D}{\initconf{C}{\M_M}}{\btr}{c_1}$ and
    $\btrsemstar{\D}{\initconf{C}{\M_{M'}}}{\btr}{c_2}$, then
    $\moderel{c_1}$ and
    $\moderel{c_2}$.
\end{proposition}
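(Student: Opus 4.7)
The strategy is to prove, by induction on the length of $\btr$, the stronger invariant that for every whole program $C[\M]$ and every trace $\btr$ with $\btrsemstar{\D}{\initconf{C}{\M}}{\btr}{c}$, the mode satisfied by $c$ is \emph{fully determined by $\btr$ alone}, independently of $\M$. Specifically, the target mode is $\mtt{UM}$ when $\btr$ is empty or its last observable is $\cnv$ or $\jmpout{\cdot}{\cdot}$, and $\mtt{PM}$ when the last observable is $\jmpin{\cdot}$. The proposition then follows immediately by applying this invariant to the two executions of hypothesis.

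For the base case $\btr = \emptystr$, both configurations coincide with the corresponding $\initconf{C}{\cdot}$, whose program counter is $\mtt{0xFFFE}$; the layout constraint $\mtt{0xFFFE} \notin [\mi{ts}, \mi{te}) \cup [\mi{ds}, \mi{de})$ makes this an unprotected address, so both configurations are in $\mtt{UM}$. For the inductive step, I would do a case analysis on the final coarse-grained action $\tb$ of $\btr$. If $\tb = \cnv$, the rules of Figure~\ref{fig:coarsegrained-obs} force the target configuration to be $\haltconf$, which the lifted relation declares to be in $\mtt{UM}$. If $\tb = \jmpin{\R}$, inspection of the generating fine-grained rule (Obs-JmpIn) shows that the post-state's register file coincides with the announced $\R$ and that $\moderelPM{\R[\rpc]}$, hence the target configuration is in $\mtt{PM}$. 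If $\tb = \jmpout{\dt}{\R}$, the symmetric rules (Obs-JmpOut) and (Obs-JmpOut-PostPoned) require $\moderelUM{\R[\rpc]}$ for the post-state, placing it in $\mtt{UM}$. In every case the mode depends only on the last observable of $\btr$, not on which module was loaded, so the two matched traces yield $\moderel{c_1}$ and $\moderel{c_2}$ with identical $\mtt{m}$.

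I do not anticipate a genuine obstacle in this argument: the coarse-grained observables were deliberately designed to mark crossings of the protected-mode boundary, so a definitional unfolding suffices. The only point that requires a moment's care is that, in the $\jmpin$ case, the program counter value recorded inside the observable $\R$ must actually agree with the post-transition $\rpc$ — but this agreement is hard-wired into rule (Obs-JmpIn), which tags the action with exactly the target register file. Hence the lemma serves as a straightforward sanity check confirming that mode indicators are fixed by the attacker-visible trace alone, a fact freely invoked in subsequent reasoning.
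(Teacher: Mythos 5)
Your argument is correct and is essentially the paper's own proof: the paper simply inspects the last observable of $\btr$, notes it is generated by a fine-grained $\cnv$, $\jmpin{\cdot}$ or $\jmpout{\cdot}{\cdot}$, and reads the mode of the post-configuration off the corresponding rule of Figure~\ref{fig:finegrained-obs}. Your induction wrapper is harmless but unnecessary (only the final observable matters), and the one imprecision is the base case, where the initial $\rpc$ is $\M[\mtt{0xFFFE}]$ (the word stored at that location) rather than the address $\mtt{0xFFFE}$ itself — though both modules still agree there since that location lies in the shared context memory.
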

\begin{proof}
    Let $\tb$ the last observable of $\btr$.
    By definition $c_1$ and $c_2$ are such that, for some $c'_1$ and $c'_2$:
    \[
        \atrsem{\D}{c'_1}{\ta}{c_1} \qquad  \atrsem{\D}{c'_2}{\ta}{c_2}
    \]
    with $\ta$ equal to $\cnv$, $\jmpin{\cdot}$ or $\jmpout{\cdot}{\cdot}$ (depending on the value of $\tb$).
    In either case, since $c'_1$ and $c'_1$ are the configuration \emph{right after} $\ta$ and by definition of fine-grained traces, we have $\moderel{c_1}$ and $\moderel{c_2}$.
\end{proof}
\begin{proposition}~\label{prop:tb-samesort}
For any context $C = \langle \M_C, \D \rangle$ and module $\M_M$, if $\btrsemstar{\D}{\initconf{C}{\M_M}}{\tbf{0}\ \cdots\ \tbf{n}}{c}$ with $n \geq 0$, then:
    \begin{enumerate}[(i)]
        \item Observables in even positions ($\tbf{0}$, $\tbf{2}$, \ldots) in traces are either $\cnv$ or $\jmpin{\R}$ (for some $\R$)
        \item Observables in odd positions ($\tbf{1}$, $\tbf{3}$, \ldots) in traces are either $\cnv$ or $\jmpout{\dt}{\R}$ (for some $\dt$ and $\R$)
    \end{enumerate}
\end{proposition}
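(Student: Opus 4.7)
The plan is to proceed by induction on $n$, exploiting the strict structural alternation already built into the rules defining $\btrarrow{\cdot}$ in Figure~\ref{fig:coarsegrained-obs}. Those six rules naturally split into three groups: (a) two rules whose source is $\initconf{C}{\M_M}$ and whose produced observable is $\jmpin{\R}$ or $\cnv$; (b) two rules that extend a trace whose last observable is $\jmpout{\dt}{\R'}$ and produce either $\jmpin{\R''}$ or $\cnv$; and (c) two rules that extend a trace whose last observable is $\jmpin{\R'}$ and produce either $\jmpout{k''}{\R''}$ or $\cnv$. The parity of positions is therefore dictated by which group emits the observable at a given step: groups (a) and (b) contribute to even positions, while group (c) contributes to odd positions.

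For the base case $n = 0$, I would observe that only the rules of group (a) have $\initconf{C}{\M_M}$ in their premise, so $\tbf{0} \in \{\cnv, \jmpin{\R}\}$, which establishes (i); clause (ii) is vacuous.

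For the inductive step, assuming the claim for length $n$, suppose the trace is extended by one step to $\tbf{0}\cdots\tbf{n}\cdot\tbf{n+1}$. Only rules in groups (b) and (c) can extend a non-empty trace. I first note that $\tbf{n}\neq\cnv$: the two $\cnv$-producing rules land in $\haltconf$, and no rule of the figure admits $\haltconf$ as source, so an extension from $\cnv$ is impossible. By the inductive hypothesis, the remaining value of $\tbf{n}$ is determined by the parity of $n$. If $n$ is even then $\tbf{n} = \jmpin{\R}$, so only group (c) matches, forcing $\tbf{n+1} \in \{\jmpout{k''}{\R''}, \cnv\}$ at the odd position $n+1$, which gives (ii). Symmetrically, if $n$ is odd then $\tbf{n} = \jmpout{\dt}{\R}$, so only group (b) matches and $\tbf{n+1} \in \{\jmpin{\R''}, \cnv\}$ at the even position $n+1$, giving (i). The sorts of the earlier observables are unchanged, so the inductive hypothesis continues to supply (i) and (ii) for positions $0,\dots,n$.

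The proof is therefore essentially syntactic: no reasoning about memory, modes, or the device is needed, because the alternation is hard-wired into the definition of $\btrarrow{\cdot}$. The only mildly delicate point is confirming that $\cnv$ is absorbing, \emph{i.e.}\ that $\haltconf$ admits no outgoing $\btrarrow{\cdot}$ transition, which is immediate from an inspection of the rules. I anticipate no real obstacle in carrying out the details.
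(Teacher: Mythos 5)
Your proposal is correct and takes essentially the same route as the paper, whose proof is just the one-line observation that both clauses follow by inspection of the rules defining the coarse-grained trace relation; your explicit induction on the trace length, with the three-way grouping of the rules and the remark that $\cnv$ is absorbing, simply spells out the details of that inspection.
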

\begin{proof}
    Both easily follow from Figures~\ref{fig:finegrained-obs} and~\ref{fig:coarsegrained-obs}.
\end{proof}

\paragraph{Reflection of $\teq$ at \SL.}\label{par:tr-reflection}
    % !TEX root = ../report.tex
In this section we prove the implication $(i)$ of Figure~\ref{fig:strategy-rep}, i.e., that $\M_M \ttreq \M_{M'} \implies \M_M \teq \M_{M'}$.

First, we show that, due to the mitigation, the behavior of the context does not depend on the behavior of the module:
\begin{lemma}\label{lemma:umeqtraces}
    Let $C = \langle \M_C, \D \rangle$.
    If
    $\btrsemstar{\D}{\initconf{C}{\M_M}}{\btr}{c_1 \btrarrow{\tb} c'_1}$,
    $\btrsemstar{\D}{\initconf{C}{\M_{M'}}}{\btr}{c_2}$,
    $\moderelUM{c_1}$ and
    $\moderelUM{c_2}$,
    then $\btrsem{\D}{c_2}{\tb}{c'_2}$.
\end{lemma}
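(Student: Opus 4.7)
The plan is to reduce the lemma to the two $U$-equivalence properties already established, namely Property~\ref{prop:ueq-pres-btr} (preservation of $\ueq$ along a common coarse-grained trace while remaining in unprotected mode) and Property~\ref{prop:ueq-um-samebeta} (a single $\tb$-step from a $U$-equivalent, unprotected-mode configuration produces the same action). The whole argument is: the two runs started from $U$-equivalent initial configurations with the \emph{same} context, they observed the same $\btr$ and ended up both in unprotected mode, hence they reached $U$-equivalent configurations, and so by Property~\ref{prop:ueq-um-samebeta} the context can perform the same $\tb$ from $c_2$ that it performs from $c_1$.

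Concretely, I would first observe that $\initconf{C}{\M_M} \ueq \initconf{C}{\M_{M'}}$. This is immediate from the definition of $\initconf{\cdot}{\cdot}$ and of $\ueq$: the two initial configurations agree on $\delta_\mi{init}$, $t=0$, $t_a=\bot$, $\B=\BBot$, $\R^\mi{init}_{\M_C}$, and $\vopc = \mtt{0xFFFE}$, and they agree on the \emph{unprotected} memory $\M_C$ (which by assumption is disjoint from both $\M_M$ and $\M_{M'}$); they differ only on locations inside $[\mi{ts},\mi{te})\cup[\mi{ds},\mi{de})$, which is precisely what $\umem$ ignores. Moreover, both initial configurations satisfy $\moderelUM{\cdot}$, since the initial $\rpc$ points at $\mtt{0xFFFE}$ which by our memory layout invariants lies outside the protected sections.

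Next, I apply Property~\ref{prop:ueq-pres-btr} with $c_1 \mapsto \initconf{C}{\M_M}$, $c_2 \mapsto \initconf{C}{\M_{M'}}$, $c'_1 \mapsto c_1$, $c'_2 \mapsto c_2$ and the same common trace $\btr$. Its hypotheses are exactly met: initial $\ueq$ (just shown), initial $\moderelUM$ (just shown), the two assumed traces, and the two final $\moderelUM{c_1}$, $\moderelUM{c_2}$ given by hypothesis. We conclude $c_1 \ueq c_2$. Finally, since $\btrsem{\D}{c_1}{\tb}{c'_1}$ holds by hypothesis and $\moderelUM{c_1}$, Property~\ref{prop:ueq-um-samebeta} yields a $c'_2$ with $\btrsem{\D}{c_2}{\tb}{c'_2}$ (and, as a bonus, $c'_1 \ueq c'_2$, though only the existence of $c'_2$ is claimed by the lemma).

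The step I expect to be least routine is checking that Property~\ref{prop:ueq-pres-btr} genuinely applies: one must verify that the \emph{same} device $\D$ is used on both sides (which is the case because both traces come from the same context $C$), and that the $\moderelUM$ assumption on both endpoints is what lets us skip the mixed-mode case analysis that $\ueq$-preservation would otherwise require. Everything else is a direct substitution into already-proved lemmata, so no further induction on $\btr$ is needed inside this proof — the induction has already been absorbed into Property~\ref{prop:ueq-pres-btr}.
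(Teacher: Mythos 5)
Your proof is correct and follows essentially the same route as the paper's: establish $\initconf{C}{\M_M} \ueq \initconf{C}{\M_{M'}}$ from the definitions, apply Property~\ref{prop:ueq-pres-btr} to get $c_1 \ueq c_2$, and then conclude with Property~\ref{prop:ueq-um-samebeta}. The extra care you take in checking the applicability conditions (same device $\D$, unprotected mode at both endpoints, initial configurations differing only on protected memory) is exactly the right due diligence and matches what the paper's argument implicitly relies on.
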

\begin{proof}
    First, observe that $\initconf{C}{\M_M} \ueq \initconf{C}{\M_{M'}}$, because
    \begin{align*}
        \initconf{C}{\M_M} &= \cfg{\BBot}{\deltainit}{0}{\bot}{\M_C \uplus \M_M}{\R^{\mi{init}}_{\M_C}}{\mtt{0xFFFE}}\\
        \initconf{C}{\M_{M'}} &= \cfg{\BBot}{\deltainit}{0}{\bot}{\M_C \uplus \M_{M'}}{\R^{\mi{init}}_{\M_C}}{\mtt{0xFFFE}}.
    \end{align*}
    Since $\moderelUM{\initconf{C}{\M_M}}$, $\initconf{C}{\M_M} \ueq \initconf{C}{\M_{M'}}$, $\btrsemstar{\D}{\initconf{C}{\M_M}}{\btr}{c_1}$, $\btrsemstar{\D}{\initconf{C}{\M_{M'}}}{\btr}{c_2}$, $\moderelUM{c_1}$ and $\moderelUM{c_2}$, by Property~\ref{prop:ueq-pres-btr} we have $c_1 \ueq c_2$.
    Finally, since $\btrsem{\D}{c_1}{\tb}{c'_1}$ and by Property~\ref{prop:ueq-um-samebeta} we get $\btrsem{\D}{c_2}{\tb}{c'_2}$.
\end{proof}

Then the following lemma shows that the isolation mechanism offered by the enclave guarantees that the behavior of the module is not influenced by the one of the context:
\begin{lemma}\label{lemma:pmeqtraces}
    Let $C = \langle \M_C, \D \rangle$. \\
    If
    $\M_M \ttreq \M_{M'}$,
    $\btrsemstar{\D}{\initconf{C}{\M_M}}{\btr}{c''_1 \btrarrow{\jmpin{\R_1}} c_1 \btrarrow{\tb} c'_1}$,
    $\btrsemstar{\D}{\initconf{C}{\M_{M'}}}{\btr}{c''_2 \btrarrow{\jmpin{\R_2}} c_2}$,
    % $\moderelPM{c_1}$ and
    % $\moderelPM{c_2}$,
    then $\btrsem{\D}{c_2}{\tb}{c'_2}$.
\end{lemma}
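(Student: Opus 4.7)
My plan is to combine the earlier preservation properties for $U$-equivalence with the trace-equivalence-based timing results in a two-stage argument: first reconcile the unprotected view of $c_1$ and $c_2$, then extract the protected-mode observable.

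For the setup, observe that $\initconf{C}{\M_M} \ueq \initconf{C}{\M_{M'}}$, since the two initial configurations differ only on the protected sections. Both are in unprotected mode and both produce $\btr$; by Proposition~\ref{prop:eqmode} the endpoints $c''_1, c''_2$ are also in unprotected mode (the next observable is $\jmpin{\cdot}$). Property~\ref{prop:ueq-pres-btr} therefore yields $c''_1 \ueq c''_2$. Applying Lemma~\ref{lemma:umeqtraces} to the extension $\jmpin{\R_1}$ from $c''_1$ shows that $c''_2$ also admits the observable $\jmpin{\R_1}$; together with its existing extension to $\jmpin{\R_2}$ and the determinism of the semantics, this forces $\R_1 = \R_2 =: \R$. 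Since the $\jmpin$ step preserves the unprotected view (device state, time, unprotected memory, registers, and $\mtt{GIE}$), we also have $c_1 \ueq c_2$.

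Next, by Proposition~\ref{prop:tb-samesort}, $\tb$ is either $\cnv$ or $\jmpout{\dt}{\R'}$. In the $\jmpout$ case, I unpack the coarse step $c_1 \btrarrow{\jmpout{\dt}{\R'}} c'_1$ into a fine-grained trace $\atr_1$ from $c_1$ ending in $\jmpout{k_1}{\R'}$ (possibly interleaved with $\psilent{\cdot}$, $\handle{\cdot}$, $\reti{\cdot}$ actions). Assuming $c_2$ likewise admits a fine-grained trace $\atr_2$ ending in $\jmpout{k_2}{\R'}$ for the \emph{same} final register file $\R'$, Property~\ref{prop:ttreq-ilen} equates the cumulative instruction lengths, Property~\ref{prop:ueq-pres-handlereti} equates the numbers of complete interrupt segments ${|\mb{I}_{\atr_1}|} = {|\mb{I}_{\atr_2}|}$, and Property~\ref{prop:btr-timings} then forces the coarse-grained elapsed time $\dt$ to coincide for both sides. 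Hence $c_2$ extends with exactly $\jmpout{\dt}{\R'}$, as required. The $\cnv$ case is analogous: any convergence after $\jmpin$ must route through an interrupt handler that reaches $\HLT$ in unprotected mode, and the same equivalence and timing bookkeeping applies.

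The hard part is justifying the assumption that $c_2$ admits a $\jmpout$-trace to the \emph{same} $\R'$ (or, in the $\cnv$ case, also halts). The hypothesis $\M_M \ttreq \M_{M'}$ is an equality of trace sets quantified over \emph{all} contexts, and a priori does not pin down the behavior of $\M_{M'}$ in the specific context $C$; without further argument $c_2$ could diverge in protected mode or exit with a different $\R''$. I would rule out both by contraposition, constructing a distinguishing context in the style of Definition~\ref{def:limintctx}: replaying the interrupt history of $\D$ up through the $\jmpin$ observation and then freezing the device via its interrupt-limited variant isolates the protected-mode computation and drives $\M_M$ to its observable $\jmpout{\dt}{\R'}$; any divergence or deviation of $\M_{M'}$'s behavior in the resulting context then produces a trace witnessing $\ttr{\M_M} \neq \ttr{\M_{M'}}$, contradicting $\M_M \ttreq \M_{M'}$. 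This backtranslation mirrors the one used in implication $(ii)$ of the overall preservation proof, and is the most delicate step of the argument.
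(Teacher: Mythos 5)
Your overall decomposition is sound and matches the paper's in its bookkeeping: Proposition~\ref{prop:tb-samesort} to restrict $\tb$, the fine-grained unpacking of the $\jmpout$ step, and Properties~\ref{prop:ttreq-ilen}, \ref{prop:ueq-pres-handlereti} and \ref{prop:btr-timings} to force $\dt$ to coincide are exactly the ingredients the paper uses for the timing component. You have also correctly isolated the delicate obligation: showing that $c_2$ exits protected mode at all, and with the \emph{same} register file $\R'$. Where you diverge from the paper is in how that obligation is discharged. The paper uses the hypothesis $\M_M \ttreq \M_{M'}$ \emph{positively}: since $\btr\cdot\tb \in \ttr{\M_M} = \ttr{\M_{M'}}$, there exists some witness context $C'$ under which $\M_{M'}$ realizes $\btr\cdot\tb$; Properties~\ref{prop:init-jmpin} and \ref{prop:peq-pres-btr} give $P$-equivalence between $\M_{M'}$'s post-$\jmpin{}$ configurations under $C$ and under $C'$, and Properties~\ref{prop:peq-pres-handlereti} and \ref{prop:peq-swap-ctx} then transfer the entire protected-mode suffix (same instructions, same exit registers) from $C'$ back to $C$. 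The $\cnv$ case is reduced to this one by swapping the roles of the two modules.

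The gap in your contrapositive route is that a deviation of $\M_{M'}$ \emph{in one particular context} does not by itself contradict $\M_M \ttreq \M_{M'}$: the trace sets $\ttr{\cdot}$ are unions over \emph{all} contexts, so $\M_{M'}$ exiting with $\R'' \neq \R'$ (or diverging) in your interrupt-limited context is compatible with $\btr\cdot\tb$ still being realized by $\M_{M'}$ under some other context, and with $\M_{M'}$'s deviant trace being matched by $\M_M$ elsewhere. To promote a single-context deviation into an inequality of trace sets you need precisely the context-independence of protected-mode execution, i.e.\ the $P$-equivalence machinery (Properties~\ref{prop:peq-samedecode}, \ref{prop:peq-pres-handlereti}, \ref{prop:peq-swap-ctx}), which your write-up never invokes for the register component (it is hidden inside Property~\ref{prop:ttreq-ilen} only for the timing component). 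Once you have that machinery in hand, the contraposition becomes unnecessary: the direct argument via the witness context is both available and shorter. So your proposal is recoverable, but as written the ``most delicate step'' is not actually closed.
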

\begin{proof}
    Noting that $\moderelPM{c_1}$ and that the last observable of $\btr$ is a $\jmpin{\cdot}$, by definition of coarse-grained traces (see Figure~\ref{fig:coarsegrained-obs}) we have the following fine-grained traces starting from $c''_1$:
    \begin{align*}
        \atrsemstar
        {\D}
        {c''_1}
        {\usilent\ \cdots\ \usilent \cdot \jmpin{\R_1}}
        {
            c_1
            \atrarrowstar{\atr_1}
            c^{(n_1)}_1
            \atrarrowstar{\psilent{k^{(n_1)}_1} \ \cdots\ \psilent{k^{(n_1 + m_1 - 1)}_1} \cdot \atr'_1}
            c'_1
        }
    \end{align*}
    with $\atr'_1 \in \{ \jmpout{k_1}{\R'_1}, \handle{k_1}\cdot \usilent \cdots \usilent \cdot \cnv \}$.

    Similarly for $c_2$ it must be:
    \begin{align*}
        \atrsemstar
        {\D}
        {c''_2}
        {\usilent\ \cdots\ \usilent \cdot \jmpin{\R_2}}
        {
            c_2
            \atrarrowstar{\atr_2}
            c^{(n_2)}_2
            \atrarrowstar{\psilent{k^{(n_2)}_2} \ \cdots\ \psilent{k^{(n_2 + m_2 - 1)}_1} \cdot \atr'_2}
            c'_2
        }.
    \end{align*}
    with $\atr'_2 \in \{ \jmpout{k_2}{\R'_2}, \handle{k_2} \cdot \usilent \cdots \usilent \cdot \cnv \}$.

    We have now two cases:
    \begin{itemize}
        \item \emph{Case $\tb = \jmpout{\dt}{\R}$.}
            $\M_M \ttreq \M_{M'}$ implies the existence of a context $C' = \langle \M_{C'}, \D' \rangle$ that allow us to observe $\btrsem{\D'}{\initconf{C'}{\M_{M'}}}{\btr}{c_3 \btrarrow{\tb}{c'_3}}$, i.e.\
            \[
                \atrsemstar{\D'}{c_3}{\atr_3}{
                    c^{(n_3)}_3
                    \atrarrow{\psilent{k^{(n_3)}_3} \ \cdots\ \psilent{k^{(n_3 + m_3 - 1)}_3} \cdot \atr'_3}
                    c'_3}
            \]
            with $\atr'_3 \in \{ \jmpout{k_3}{\R'_3}, \handle{k_3}\cdot \usilent \cdots \usilent \cdot \cnv \}$.

            By Properties~\ref{prop:init-jmpin} and~\ref{prop:peq-pres-btr} we have that $c_2 \peq c_3$, and by Property~\ref{prop:peq-pres-handlereti} we can conclude that $c^{(n_3)}_3 \peq c^{(n_2)}_2$.

            Property~\ref{prop:peq-swap-ctx} guarantees that
            \[
                \psilent{k^{(n_2)}_2} \ \cdots\ \psilent{k^{(n_2 + m_2 - 1)}_2} \cdot \atr'_2 = \psilent{k^{(n_3)}_3} \ \cdots\ \psilent{k^{(n_3 + m_3 - 1)}_3} \cdot \atr'_3.
            \]

            Since $\atr'_2 = \atr'_3 = \jmpout{k_3}{\R_1}$, we know that $\btrsem{\D}{c^{(n_2)}_2}{\jmpout{\dt'}{\R_1}}{c'_2}$.

            By Property~\ref{prop:btr-timings}, we have
            \begin{align*}
                \dt &= \sum_{i=0}^{n_1 + m_1} \ilen{c^{(i)}_1} + (11 + \MT)\cdot {|\mb{I}_{\atr_1}|}\\
                \dt' &= \sum_{i=0}^{n_2 + m_2} \ilen{c^{(i)}_2} + (11 + \MT)\cdot {|\mb{I}_{\atr_2}|}.
            \end{align*}

            Since by Properties~\ref{prop:ttreq-ilen} and~\ref{prop:ueq-pres-handlereti} we have $\sum_{i=0}^{n_1 + m_1} \ilen{c^{(i)}_1} = \sum_{i=0}^{n_2 + m_2} \ilen{c^{(i)}_2}$ and ${|\mb{I}_{\atr_1}|} = {|\mb{I}_{\atr_2}|}$, we get $\dt = \dt'$ as requested.

        \item \emph{Case $\tb = \cnv$.}
            Then it must be that $\atr'_1 = \handle{k_1}\cdot \usilent \cdots \usilent \cdot \cnv$ and $\atr'_2 = \handle{k_2}\cdot \usilent \cdots \usilent \cdot \cnv$.
            If this was not the case (i.e., if $\atr'_2 = \jmpout{k_2}{\R'_2}$), then $c_2$ could be swapped with $c_1$ (and $c_1$ with $c_2$) in the the statement of this Lemma and the previous case would apply.
            Thus, the thesis follows.
    \end{itemize}
\end{proof}

From the previous two lemmata we can then show the following:
\begin{lemma}\label{lemma:eqtraces}
    Given a context $C = \langle \M_C, \D \rangle$ and two modules $\M_M$ and $\M_{M'}$.
    If $\M_M \ttreq \M_{M'}$ and $\btrsemstar{\D}{\initconf{C}{\M_M}}{\btr}{c_1}$, then $\btrsemstar{\D}{\initconf{C}{\M_{M'}}}{\btr}{c_2}$.
\end{lemma}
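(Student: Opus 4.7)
The plan is to proceed by induction on the length of the coarse-grained trace $\btr$. The base case $\btr = \emptystr$ is immediate from reflexivity of $\btrarrowstar{\cdot}$. The inductive content amounts to lifting the single-step extensions guaranteed by Lemmata~\ref{lemma:umeqtraces} and~\ref{lemma:pmeqtraces} to traces of arbitrary length.

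For the inductive step, I would split $\btr = \btr' \cdot \tb$ so that the hypothesis gives some $c_1'$ with $\btrsemstar{\D}{\initconf{C}{\M_M}}{\btr'}{c_1' \btrarrow{\tb} c_1}$. Applying the induction hypothesis to $\btr'$ yields $\btrsemstar{\D}{\initconf{C}{\M_{M'}}}{\btr'}{c_2'}$ for some $c_2'$, and it remains only to exhibit a matching extension $\btrsem{\D}{c_2'}{\tb}{c_2}$.

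To produce this extension, I would case-split on the current mode of $c_1'$, which by Proposition~\ref{prop:eqmode} coincides with that of $c_2'$. In the unprotected case, Lemma~\ref{lemma:umeqtraces} applies directly with $c_1'$ and $c_2'$ and hands back the required $\tb$-transition. In the protected case, Proposition~\ref{prop:tb-samesort} forces the last observable of $\btr'$ to be a $\jmpin{\R}$ (entering protected mode is the only way to reach it, and $\cnv$ would have halted the computation), so I would rewrite $\btr' = \btr'' \cdot \jmpin{\R}$ and invoke Lemma~\ref{lemma:pmeqtraces}, with both $\R_1$ and $\R_2$ of its statement instantiated to the same $\R$, because the replay provided by the induction hypothesis matches that observable verbatim on the $\M_{M'}$ side.

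I expect essentially no obstacle beyond careful bookkeeping: the heavy lifting, namely that context-side behaviour does not depend on module-side behaviour and vice versa, is already discharged by Lemmata~\ref{lemma:umeqtraces} and~\ref{lemma:pmeqtraces}, and what is left is to tag each inductive step with the right lemma via the parity/mode invariants of Propositions~\ref{prop:eqmode} and~\ref{prop:tb-samesort}. The only delicate point worth double-checking is that, in the protected-mode case, the $\jmpin{\R}$ prefix demanded by Lemma~\ref{lemma:pmeqtraces} must appear verbatim in the $\M_{M'}$ replay, which is guaranteed precisely because the induction hypothesis matches the trace observable-by-observable, including the registers carried by the $\jmpin$ action.
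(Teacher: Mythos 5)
Your proposal is correct and follows essentially the same route as the paper's own proof: induction on the length of $\btr$, followed by a case split on the CPU mode of the configurations reached after the common prefix (impossible to differ by Proposition~\ref{prop:eqmode}), discharging the unprotected case via Lemma~\ref{lemma:umeqtraces} and the protected case via Lemma~\ref{lemma:pmeqtraces}. Your extra remark that in the protected case the prefix must end in a $\jmpin{\R}$ appearing verbatim on both sides (so that Lemma~\ref{lemma:pmeqtraces} applies with $\R_1 = \R_2 = \R$) is a correct and slightly more explicit justification than the paper gives.
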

\begin{proof}
    We can show this by induction on the length $n$ of $\btr$.
    \begin{itemize}
        \item $n = 0$. Since $\btr = \emptystr$, by definition of $\btrarrowstar{\cdot}$, we have $c_1 = \initconf{C}{\M_M} = c_1$.
            Again, by definition of $\btrarrowstar{\cdot}$, we can choose $c_2 = \initconf{C}{\M_{M'}}$ and get the thesis.

        \item $n = n' + 1$.
            The induction hypothesis (IHP) is then:
            \[
                \btrsemstar{\D}{\initconf{C}{\M_M}}{\btrp}{c'_1} \Rightarrow
                \btrsemstar{\D}{\initconf{C}{\M_{M'}}}{\btrp}{c'_2}
            \]
            and we must show that
            \[
                \btrsemstar{\D}{\initconf{C}{\M_M}}{\btrp}{c'_1 \btrarrow{\tb} c_1}
                \Rightarrow
                \btrsemstar{\D}{\initconf{C}{\M_{M'}}}{\btrp}{c'_2 \btrarrow{\tb} c_2}
            \]

            By cases on the CPU mode in $c'_1$ and $c'_2$:
            \begin{itemize}
                \item \emph{Case $\moderelUM{\R'_1[\rpc]}$ and $\moderelUM{\R'_2[\rpc]}$:}
                Follows by (IHP) and Lemma~\ref{lemma:umeqtraces}.
                \item \emph{Case $\moderelPM{\R'_1[\rpc]}$ and $\moderelPM{\R'_2[\rpc]}$:}
                Follows by (IHP) and Lemma~\ref{lemma:pmeqtraces}.
                \item \emph{Case $\moderel{\R'_1[\rpc]}$ and $\moderelp{\R'_2[\rpc]}$ and $\mtt{m} \neq \mtt{m'}$:}
                It never happens, as observed in Proposition~\ref{prop:eqmode}.
            \end{itemize}
    \end{itemize}
\end{proof}

Finally we can prove that $(i)$ from Figure~\ref{fig:strategy-rep} holds, i.e., that if two modules are trace equivalent then they are contextually equivalent in \SL:
\begin{lemma}\label{lemma:refl-tr}
    If $\M_M \ttreq \M_{M'}$ then $\M_M \teq \M_{M'}$.
\end{lemma}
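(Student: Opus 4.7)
The plan is to derive the lemma as an immediate corollary of Lemma~\ref{lemma:eqtraces} and Proposition~\ref{prop:cnvimplconv}, since all the substantive work (the mitigation arguments and the $P$-/$U$-equivalence bookkeeping) has already been carried out in the proof of Lemma~\ref{lemma:eqtraces}. Fix an arbitrary context $C = \langle \M_C, \D \rangle$ and assume $\tconv{C[\M_M]}$; the goal is to conclude $\tconv{C[\M_{M'}]}$, the reverse implication being symmetric because $\ttreq$ is a symmetric relation (it is defined via set equality of trace sets).

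First, I would invoke Proposition~\ref{prop:cnvimplconv} on $\tconv{C[\M_M]}$ to obtain a coarse-grained trace $\btr$ with $\btrsemstar{\D}{\initconf{C}{\M_M}}{\btr \cdot \cnv}{\haltconf}$. Next, since by hypothesis $\M_M \ttreq \M_{M'}$, Lemma~\ref{lemma:eqtraces} applied to the whole trace $\btr \cdot \cnv$ produces a configuration $c_2$ such that $\btrsemstar{\D}{\initconf{C}{\M_{M'}}}{\btr \cdot \cnv}{c_2}$. Finally, inspection of the rules of Figure~\ref{fig:coarsegrained-obs} shows that the only configuration that can appear on the right-hand side of a $\btrarrow{\cnv}$-step is $\haltconf$, so $c_2 = \haltconf$. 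One more application of Proposition~\ref{prop:cnvimplconv}, this time in the backward direction, yields $\tconv{C[\M_{M'}]}$, as required. Swapping the roles of $\M_M$ and $\M_{M'}$ gives the converse implication, so $\M_M \teq \M_{M'}$.

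There is no real obstacle here: the statement is essentially a repackaging of the already established trace-transfer property. The only point that deserves a brief justification is that a $\cnv$-step can only terminate in $\haltconf$, which is immediate from the definitions of $\btrarrow{\cnv}$ (Figure~\ref{fig:coarsegrained-obs}) and of $\atrarrow{\cnv}$ (rule \rulename{(Obs-Final)} in Figure~\ref{fig:finegrained-obs}). All the genuine difficulty of the preservation direction of full abstraction lies upstream, in the proofs of Lemmas~\ref{lemma:umeqtraces} and~\ref{lemma:pmeqtraces}, which use the $U$- and $P$-equivalence machinery and the timing accounting of Property~\ref{prop:btr-timings} to show that the mitigation decouples module and context behaviours.
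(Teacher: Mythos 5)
Your proposal is correct and follows essentially the same route as the paper's own proof: invoke Proposition~\ref{prop:cnvimplconv} to extract a converging trace $\btr\cdot\cnv$, transfer it via Lemma~\ref{lemma:eqtraces}, and conclude with the converse direction of Proposition~\ref{prop:cnvimplconv}, the other implication being symmetric. Your explicit remark that a $\cnv$-step can only land in $\haltconf$ is a small justification the paper leaves implicit, but it does not change the argument.
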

\begin{proof}
    Expanding the definition of $\teq$, the statement becomes:
    \[
        \M_M \ttreq \M_{M'} \Rightarrow (\forall C = \langle \M_C, \D \rangle. \tconv{C[\M_M]} \iff \tconv{C[\M_{M'}]})
    \]

    We split the double implication and we show the two cases independently.
    \begin{itemize}
        \item \emph{Case $\Rightarrow$}, i.e., $\M_M \ttreq \M_{M'} \Rightarrow (\forall C. \tconv{C[\M_M]} \Rightarrow \tconv{C[\M_{M'}]})$.
        By Proposition~\ref{prop:cnvimplconv} there exists $\btr$ such that $\btrsemstar{\D}{\initconf{C}{\M_M}}{\btr\cdot\cnv}{\haltconf}$.

        Since $\M_M \ttreq \M_{M'}$, we know by Lemma~\ref{lemma:eqtraces} that $\btrsemstar{\D}{\initconf{C}{\M_{M'}}}{\btr\cdot\cnv}{\haltconf}$.
        Thus, again by Proposition~\ref{prop:cnvimplconv}, we have $\tconv{C[\M_{M'}]}$.

        \item \emph{Case $\Leftarrow$}, i.e., $\M_M \ttreq \M_{M'} \implies (\forall C. \tconv{C[\M_M]} \Leftarrow \tconv{C[\M_{M'}]})$, symmetric to the previous one.
    \end{itemize}
\end{proof}

\paragraph{Preservation of $\seq$ at \SH.}\label{par:tr-preservation}
    % !TEX root = ../report.tex

In this section we prove the implications $(ii)$ - and consequently $(iii)$ - of Figure~\ref{fig:strategy-rep}, i.e., that $\M_M \seq \M_{M'} \implies \M_M \ttreq \M_{M'}$ and $\M_M \seq \M_{M'} \implies \M_M \teq \M_{M'}$.

For that, we first give a formal definition of \emph{distinguishing traces} for a pair of modules.
Then we give two algorithms that start from two distinguishing traces, their corresponding modules and the distinguishing context in \SL build a memory and a device that, put together as a context, differentiate the two modules in \SH.

\begin{definition}[Distinguishing traces]\label{def:distinguishing-traces}
    Let $\M_M$ and $\M_{M'}$ be two modules.
    We call $\btr = \btr_s \cdot \tb \cdot \btr_e \in \ttr{\M_M}$ and $\btr' = \btr_s \cdot \tb' \cdot \btr'_e \in \ttr{\M_{M'}}$ \emph{distinguishing traces for $\M_M$ and $\M_{M'}$} if
    $\tb \neq \tb'$,
    $\btr \notin \ttr{\M_{M'}}$, $\btr' \notin \ttr{\M_M}$ and
    they are observed under the same context $C^L$, i.e, $\btrsemstar{\D^L}{\initconf{C^L}{\M_M}}{\btr}{c}$ and $\btrsemstar{\D^L}{\initconf{C^L}{\M_{M'}}}{\btr'}{c'}$, for some $c, c'$.
\end{definition}
From now onwards, for simplicity, we write $\tb = \emptystr$ (resp. $\tb'  = \emptystr$) if $\btr$ (resp. $\btr'$) is shorter than $\btr'$ (resp. $\btr$).

\begin{property}\label{prop:distinguishing-exist}
    If $\M_M$ and $\M_{M'}$ are two modules such that $\M_M \nteq \M_{M'}$, then there always exist $\btr$ and $\btr'$ that are distinguishing traces for $\M_M$ and $\M_{M'}$.
\end{property}
\begin{proof}
    From the contrapositive of Lemma~\ref{lemma:refl-tr} we know that $\M_M \nttreq \M_{M'}$, i.e., there exist $\btr \in \ttr{\M_M}$ and $\btr' \in \ttr{\M_{M'}}$ such that $\btr \notin \ttr{\M_{M'}}$ and $\btr \in \ttr{\M_M}$.
    Also, since  $\M_M \nteq \M_{M'}$, we have that there exists a context $C^L$ such that $\tconv{C^L[\M_M]}$ and $\ntconv{C^L[\M_{M'}]}$ (or vice versa)  --- assume wlog $\tconv{C^L[\M_M]}$ and $\ntconv{C^L[\M_{M'}]}$.

    Thus, by Proposition~\ref{prop:cnvimplconv}:
    \begin{align*}
        &\btrsemstar{\D^L}{\initconf{C^L}{\M_M}}{\btr''}{\haltconf}\\
        &\btrsemstar{\D^L}{\initconf{C^L}{\M_{M'}}}{\btr'''}{c \neq \haltconf}
    \end{align*}
    for some $\btr''$ (ending in $\cnv$), $c$ and for all $\btr'''$ that can be observed.

    \noindent
    Indeed, we can always write that $\btr'' = \btr_s \cdot \tb \cdot \btr_e$ and $\btr''' = \btr_s \cdot \tb' \cdot \btr'_e$ where:
    \begin{itemize}
        \item $\btr_s$ is the longest (possibly empty) common prefix of the two traces
        \item $\tb$ and $\tb' \neq \cnv$ are the first different observables -- one of the two may be $\emptystr$ or, by Proposition~\ref{prop:cnvimplconv}, it may be $\tb = \cnv$
        \item $\btr_e$ and $\btr'_e$ are the (possibly empty) remainders of the two traces
    \end{itemize}
    Thus, since $\btr''$ and $\btr'''$ are also observed under the same context $C^L$, they are distinguishing traces.
\end{proof}

\paragraph{First algorithm: memory initialization.}

The pseudo-code in Algorithm~\ref{algo:memoryctx} describes how to build the memory of the distinguishing context starting from two distinguishing traces for the modules, $\btr = \btr_s \cdot \tb \cdot \btr_e$ and $\btr' = \btr_s \cdot \tb' \cdot \btr'_e$ (cf. Definition~\ref{def:distinguishing-traces}).
Throughout the algorithm we assume as given an \emph{assembler} function $\mi{encode}$ that takes an assembly instruction as input and returns its encoding as one or two words -- according to the size specified by Table~\ref{tab:op-summary-rep}.
Also, we assume that there is enough space in the unprotected memory to contain the context code: we do not lack generality since the required space for the code is bounded by a constant ($\leq 25$ words) plus the number of different addresses which the protected code jumps to (that must be part of the unprotected memory anyway).
%Moreover, the algorithm uses five constants: each of them represents an unprotected memory address and is assumed to be different from $(i)$ all the others, $(ii)$ different from $\mtt{0xFFFE}$ and $(iii)$ different from any address $\R[\rpc]$ such that $\jmpout{\dt}{\R}$ belongs to one of the input distinguishing traces.
Moreover, the algorithm uses five constants: each of them represents an unprotected memory address assumed different from $(i)$ each other, $(ii)$ $\mtt{0xFFFE}$ and $(iii)$ any address $\R[\rpc]$ such that $\jmpout{\dt}{\R}$ belongs to one of the input distinguishing traces.
For simplicity, assume that no jumps to $\mtt{0xFFFE}$ are performed by the modules.
Note that this limitation is easily lifted by changing Algorithm~\ref{algo:memoryctx} a bit: upon the jump into protected mode right before the said jump to $\mtt{0xFFFE}$ the context has to write the right code to deal with it in $\mtt{0xFFFE}$ and, afterwards, restore the old content of such an address.

Intuitively, the algorithm first initializes the memory of the context $\M_C$ by filling it with the code in~\figurename~\ref{fig:memctx}.
Then, if $\tb$ and $\tb'$ differ because they are both $\jmpout{\cdot}{\cdot}$ but with different registers, two cases arise:
\begin{itemize}
    \item If the register differentiating $\tb$ and $\tb'$ is $\mtt{r} \neq \rpc$, then, starting at address $\mtt{A\_RDIFF}$, add the code to request a new program counter (that will depend on the value of $\rn{r}$) to the device;
    \item Otherwise, add the code to request the new program counter at the addresses to which each of the modules jumps (call those addresses $\mi{joutd}$ and $\mi{joutd'}$).
\end{itemize}

The algorithm then adds the code to deal with jumps out from the protected module to unprotected code for any $\jmpout{\dt}{\R}$ in $\btr_s$ such that $\R[\rpc] \neq \mtt{joutd}$ and $\R[\rpc] \neq \mtt{joutd'}$.
Finally, the algorithm returns the memory built and the values of $\mi{joutd}$ and $\mi{joutd'}$ (to be used afterwards).
\begin{algorithm}
\begin{algorithmic}[1]
    \Procedure{BuildMem}{$\btr = \btr_s \cdot \tb \cdot \btr_e, \btr' = \btr_s \cdot \tb' \cdot \btr'_e$}
        \State $\quad\triangleright$ $\btr$ and $\btr'$ are distinguishing traces w. common prefix $\btr_s$
        \State $\mi{joutd} = \mi{joutd'} = \bot$
        \State $\M_C = \text{filled as described in Figure~\ref{fig:memctx}}$
        \If {$\tb = \jmpout{\dt}{\R} \land \tb' = \jmpout{\dt}{\R'} \land (\exists \rn{r}.\, \R[\rn{r}] \neq \R'[\rn{r}])$}
            \If {$\rn{r} \neq \rpc$}
                \State $\M_C = \M_C[\mtt{A\_RDIFF} \mapsto \mi{encode}(\mtt{OUT\ r}), \mtt{A\_RDIFF} + 1 \mapsto \mi{encode}(\mtt{IN\ \rpc})]$
            \Else
                \State $\mi{joutd} = \R[\rpc]$
                \State $\mi{joutd'} = \R'[\rpc]$
                \State $\M_C = \M_C[\mi{joutd} \mapsto \mi{encode}(\mtt{OUT\ \rpc}), \mi{joutd} + 1 \mapsto \mi{encode}(\mtt{IN\ \rpc})]$
                \State $\M_C = \M_C[\mi{joutd'} \mapsto \mi{encode}(\mtt{OUT\ \rpc}), \mi{joutd'} + 1 \mapsto \mi{encode}(\mtt{IN\ \rpc})]$
            \EndIf
        \EndIf

        \For {$\jmpout{\dt}{\R} \in \btr_s$}
            \If {$\R[\rpc] \neq \mi{joutd} \land \R[\rpc] \neq \mi{joutd'}$}
                \State $\M_C = \M_C[\R[\rpc] \mapsto \mi{encode}(\mtt{IN\ \rpc})]$
            \EndIf
        \EndFor
        \State \textbf{return } $(\M_C, \mi{joutd}, \mi{joutd'})$
    \EndProcedure
\end{algorithmic}

\caption{Builds the memory of the distinguishing context.}\label{algo:memoryctx}
\end{algorithm}
\begin{figure}
    \begin{lstlisting}
A_HALT. HLT

A_LOOP. JMP pc

A_JIN . IN sp
      . IN sr
      . IN $\reg{3}$
      . IN $\reg{4}$
      . IN $\reg{5}$
      . IN $\reg{6}$
      . IN $\reg{7}$
      . IN $\reg{8}$
      . IN $\reg{9}$
      . IN $\reg{10}$
      . IN $\reg{11}$
      . IN $\reg{12}$
      . IN $\reg{13}$
      . IN $\reg{14}$
      . IN $\reg{15}$
      . IN pc

A_EP  . OUT pc
      . IN pc

0xFFFE. A_EP
    \end{lstlisting}

    \caption{Initial content of unprotected memory as used by Algorithm~\ref{algo:memoryctx}.}\label{fig:memctx}
\end{figure}

\paragraph{Second algorithm: device construction.}

This second algorithm iteratively builds a device that cooperates with the memory of the context given by Algorithm~\ref{algo:memoryctx} to distinguish $\M_M$ from $\M_{M'}$.

The first two parameters of $\textsc{BuildDevice}$ -- $\mi{joutd}$ and $\mi{joutd'}$ -- are differentiating $\jmpout{\cdot}{\cdot}$ addresses (if any), as returned by the $\textsc{BuildMem}$ (Algorithm~\ref{algo:memoryctx}).
Parameters $\btr$ and $\btr'$ are distinguishing traces for $\M_M$ and $\M_{M'}$ generated under the context $C^L$ (cf. Definition~\ref{def:distinguishing-traces}).
Finally, $\mi{term}$ (resp.~$\mi{term'}$) denotes whether $\M_M$ (resp.~$\M_{M'}$) converges in a context with no interrupts after the last jump into protected mode.
\begin{algorithm}
    \caption{Builds the device of the distinguishing context.}\label{algo:devicectx}
    \begin{algorithmic}[1]
        \Procedure{BuildDevice}{$\mi{joutd}, \mi{joutd'}, \btr = \tb_0 \cdots \tb_{n-1} \cdot \tb \cdot \btr_e, \btr' = \tb_0 \cdots \tb_{n-1} \cdot \tb' \cdot \btr'_e, \mi{term}, \mi{term'}, C^L$}
            \State $\quad\triangleright$ $\mi{joutd}, \mi{joutd'}$ are differentiating $\jmpout{\cdot}{\cdot}$ addresses, if any
            \State $\quad\triangleright$ $\btr$ and $\btr'$ are distinguishing traces generated by the context $C^L$
            \State $\quad\triangleright$ $\mi{term}$ (resp.~$\mi{term'}$) denotes whether $\M_M$ (resp.~$\M_{M'}$) converges in a context with no interrupts after the last jump into protected mode
            \State $\Delta = \{ 0 \}$
            \State $\xleadsto{\cdot}_D\, = \emptyset$
            \State $\delta_L = 0$ \Comment{This variable keeps track of the last added device state.} \label{algo:devicectx:deltal}

            \For {$i \in 0 .. n - 1$}
                    \If {$\tb_i = \jmpin{\R}$}
                        \State $\Delta = \Delta \cup \{ \delta_L +1 , \ldots, \delta_L + 17 \}$
                        \State $\xleadsto{\cdot}_D\, =\, \xleadsto{\cdot}_D \cup\, \{ (\delta_L, wr(w), \delta_L) \mid w \in \mi{Word}\}$
                        \State $\xleadsto{\cdot}_D\, =\, \xleadsto{\cdot}_D \cup\, \{ (\delta_L, rd(\mtt{A\_JIN}), \delta_L+1) \}$ \label{algo:devicectx:ajin}
                        \State $\xleadsto{\cdot}_D\, =\, \xleadsto{\cdot}_D \cup\, \{ (\delta_L+1, rd(\R[\rsp]), \delta_L+2) \}$
                        \State $\xleadsto{\cdot}_D\, =\, \xleadsto{\cdot}_D \cup\, \{ (\delta_L+2, rd(\R[\rsr]), \delta_L+3) \}$
                        \State $\xleadsto{\cdot}_D\, =\, \xleadsto{\cdot}_D \cup\, \{ (\delta_L+i, rd(\R[\rn{i}]), \delta_L+i+1) \mid 3 \leq i \leq 15 \}$
                        \State $\xleadsto{\cdot}_D\, =\, \xleadsto{\cdot}_D \cup\, \{ (\delta_L+16, rd(\R[\rpc]), \delta_L+17) \}$
                        \State $\xleadsto{\cdot}_D\, =\, \xleadsto{\cdot}_D \cup\, \{ (\delta_L+i, \epsilon, \delta_L+i) \mid 0 \leq i \leq 16 \}$
                        \State $\delta_L = \delta_L + 17$
                    \ElsIf {$\tb_i = \jmpout{\dt}{\R}$}
                        \State $\xleadsto{\cdot}_D\, =\, \xleadsto{\cdot}_D\, \cup \{ (\delta_L, \epsilon, \delta_L) \} \cup \{ (\delta_L, wr(w), \delta_L) \mid w \in \mi{Word} \}$
                    % \ElsIf {$\tb_i = \jmpout{\dt}{\R} \land (\R[\rpc] = \mi{joutd} \lor \R[\rpc] = \mi{joutd'})$}
                    %     \State $\Delta = \Delta \cup \{ \delta_L + 1 \}$
                    %     \State $\xleadsto{\cdot}_D\, =\, \xleadsto{\cdot}_D\, \cup \{ (\delta_L, \epsilon, \delta_L) \} \cup \{ (\delta_L, wr(w), \delta_L + 1) \mid w \in \mi{Word} \}$
                    \EndIf
            \EndFor

            \If {$\tb = \jmpout{\dt}{\R} \land \tb' = \jmpout{\dt'}{\R'} \land (\exists \rn{r}.\, \R[\rn{r}] \neq \R'[\rn{r}])$}
                \If {$\rn{r} \neq \rpc$}
                    \State $\Delta = \Delta \cup \{ \delta_L + 1 , \ldots, \delta_L + 4 \}$
                    \State $\xleadsto{\cdot}_D\, =\, \xleadsto{\cdot}_D \cup\, \{ (\delta_L, rd(\mtt{A\_RDIFF}), \delta_L+1) \}$
                    \State $\xleadsto{\cdot}_D\, =\, \xleadsto{\cdot}_D \cup\, \{ (\delta_L+1, wr(\R[\rpc]), \delta_L+2) \}$
                    \State $\xleadsto{\cdot}_D\, =\, \xleadsto{\cdot}_D \cup\, \{ (\delta_L+1, wr(\R'[\rpc]), \delta_L+3) \}$
                    \State $\xleadsto{\cdot}_D\, =\, \xleadsto{\cdot}_D \cup\, \{ (\delta_L+2, rd(\mtt{A\_HALT}), \delta_L+4) \}$ \label{algo:devicectx:ahalt-rdiff}
                    \State $\xleadsto{\cdot}_D\, =\, \xleadsto{\cdot}_D \cup\, \{ (\delta_L+3, rd(\mtt{A\_LOOP}), \delta_L+4) \}$ \label{algo:devicectx:aloop-rdiff}
                    \State $\xleadsto{\cdot}_D\, =\, \xleadsto{\cdot}_D \cup\, \{ (\delta_L+i, \epsilon, \delta_L+i) \mid 0 \leq i \leq 3 \}$
                    \State $\delta_L = \delta_L + 4$
                \Else
                    \State $\Delta = \Delta \cup \{ \delta_L + 1 , \ldots, \delta_L + 3 \}$
                    \State $\xleadsto{\cdot}_D\, =\, \xleadsto{\cdot}_D \cup\, \{ (\delta_L, wr(\mi{joutd}), \delta_L+1) \}$
                    \State $\xleadsto{\cdot}_D\, =\, \xleadsto{\cdot}_D \cup\, \{ (\delta_L, wr(\mi{joutd'}), \delta_L+2) \}$
                    \State $\xleadsto{\cdot}_D\, =\, \xleadsto{\cdot}_D \cup\, \{ (\delta_L+1, rd(\mtt{A\_HALT}), \delta_L+3) \}$ \label{algo:devicectx:ahalt-joutd}
                    \State $\xleadsto{\cdot}_D\, =\, \xleadsto{\cdot}_D \cup\, \{ (\delta_L+2, rd(\mtt{A\_LOOP}), \delta_L+3) \}$ \label{algo:devicectx:aloop-joutd}
                    \State $\xleadsto{\cdot}_D\, =\, \xleadsto{\cdot}_D \cup\, \{ (\delta_L+i, \epsilon, \delta_L+i) \mid 0 \leq i \leq 2 \}$
                    \State $\delta_L = \delta_L + 3$
                \EndIf

            \State continues ...
    \algstore{builddev}
    \end{algorithmic}
\end{algorithm}

\begin{algorithm}
    \begin{algorithmic}[1]
        \algrestore{builddev}
            \State ... continued
            \ElsIf {$\tb = \jmpout{\dt}{\R} \land \tb' = \jmpout{\dt'}{\R} \land \dt \neq \dt'$}
                \State $\quad\triangleright$ Let $\btrsemstar{\D^L}{\initconf{C}{\M_M}}{\btr_s}{c_1}$ and $\btrsem{\nI{\D^L}}{c_1}{\jmpout{\nI{\dt}}{\R}}{c'_1}$.
                \State $\quad\triangleright$ Let $\btrsemstar{\D^L}{\initconf{C}{\M_{M'}}}{\btr_s}{c_2}$ and $\btrsem{\nI{\D^L}}{c_2}{\jmpout{\nI{\dt'}}{\R}}{c'_2}$.
                \State $i = t'_1 - t_1$
                \State $i' = t'_2 - t_2$
                \State $\Delta = \Delta \cup \{ \delta_L + 1 , \ldots, \delta_L + \mi{max}(i, i') + 1 \}$
                \State $\xleadsto{\cdot}_D\, =\, \xleadsto{\cdot}_D \cup\, \{ (\delta_L+\mi{min}(i, i'), rd(\mtt{A\_HALT}), \delta_L + \mi{max}(i, i') + 1) \}$ \label{algo:devicectx:ahalt-time}
                \State $\xleadsto{\cdot}_D\, =\, \xleadsto{\cdot}_D \cup\, \{ (\delta_L+\mi{max}(i, i'), rd(\mtt{A\_LOOP}), \delta_L + \mi{max}(i, i') + 1)) \}$\label{algo:devicectx:aloop-time}
                \State $\xleadsto{\cdot}_D\, =\, \xleadsto{\cdot}_D \cup\, \{ (\delta_L+k, \epsilon, \delta_L+k+1) \mid 0 \leq k \leq \mi{max}(i, i')  \}$
                \State $\delta_L = \delta_L + \mi{max}(i, i') + 1$
            \ElsIf {$\tb = \cnv \land \tb' = \jmpout{\dt}{\R}$}
                \If {$\mi{term}$}
                    \State $\Delta = \Delta \cup \{ \delta_L + 1 , \ldots, \delta_L + 2 \}$
                    \State $\xleadsto{\cdot}_D\, =\, \xleadsto{\cdot}_D \cup\, \{ (\delta_L, wr(\mtt{A\_EP}), \delta_L+1) \}$
                    \State $\xleadsto{\cdot}_D\, =\, \xleadsto{\cdot}_D \cup\, \{ (\delta_L+1, rd(\mtt{A\_HALT}), \delta_L+2) \}$ \label{algo:devicectx:ahalt-cnv}
                    \State $\xleadsto{\cdot}_D\, =\, \xleadsto{\cdot}_D \cup\, \{ (\delta_L, rd(\mtt{A\_LOOP}), \delta_L+2) \}$ \label{algo:devicectx:aloop-cnv}
                    \State $\xleadsto{\cdot}_D\, =\, \xleadsto{\cdot}_D \cup\, \{ (\delta_L, wr(w), \delta_L) \mid w \in \mi{Word} \setminus \{ \mtt{A\_EP} \} \}$
                    \State $\xleadsto{\cdot}_D\, =\, \xleadsto{\cdot}_D \cup\, \{ (\delta_L+i, \epsilon, \delta_L+i) \mid 0 \leq i \leq 1 \}$
                    \State $\delta_L = \delta_L + 2$
                \Else
                    \State $\Delta = \Delta \cup \{ \delta_L + 1 \}$
                    \State $\xleadsto{\cdot}_D\, =\, \xleadsto{\cdot}_D \cup\, \{ (\delta_L, rd(\mtt{A\_HALT}), \delta_L+1) \}$ \label{algo:devicectx:aloop-cnv-noterm}
                    \State $\xleadsto{\cdot}_D\, =\, \xleadsto{\cdot}_D \cup\, \{ (\delta_L, wr(w), \delta_L) \mid w \in \mi{Word} \}$
                    \State $\xleadsto{\cdot}_D\, =\, \xleadsto{\cdot}_D \cup\, \{ (\delta_L, \epsilon, \delta_L) \}$
                    \State $\delta_L = \delta_L + 2$
                \EndIf
            \ElsIf {$\tb = \jmpout{\dt}{\R} \land \tb' = \emptystr$}
                \State $\quad\triangleright$ As the previous case, with $\mi{term'}$ in place of $\mi{term}$.
            \Else
                \State \textbf{return } $\bot$
            \EndIf

            \State $\D = \langle \Delta, 0, \xleadsto{\cdot}_D \rangle$ \Comment{As above, assume to have a sink state where all undefined actions lead to.}
            \State \textbf{return } $\D$
        \EndProcedure
    \end{algorithmic}
\end{algorithm}
The first two lines define the initial set of states, which will be a finite subset of $\mb{N}$ in the end, and the initial \emph{empty} transition function.

Line~\ref{algo:devicectx:deltal} defines $\delta_L$ that records the last state that was added to the I/O device.
At the beginning it is initialized to $0$.

The algorithm then proceeds by iterating over all the observables in $\btr_s$ (all the steps below also update $\Delta$ and $\delta_L$, but we omit to state it explicitly):
\begin{itemize}
    \item \emph{Case $\tb_i = \tb'_i = \jmpin{\R}$.}
        In this case we know that either this is the first observable or previous one was a $\jmpout{\cdot}{\cdot}$.
        Since the memory is obtained following Algorithm~\ref{algo:memoryctx}, we know that in both cases we reach the instruction $\IN \rpc$ (either at address $\mtt{A\_EP}$ or those of jumps out of protected mode), waiting for the next program counter (sometimes before that we perform a write, which shall be ignored).
        Thus, the device ignores any write operation and replies with $\mtt{A\_JIN}$ (line~\ref{algo:devicectx:ajin}).
        Then it starts to send the values of the registers in $\R$, so to simulate in \SH what happens in \SL and to match the requests from the code.
        To help the intuition Figure~\ref{fig:devicectx:same-jmpin} depicts how the transition function looks after the update (the solid black state denotes the new value of $\delta_L$).
    \item \emph{Case $\tb_i = \tb'_i = \jmpout{\dt}{\R}$.}
        The device is simply updated with a loop on $\delta_L$ with action $\epsilon$ and ignores any write operation (so as to deal with $\R[\rpc] = \mi{joutd}$ or $\R[\rpc] = \mi{joutd'}$).
        Figure~\ref{fig:devicectx:same-jmpout} pictorially represents this case.
\end{itemize}

Then, when $\btr_s$ ends, the algorithm analyses $\tb$ and $\tb'$ and sets up the device to differentiate the two modules:
\begin{itemize}
    \item \emph{Case $\tb = \jmpout{\dt}{\R} \land \tb' = \jmpout{\dt'}{\R'} \land (\exists \rn{r}.\, \R[\rn{r}] \neq \R'[\rn{r}])$.}
        In this case the differentiation is due to a register, and two further sub-cases may arise, depending on whether it is $\rpc$.
        If the register is $\rpc$ then the device waits for the differentiating value for the context (that is executing code at $\mi{joutd}$ and $\mi{joutd'}$ by construction) and based on that value, it replies with either $\mtt{A\_HALT}$ (line~\ref{algo:devicectx:ahalt-joutd}) or $\mtt{A\_LOOP}$ (line~\ref{algo:devicectx:aloop-joutd}).
        Instead, if the differentiation register is not $\rpc$ then the code of the context is waiting for the next program counter and the context replies with $\mtt{A\_RDIFF}$.
        From this address we find the code that sends the differentiating register and, based on that value, the device replies with either $\mtt{A\_HALT}$ (line~\ref{algo:devicectx:ahalt-rdiff}) or $\mtt{A\_LOOP}$ (line~\ref{algo:devicectx:aloop-rdiff}).
        Figures~\ref{fig:devicectx:diff-jmpout-regs} and~\ref{fig:devicectx:diff-jmpout-regspc} may help the intuition.

    \item \emph{Case $\tb = \jmpout{\dt}{\R} \land \tb' = \jmpout{\dt'}{\R} \land \dt \neq \dt'$.}
        This case is probably the most interesting since differentiation happens in \SL due to timings.
        However, different timings in \SL correspond to different timings in \SH (as observed in proof of Property~\ref{prop:build-distinguishing}), and the device is programmed to reply with either $\mtt{A\_HALT}$ (line~\ref{algo:devicectx:ahalt-time}) or $\mtt{A\_LOOP}$ (line~\ref{algo:devicectx:aloop-time}) depending on the time value.
        Figure~\ref{fig:devicectx:diff-jmpout-time} intuitively depicts this situation.

    \item \emph{Case $\tb = \cnv \land \tb' = \jmpout{\dt}{\R}$.}
        In this case $\cnv$ may occur during an interrupt service routine.
        We then have two sub-cases, depending on whether the first module terminates when executed in a context with no interrupts after the last jump into protected mode or not (i.e., encoded by the value of $\mi{term}$).
        When $\mi{term}$ holds, the first module makes the CPU go through an exception handling configuration that jumps to $\mtt{A\_EP}$ and the device instructs the code to jump to $\mtt{A\_HALT}$ (line~\ref{algo:devicectx:ahalt-cnv}), while for the second module the CPU jumps to any other location ($\mtt{A\_EP}$ is chosen to be different from any other jump out address!) and is instructed to jump to $\mtt{A\_LOOP}$ (line~\ref{algo:devicectx:aloop-cnv}).
        When $\mi{term}$ does not hold, the first module diverges, while for the second module the CPU jumps to a location in unprotected code and it is instructed to jump to $\mtt{A\_HALT}$ (line~\ref{algo:devicectx:aloop-cnv-noterm}).
        Figures~\ref{fig:devicectx:diff-cnv-jmpout-term} and ~\ref{fig:devicectx:diff-cnv-jmpout-noterm} may help the intuition.

    \item \emph{Case $\tb = \jmpout{\dt}{\R} \land \tb' = \emptystr$.}
        Analogous to the previous case.

    \item \emph{Otherwise.}
        No other cases may arise, as noted in Property~\ref{prop:builddevice-isdevice}.
\end{itemize}

Finally, the algorithm returns a device with the set of states $\Delta$, the initial state $0$ and the transition function built as just explained.
% !TEX root = ../report.tex
\tikzset{every state/.style={minimum size=1em}}
\begin{figure}
    \begin{subfigure}[b]{0.7\textwidth}
        \centering
        \begin{tikzpicture}[>=stealth',shorten >=1pt, node distance=2cm]
            \tikzstyle{ndS} = [state, fill]
            \node[state] (dS)      {$\delta_L$};
            \node[state] (d0) [right of=dS]  { };
            \node[state] (d1) [right of=d0]  { };
            \node[state] (d2) [right of=d1]  { };
            \node[state] (d3) [right of=d2]  { };
            \node[state] (d4) [right of=d3]  { };
            \node[state] (d5) [below of=d4]  { };
            \node[state] (d6) [left of=d5]  { };
            \node[state] (d7) [left of=d6]  { };
            \node[state] (d8) [left of=d7]  { };
            \node[state] (d9) [left of=d8]  { };
            \node[state] (d10) [left of=d9]  { };
            \node[state] (d11) [below of=d10]  { };
            \node[state] (d12) [right of=d11]  { };
            \node[state] (d13) [right of=d12]  { };
            \node[state] (d14) [right of=d13]  { };
            \node[state] (d15) [right of=d14]  { };
            \node[ndS] (d16) [right of=d15]  { };

            \path[->] (dS)  edge [loop above] node {$\epsilon$} (dS);
            \path[->] (dS)  edge [loop left] node {\footnotesize $\mi{wr}(\_)$} (dS);
            \path[->] (d0)  edge [loop above] node {$\epsilon$} (d0);
            \path[->] (d1)  edge [loop above] node {$\epsilon$} (d1);
            \path[->] (d2)  edge [loop above] node {$\epsilon$} (d2);
            \path[->] (d3)  edge [loop above] node {$\epsilon$} (d3);
            \path[->] (d4)  edge [loop above] node {$\epsilon$} (d4);
            \path[->] (d5)  edge [loop right] node {$\epsilon$} (d5);
            \path[->] (d6)  edge [loop above] node {$\epsilon$} (d6);
            \path[->] (d7)  edge [loop above] node {$\epsilon$} (d7);
            \path[->] (d8)  edge [loop above] node {$\epsilon$} (d8);
            \path[->] (d9)  edge [loop above] node {$\epsilon$} (d9);
            \path[->] (d10)  edge [loop above] node {$\epsilon$} (d10);
            \path[->] (d11)  edge [loop left] node {$\epsilon$} (d11);
            \path[->] (d12)  edge [loop above] node {$\epsilon$} (d12);
            \path[->] (d13)  edge [loop above] node {$\epsilon$} (d13);
            \path[->] (d14)  edge [loop above] node {$\epsilon$} (d14);
            \path[->] (d15)  edge [loop above] node {$\epsilon$} (d15);
            \path[->] (dS)  edge [above] node {\footnotesize $\mi{rd}(\mtt{JIN})$} (d0);
            \path[->] (d0)  edge [above] node {\footnotesize $\mi{rd}(\R[\rsp])$} (d1);
            \path[->] (d1)  edge [above] node {\footnotesize $\mi{rd}(\R[\rsr])$} (d2);
            \path[->] (d2)  edge [above] node {\footnotesize $\mi{rd}(\R[\reg{3}])$} (d3);
            \path[->] (d3)  edge [above] node {\footnotesize $\mi{rd}(\R[\reg{4}])$} (d4);
            \path[->] (d4)  edge [bend left, sloped, above] node {\footnotesize $\mi{rd}(\R[\reg{5}])$} (d5);
            \path[->] (d5)  edge [above] node {\footnotesize $\mi{rd}(\R[\reg{6}])$} (d6);
            \path[->] (d6)  edge [above] node {\footnotesize $\mi{rd}(\R[\reg{7}])$} (d7);
            \path[->] (d7)  edge [above] node {\footnotesize $\mi{rd}(\R[\reg{8}])$} (d8);
            \path[->] (d8)  edge [above] node {\footnotesize $\mi{rd}(\R[\reg{9}])$} (d9);
            \path[->] (d9)  edge [above] node {\footnotesize $\mi{rd}(\R[\reg{10}])$} (d10);
            \path[->] (d10)  edge [bend right, sloped, below] node {\footnotesize $\mi{rd}(\R[\reg{11}])$} (d11);
            \path[->] (d11)  edge [above] node {\footnotesize $\mi{rd}(\R[\reg{12}])$} (d12);
            \path[->] (d12)  edge [above] node {\footnotesize $\mi{rd}(\R[\reg{13}])$} (d13);
            \path[->] (d13)  edge [above] node {\footnotesize $\mi{rd}(\R[\reg{14}])$} (d14);
            \path[->] (d14)  edge [above] node {\footnotesize $\mi{rd}(\R[\reg{15}])$} (d15);
            \path[->] (d15)  edge [above] node {\footnotesize $\mi{rd}(\R[\rpc])$} (d16);
        \end{tikzpicture}

        \caption{The case of $\tb_i = \tb'_i = \jmpin{\R}$.} \label{fig:devicectx:same-jmpin}
    \end{subfigure}
    ~
    \begin{subfigure}[b]{0.2\textwidth}
        \centering
        \begin{tikzpicture}[>=stealth',shorten >=1pt, node distance=2.8cm]
            \tikzstyle{ndS} = [state, fill, text=white, font=\bfseries]
            \node[ndS] (dS)      {$\delta_L$};
            \path[->] (dS)  edge [loop above] node {$\epsilon$} (dS);
            \path[->] (dS)  edge [loop below] node {\footnotesize $\mi{wr}(\_)$} (dS);
        \end{tikzpicture}

        \caption{The case of $\tb_i = \tb'_i = \jmpout{\dt}{\R}$.} \label{fig:devicectx:same-jmpout}
    \end{subfigure}

    \bigskip
    \begin{subfigure}[b]{0.6\textwidth}
        \centering
        \begin{tikzpicture}[>=stealth',shorten >=1pt, node distance=2.8cm]
            \tikzstyle{ndS} = [state, fill]
            \node[state] (dS)      {$\delta_L$};
            \node[state] (d0) [right of=dS]  { };
            \node[state] (d1) [right of=d0]  { };
            \node[state] (d2) [below of=d0]  { };
            \node[ndS] (d3) [right of=d1]  { };

            \path[->] (dS)  edge [loop above] node {$\epsilon$} (dS);
            \path[->] (d0)  edge [loop above] node {$\epsilon$} (d0);
            \path[->] (d1)  edge [loop above] node {$\epsilon$} (d1);
            \path[->] (d2)  edge [loop left] node {$\epsilon$} (d2);
            %\path[->] (d3)  edge [loop above] node {$\epsilon$} (d3);
            \path[->] (dS)  edge [above] node {\footnotesize $\mi{rd}(\mtt{A\_RDIFF})$} (d0);
            \path[->] (d0)  edge [sloped, anchor=center,above] node {\footnotesize $\mi{wr}(\R'[\mtt{r}])$} (d1);
            \path[->] (d0)  edge [sloped, anchor=center,below] node {\footnotesize $\mi{wr}(\R[\mtt{r}])$} (d2);
            \path[->] (d1)  edge [sloped, anchor=center,above] node {\footnotesize $\mi{rd}(\mtt{A\_LOOP})$} (d3);
            \path[->] (d2)  edge [sloped, anchor=center, below, bend right] node {\footnotesize $\mi{rd}(\mtt{A\_HALT})$} (d3);
        \end{tikzpicture}

        \caption{The case of $\tb_i = \jmpout{\dt}{\R} \land \tb'_i = \jmpout{\dt'}{\R'} \land (\exists \rn{r}.\, \R[\rn{r}] \neq \R'[\rn{r}])$.} \label{fig:devicectx:diff-jmpout-regs}
    \end{subfigure}
    ~
    \begin{subfigure}[b]{0.4\textwidth}
        \centering
        \begin{tikzpicture}[>=stealth',shorten >=1pt, node distance=2.8cm]
            \tikzstyle{ndS} = [state, fill]
            \node[state] (dS)      {$\delta_L$};
            \node[state] (d0) [right of=dS]  { };
            \node[state] (d1) [below of=dS]  { };
            \node[ndS] (d2) [right of=d0]  { };

            \path[->] (dS)  edge [loop above] node {$\epsilon$} (dS);
            \path[->] (d0)  edge [loop above] node {$\epsilon$} (d0);
            \path[->] (d1)  edge [loop left] node {$\epsilon$} (d1);
            \path[->] (dS)  edge [sloped, anchor=center,above] node {\footnotesize $\mi{wr}(\mi{joutd})$} (d0);
            \path[->] (dS)  edge [sloped, anchor=center,below] node {\footnotesize $\mi{wr}(\mi{joutd'})$} (d1);
            \path[->] (d0)  edge [sloped, anchor=center,above] node {\footnotesize $\mi{rd}(\mtt{A\_LOOP})$} (d2);
            \path[->] (d1)  edge [sloped, anchor=center, below, bend right] node {\footnotesize $\mi{rd}(\mtt{A\_HALT})$} (d2);
        \end{tikzpicture}

        \caption{The case of $\tb_i = \jmpout{\dt}{\R} \land \tb'_i = \jmpout{\dt'}{\R'} \land  \R[\rpc] \neq \R'[\rpc]$.} \label{fig:devicectx:diff-jmpout-regspc}
    \end{subfigure}

    \bigskip
    \begin{subfigure}[b]{\textwidth}
        \centering
        \begin{tikzpicture}[>=stealth',shorten >=1pt, node distance=2.8cm]
            \tikzstyle{ndS} = [state, fill]
            \tikzstyle{empty} = [draw=none,fill=none]
            \node[state] (dS)      {$\delta_L$};
            \node[empty] (d0) [right of=dS]  {$\ldots$};
            \node[state,outer sep=2.3mm] (d1) [right of=d0]  { };
            \node[empty] (d2) [right of=d1]  {$\ldots$};
            \node[state,outer sep=2.3mm] (d3) [right of=d2]  { };
            \node[ndS] (d4) [below of=d3]  { };

            % \path[->] (dS)  edge [loop above] node {$\epsilon$} (dS);
            % \path[->] (d0)  edge [loop above] node {$\epsilon$} (d0);
            \path[->] (dS)  edge [sloped, anchor=center,above] node {$\epsilon$} (d0);
            \path[->] (d0)  edge [sloped, anchor=center,above] node {$\epsilon$} (d1);
            \path[->] (d1)  edge [sloped, anchor=center,above] node {$\epsilon$} (d2);
            \path[->] (d2)  edge [sloped, anchor=center,above] node {$\epsilon$} (d3);
            \path[->] (d1)  edge [sloped, below, bend right] node {\footnotesize $\mi{rd}(\mtt{A\_LOOP})$} (d4);
            \path[->] (d3)  edge [sloped, above] node {\footnotesize $\mi{rd}(\mtt{A\_HALT})$} (d4);

            %\draw [brace] (dS.north) -- node {${\dt}$} (d1.north);
            \draw [decorate, decoration={brace, raise=2mm, amplitude=5pt}] (dS.north) -- node[sloped, above=3mm] {\footnotesize${\mi{min}(i, i')}$} (d1.north);
            \draw [decorate, decoration={brace, raise=2mm, amplitude=5pt}] (d1.north) -- node[sloped, above=3mm] {\footnotesize${\mi{max}(i, i') - \mi{min}(i, i')}$} (d3.north);
        \end{tikzpicture}

        \caption{The case of $\tb_i = \jmpout{\dt}{\R} \land \tb'_i = \jmpout{\dt'}{\R} \land \dt \neq \dt'$. Let $i$ and $i'$ as in Algorithm~\ref{algo:devicectx}.} \label{fig:devicectx:diff-jmpout-time}
    \end{subfigure}

    \bigskip
    \begin{subfigure}[b]{0.5\textwidth}
        \centering
        \begin{tikzpicture}[>=stealth',shorten >=1pt, node distance=2.8cm]
            \tikzstyle{ndS} = [state, fill]
            \tikzstyle{empty} = [draw=none,fill=none]
            \node[state] (dS)      {$\delta_L$};
            \node[state] (d0) [right of=dS]  { };
            \node[ndS] (d1) [right of=d0]  { };

            \path[->] (dS)  edge [loop above] node {$\epsilon$} (dS);
            \path[->] (dS)  edge [loop below] node {\footnotesize $\mi{wr}(w), w \neq \mtt{A\_EP}$} (dS);
            % \path[->] (d0)  edge [loop above] node {$\epsilon$} (d0);
            \path[->] (dS)  edge [sloped, anchor=center,above] node {{\footnotesize $\mi{wr}(\mtt{A\_EP})$}} (d0);
            \path[->] (d0)  edge [sloped, anchor=center,above] node {{\footnotesize $\mi{rd}(\mtt{A\_HALT})$}} (d1);
            \path[->] (dS)  edge [sloped, anchor=center, below, bend right] node {{\footnotesize $\mi{rd}(\mtt{A\_LOOP})$}} (d1);
        \end{tikzpicture}

        \caption{The case of $\tb_i = \cnv \land \tb'_i = \jmpout{\dt}{\R} \land \mi{term}$.} \label{fig:devicectx:diff-cnv-jmpout-term}
    \end{subfigure}
    ~
    \begin{subfigure}[b]{0.5\textwidth}
        \centering
        \begin{tikzpicture}[>=stealth',shorten >=1pt, node distance=2.8cm]
            \tikzstyle{ndS} = [state, fill]
            \tikzstyle{empty} = [draw=none,fill=none]
            \node[state] (dS)      {$\delta_L$};
            \node[ndS] (d0) [right of=dS]  { };

            \path[->] (dS)  edge [loop above] node {$\epsilon$} (dS);
            \path[->] (dS)  edge [loop below] node {\footnotesize $\mi{wr}(\_)$} (dS);
            \path[->] (dS)  edge [sloped, anchor=center,above] node {{\footnotesize $\mi{rd}(\mtt{A\_HALT})$}} (d0);
        \end{tikzpicture}

        \caption{The case of $\tb_i = \cnv \land \tb'_i = \jmpout{\dt}{\R} \land \lnot\mi{term}$.} \label{fig:devicectx:diff-cnv-jmpout-noterm}
    \end{subfigure}

    \caption{Graphical representations of the updates performed by Algorithm~\ref{algo:devicectx} to the transition function of the device.}
\end{figure}
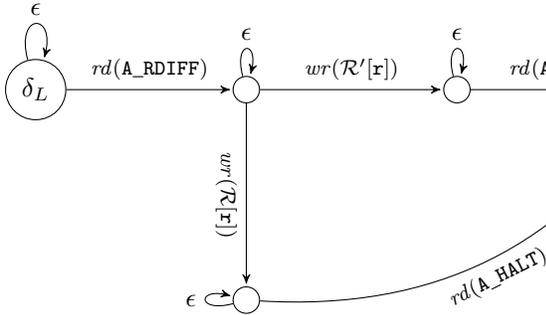
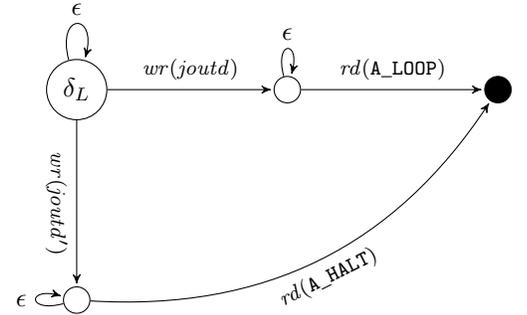
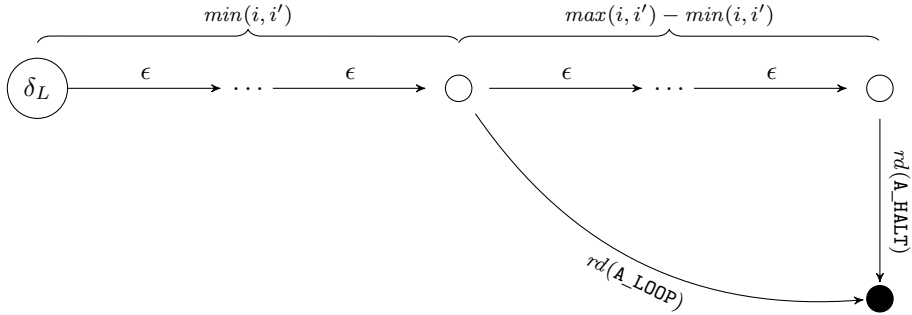
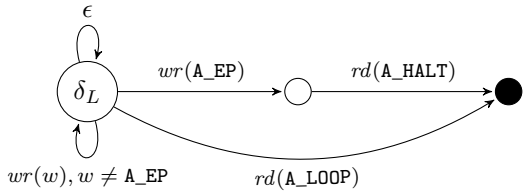
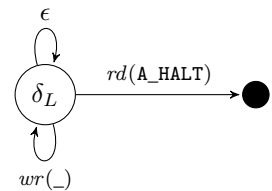

The first property about $\textsc{BuildDevice}$ states that, under the right conditions, it always produces an actual I/O device:
\begin{property}\label{prop:builddevice-isdevice}
    Let $\M_M \nttreq \M_{M'}$, $\btr, \btr'$ be distinguishing traces of $\M_M$ and $\M_{M'}$ originated by some context $C^L$ and let $\mi{term}$ and $\mi{term'}$ be any pair of booleans,
    then
    $\D = \textsc{BuildDevice}(\btr, \btr', \mi{joutd}, \mi{joutd'}, \mi{term}, \mi{term'}, C^L) \neq \bot$ and $\D$ is an I/O device.
\end{property}
\begin{proof}
    We first show that $\textsc{BuildDevice}$ never returns $\bot$ when $\btr$ and $\btr'$ are distinguishing traces.
    For that, let $\btr = \btr_s  \cdot \tb \cdot \btr_e$ and $\btr' = \btr_s \cdot \tb' \cdot \btr'_e$, and note that the only cases for which $\bot$ is returned are the following:
    \begin{itemize}
        \item \emph{Case $\tb = \tb' = \cnv$.}
            Since $\tb \neq \tb'$ by hypothesis, this case never happens.
        \item \emph{Case $\tb = \jmpout{\dt}{\R}$ and $\tb' = \jmpin{\R'}$ (or vice versa).}
            This case never happens due to Proposition~\ref{prop:tb-samesort}.
        \item \emph{Case $\{\cnv, \jmpin{\R} \} \ni \tb \neq \tb' \in \{\cnv, \jmpin{\R'} \}$.}
            Roughly, this means that the \emph{same} context performed two different actions upon observation of the same trace ($\btr_s$).
            Formally, we know by hypothesis that for the context $C^L = \langle \M_C, \D^L \rangle$
            \begin{align*}
                &\btrsemstar{\D^L}{\initconf{C^L}{\M_M}}{\btr_s}{c_1}\\
                &\btrsemstar{\D^L}{\initconf{C^L}{\M_{M'}}}{\btr_s}{c_2}.
            \end{align*}
            with $\moderelUM{c_1}$ and $\moderelUM{c_2}$.
            Property~\ref{prop:ueq-pres-btr} guarantees that $c_1 \ueq c_2$, thus by Property~\ref{prop:ueq-um-samebeta} the same observable must originate from both $c_1$ and $c_2$, but that is against the hypothesis that $\tb \neq \tb'$.
    \end{itemize}
    Finally, it is easy to see that $\D$ returned by $\textsc{BuildDevice}$ is an actual device.
    Indeed, its set of states $\Delta$ is finite (the algorithm always terminates in a finite number of steps and each step adds a finite number of state); its initial state $0$ belongs to $\Delta$;
 since a sink state is assumed to exist, no $\mi{int?}$ transitions are ever added and a single $\mi{rd}(w)$ transition outgoes from any given state: thus the transition relation respects the definition of I/O devices.
\end{proof}
Before stating and proving the reflection itself, we need some further definitions and properties.

The following property states that the context built by joining together the results of the two algorithms above is a distinguishing one:
\begin{property}\label{prop:build-distinguishing}
    Let $\M_M \nttreq \M_{M'}$; let $C^L = \langle \M_C, \D^L \rangle$; let
    \begin{align*}
        &\btrsemstar{\D^L}{\initconf{C^L}{\M_M}}{\btr_s}{c'_1 \btrarrow{\tb} c_1}\\
        &\btrsemstar{\D^L}{\initconf{C^L}{\M_{M'}}}{\btr_s}{c'_2 \btrarrow{\tb'} c_2}
    \end{align*}
    be such that $\btr = \btr_s \cdot \tb \cdot \btr_e$ and $\btr' = \btr_s \cdot \tb' \cdot \btr_e$ distinguishing traces of $\M_M$ and $\M_{M'}$;
    and let
    \begin{align*}
        &\mi{term} \iff \slmainstar{\nI{\D^L}}{c'_1}{\haltconf}\\
        &\mi{term'} \iff \slmainstar{\nI{\D^L}}{c'_2}{\haltconf}.
    \end{align*}

    If $(\M_C, \mi{joutd}, \mi{joutd'}) = \textsc{BuildMem}(\btr, \btr')$, $\D = \textsc{BuildDevice}(\btr, \btr', \mi{joutd}, \mi{joutd'}, \mi{term}, \mi{term'})$ and $C^H = \langle \M_C, \D \rangle$, then $\sconv{C^H[\M_M]}$ and $\nsconv{C^H[\M_{M'}]}$ (or vice versa).
\end{property}
\begin{proof}
    Assume wlog that $\tconv{C^L[\M_M]}$ and $\ntconv{C^L[\M_{M'}]}$.
    By Lemma~\ref{lemma:nointeq}
    \begin{align*}
       \sconv{C^H[\M_M]} \iff \tconv{\nI{C^H}[\M_M]} \quad\text{ and }\quad \sconv{C^H[\M_{M'}]} \iff \tconv{\nI{C^H}[\M_{M'}]}
    \end{align*}
    It suffices thus proving that $\nI{C^H}$ distinguishes $\M_M$ and $\M_{M'}$, i.e., $\tconv{\nI{C^H}[\M_M]}$ and $\ntconv{\nI{C^H}[\M_{M'}]}$ or vice versa.

    We show by induction on the length $2n + 1$ of $\btr_s$ that if
    \begin{align*}
        &\btrsemstar{\D^L}{\initconf{C^L}{\M_M}}{\btr_s}{c'_1}\\
        &\btrsemstar{\D^L}{\initconf{C^L}{\M_{M'}}}{\btr_s}{c'_2}
    \end{align*}
    then $\exists \btr'_s$ s.t.
    \begin{align*}
        &\btrsemstar{\nI{D^H}}{\initconf{\nI{C^H}}{\M_M}}{\btr'_s}{c_3} \text{ and}\\
        &\btrsemstar{\nI{D^H}}{\initconf{\nI{C^H}}{\M_{M'}}}{\btr'_s}{c_4} \text{ with }  \btr'_s \tapprox \btr_s \text{ (see Definition~\ref{def:btr-approx-btrp}).}
    \end{align*}
    Note that the length of $\btr_s$ must be odd as a consequence of Properties~\ref{prop:ueq-pres-btr} and~\ref{prop:ueq-um-samebeta} and no $\cnv$ appears in it since otherwise it would mean that $\btr = \btr'$.
    \begin{itemize}
        \item \emph{Case $n = 0$.}
            Then, $\btr_s$ is $\jmpin{\R}$.
            Thus, Algorithm~\ref{algo:memoryctx} guarantees that the current instruction is $\IN \rpc$ (at address $\mtt{A\_EP}$) and its execution leads to address $\mtt{A\_JIN}$ (by Algorithm~\ref{algo:devicectx}) and the same $\jmpin{\R}$ is observed starting from both $\initconf{\nI{C^H}}{\M_M}$ and $\initconf{\nI{C^H}}{\M_{M'}}$ and also $\btr'_s \tapprox \btr_s$.
        \item \emph{Case $n = n' + 1$.}
            If
            \begin{align*}
                \btrsemstar{\D^L}{\initconf{C^L}{\M_M}}{\btr''_s}{c''_1} &\land \btrsemstar{\D^L}{\initconf{C^L}{\M_{M'}}}{\btr''_s}{c'''_2}\\
                &\Downarrow\\
                \btrsemstar{\nI{D^H}}{\initconf{\nI{C^H}}{\M_M}}{\btr'''_s}{c'_3} &\land \btrsemstar{\nI{D^H}}{\initconf{\nI{C^H}}{\M_{M'}}}{\btr'''_s}{c'_4}\ \land \btr'''_s \tapprox \btr''_s \text{ (IHP)}
            \end{align*}
            then
            \begin{align*}
                \btrsemstar{\D^L}{\initconf{C^L}{\M_M}}{\btr''_s}{c''_1 \btrarrowstar{\btr''} c'_1} &\land
                \btrsemstar{\D^L}{\initconf{C^L}{\M_{M'}}}{\btr''}{c''_2  \btrarrowstar{\btr''} c'_2} \\
                &\Downarrow\\
                \btrsemstar{\nI{D^H}}{\initconf{\nI{C^H}}{\M_M}}{\btr'''_s}{c'_3 \btrarrowstar{\btr'''} c_3} &\land
                \btrsemstar{\nI{D^H}}{\initconf{\nI{C^H}}{\M_{M'}}}{\btr'''_s}{c'_4 \btrarrowstar{\btr'''} c_4}\ \land
                \btr'''_s \cdot \btr''' \tapprox \btr''_s \cdot \btr''.
            \end{align*}
            Note that it must be that $\btr'' = \jmpout{\dt}{\R}\cdot\jmpin{\R'}$ by Proposition~\ref{prop:tb-samesort} and because we never observe $\cnv$ in the common prefix.
            By (IHP) and Property~\ref{prop:peq-pres-btr} we have $c''_1 \peq c'_3$ and $c''_2 \peq c'_4$.
            Thus, by Properties~\ref{prop:peq-pres-handlereti} and~\ref{prop:peq-swap-ctx}, it must be that $\jmpout{\dt'}{\R}$ is observed when starting in $c'_3$ and $\jmpout{\dt''}{\R}$ is observed when starting in $c'_4$ (for some $\dt'$ and $\dt''$).

            By definition of coarse-grained traces, each of the computations above is generated by fine-grained trace in the form (we write $\_$ to denote a generic configuration):
            \begin{align*}
                &\atrsem{\D^L}{\_}{\jmpin{\R''}}{c''_1 = c^{(0)}_1 \atrarrow{\ta^{(0)}_1} \cdots \atrarrow{\ta^{(n_1-1)}_1} c^{(n_1)}_1 \atrarrow{\jmpout{k^{(n_1)}_1}{\R}} c^{(n_1)+1} \atrarrowstar{\usilent\cdots\usilent\jmpin{\R'}} c'_1}\\
                &\atrsem{\D^L}{\_}{\jmpin{\R''}}{c''_2 = c^{(0)}_2 \atrarrow{\ta^{(0)}_2} \cdots \atrarrow{\ta^{(n_2-1)}_2} c^{(n_2)}_2 \atrarrow{\jmpout{k^{(n_2)}_2}{\R}} c^{(n_2)+1} \atrarrowstar{\usilent\cdots\usilent\jmpin{\R'}} c'_2}\\
                &\atrsem{\nI{D^H}}{\_}{\jmpin{\R''}}{c'_3 = c^{(0)}_3 \atrarrow{\ta^{(0)}_3} \cdots \atrarrow{\ta^{(n_3-1)}_3} c^{(n_3)}_3 \atrarrow{\jmpout{k^{(n_3)}_3}{\R}} c^{(n_3+1)}_3}\\
                &\atrsem{\nI{D^H}}{\_}{\jmpin{\R''}}{c'_4 = c^{(0)}_4 \atrarrow{\ta^{(0)}_4} \cdots \atrarrow{\ta^{(n_4-1)}_4} c^{(n_4)}_4 \atrarrow{\jmpout{k^{(n_4)}_4}{\R}} c^{(n_4+1)}_4}.
            \end{align*}
            Thus, due to Property~\ref{prop:btr-timings} and by hypothesis, it holds that $\dt = \sum_{i=0}^{n_1} \ilen{c^{(i)}_1} + (11 + \MT) \cdot |\mb{I}_{\ta^{(0)}_1\cdots\ta^{(n_1)}_1}| = \sum_{i=0}^{n_2} \ilen{c^{(i)}_2} + (11 + \MT) \cdot |\mb{I}_{\ta^{(0)}_2\cdots\ta^{(n_2)}_2}|$.
            Also, since by (IHP) and Properties~\ref{prop:ueq-pres-btr} and~\ref{prop:ueq-um-samebeta} it follows that $c^{(0)}_1 = c''_1 \ueq c''_2 = c^{(0)}_2$, we know $|\mb{I}_{\ta^{(0)}_1\cdots\ta^{(n_1)}_1}| = |\mb{I}_{\ta^{(0)}_2\cdots\ta^{(n_2)}_2}|$ (by Property~\ref{prop:ueq-pres-handlereti}) and thus $\sum_{i=0}^{n_1} \ilen{c^{(i)}_1} = \sum_{i=0}^{n_2} \ilen{c^{(i)}_2}$.
            Moreover, by (IHP) and Property~\ref{prop:peq-pres-btr}, we get $c^{(0)}_1 = c''_1 \peq c'_3 = c^{(0)}_3$ and $c^{(0)}_2 = c''_2 \peq c'_4 = c^{(0)}_4$.
            Now, as a consequence of Properties~\ref{prop:peq-samedecode},~\ref{prop:peq-pres-handlereti} and~\ref{prop:peq-swap-ctx} we know that $\dt' = \sum_{i=0}^{n_3} \ilen{c^{(i)}_3} = \sum_{i=0}^{n_1} \ilen{c^{(i)}_1} = \sum_{i=0}^{n_2} \ilen{c^{(i)}_2} = \sum_{i=0}^{n_3} \ilen{c^{(i)}_3} = \dt''$.
            By (IHP) and since the first observable after $c'_3$ and $c'_4$ is the same, by Property~\ref{prop:ueq-pres-btr} it follows $c^{(n_3+1)}_3 \ueq c^{(n_4+1)}_4$.
            Thus, due to Property~\ref{prop:ueq-um-samebeta}, we get that the same coarse-grained observable $\jmpin{\R'''}$ is observed after  $c^{(n_3+1)}_3$ and $c^{(n_4+1)}_4$.
            Finally, $\R'''$ is equal to $\R'$ since after any $\jmpout{\cdot}{\cdot}$ a $\IN \rpc$ instruction is executed and its execution leads to address $\mtt{A\_JIN}$ (by Algorithm~\ref{algo:devicectx}) that performs $\jmpin{\R}$, and the thesis follows.
    \end{itemize}

    \bigskip
    Since we proved that
    \begin{align*}
        &\btrsemstar{\nI{D^H}}{\initconf{\nI{C^H}}{\M_M}}{\btr'_s}{c_3} \text{ and}\\
        &\btrsemstar{\nI{D^H}}{\initconf{\nI{C^H}}{\M_{M'}}}{\btr'_s}{c_4}
    \end{align*}
    we also have that $c_3 \ueq c_4$ by Properties~\ref{prop:ueq-pres-btr} and~\ref{prop:ueq-um-samebeta}.

    Let $\btrsemstar{\nI{D^H}}{c_3}{\btr_3}{c''_3}$ and $\btrsemstar{\nI{D^H}}{c_4}{\btr_4}{c''_4}$, with $\btr_3$ and $\btr_4$ either empty or made of a single observable (either $\cnv$ or $\jmpout{\cdot}{\cdot}$, since no difference cannot be observed upon $\jmpin{\cdot}$ as observed above). By exhaustive cases on $\tb$ and $\tb'$ we have:
    \begin{itemize}
        \item \emph{Case $\tb = \cnv$ and $\tb' = \jmpout{\dt'''}{\R''}$.}
            Note that, since $\mi{term} \iff \slmainstar{\nI{\D^L}}{c'_1}{\haltconf}$ and $c'_1 \peq c_3$ (by Properties~\ref{prop:init-jmpin} and~\ref{prop:peq-pres-btr}), we get $\mi{term} \iff \slmainstar{\nI{\D^H}}{c_3}{\haltconf}$ by Property~\ref{prop:peq-swap-ctx} and since neither $\nI{\D^L}$ nor $\nI{\D^H}$ raise any interrupt.
            Thus, by definition of $\D^L$ (cf. Algorithm~\ref{algo:devicectx}) the context $C^H$ distinguishes the two modules.
        \item \emph{Case $\tb = \jmpout{\dt'''}{\R''}$ and $\tb' = \emptystr$.}
            Similar to the previous case (with $\mi{term'}$ in place of $\mi{term}$).
        \item \emph{Case $\tb = \jmpout{\dt'''}{\R''}$ and $\tb' = \jmpout{\dt'''}{\R'''}$ with $\R'' \neq \R'''$.}
            Since $c'_1 \peq c_3$ and $c'_2 \peq c_4$, it must be that $\btr_3 = \jmpout{\dt^{v}}{\R''}$ and $\btr_4 = \jmpout{\dt^{vi}}{\R''}$.
            Thus, by Algorithms~\ref{algo:memoryctx} and~\ref{algo:devicectx}, $C^H$ distinguishes the two modules.
        \item \emph{Case $\tb = \jmpout{\dt'''}{\R''}$ and $\tb' = \jmpout{\dt^{iv}}{\R''}$.}
            In this case it holds that $\btr_3 = \jmpout{\dt^{v}}{\R''}$ and $\btr_4 = \jmpout{\dt^{vi}}{\R''}$ with the same timings of the instructions (by Property~\ref{prop:btr-timings}).
            Since $c_3 \ueq c_4$, the two times must differ one from each other otherwise, by the counterpositive of Property~\ref{prop:ttreq-ilen}, we would get $\M_M \ttreq \M_{M'}$.
            Again, by definition of Algorithms~\ref{algo:memoryctx} and~\ref{algo:devicectx}, one computation converges and one diverges, hence $C^H$ distinguishes the two modules.
    \end{itemize}
\end{proof}

Finally, we can use the above algorithms and results to prove that if two modules are contextually equivalent in \SH, then they are also contextually equivalent in \SL.
\begin{lemma}\label{lemma:pres-tr}
    If $\M_M \seq \M_{M'}$ then $\M_M \teq \M_{M'}$.
\end{lemma}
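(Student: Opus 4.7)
The plan is to argue by contraposition: assume $\M_M \nteq \M_{M'}$ and derive $\M_M \nseq \M_{M'}$, which is exactly the contrapositive of the statement. Equivalently, one can first reduce the goal to showing the implication $\M_M \seq \M_{M'} \Rightarrow \M_M \ttreq \M_{M'}$ -- that is, step $(ii)$ of \figurename~\ref{fig:strategy-rep} -- and then close by composing with Lemma~\ref{lemma:refl-tr}; I will write the contraposition version since it is most direct and lets us reuse the backtranslation machinery off the shelf.

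First, from $\M_M \nteq \M_{M'}$ I invoke Property~\ref{prop:distinguishing-exist} to obtain a pair of distinguishing traces $\btr = \btr_s \cdot \tb \cdot \btr_e \in \ttr{\M_M}$ and $\btr' = \btr_s \cdot \tb' \cdot \btr'_e \in \ttr{\M_{M'}}$ observed under a common context $C^L = \langle \M_C^L, \D^L \rangle$ in \SL. Using the coarse-grained trace semantics I can also identify the two configurations $c'_1$ and $c'_2$ that $C^L[\M_M]$ and $C^L[\M_{M'}]$ reach immediately before emitting the distinguishing observables $\tb$ and $\tb'$ respectively, and set the booleans $\mi{term}$, $\mi{term'}$ as in the hypothesis of Property~\ref{prop:build-distinguishing}, namely $\mi{term} \iff \slmainstar{\nI{\D^L}}{c'_1}{\haltconf}$ and analogously for $\mi{term'}$.

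Second, I perform the backtranslation. I run $\textsc{BuildMem}(\btr, \btr')$ (Algorithm~\ref{algo:memoryctx}) to obtain the unprotected memory $\M_C$ together with the differentiating-jump addresses $\mi{joutd}, \mi{joutd'}$, and then run $\textsc{BuildDevice}(\mi{joutd}, \mi{joutd'}, \btr, \btr', \mi{term}, \mi{term'}, C^L)$ (Algorithm~\ref{algo:devicectx}) to obtain a device $\D$; by Property~\ref{prop:builddevice-isdevice} the output is a legitimate I/O device and not $\bot$, so $C^H \triangleq \langle \M_C, \D \rangle$ is a well-formed \SH context. Applying Property~\ref{prop:build-distinguishing} to $C^H$ yields $\sconv{C^H[\M_M]}$ and $\nsconv{C^H[\M_{M'}]}$ (or the symmetric alternative), which witnesses $\M_M \nseq \M_{M'}$ and contradicts the hypothesis, finishing the proof.

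The only real obstacle is conceptual and is already absorbed by Property~\ref{prop:build-distinguishing}: the \SL context $C^L$ may actively schedule interrupts to force the two modules apart, while the \SH context $C^H$ observes none. The backtranslation sidesteps this by moving the distinguishing power from interrupt scheduling into the unbounded device state, so that each case in which $\tb$ differs from $\tb'$ (register mismatch including $\rpc$, timing mismatch on a $\jmpout{\cdot}{\cdot}$, or a $\cnv$-vs-$\jmpout{\cdot}{\cdot}$ mismatch) is encoded as a read returning either $\mtt{A\_HALT}$ or $\mtt{A\_LOOP}$ at the right moment. Once that property is available, the current lemma reduces to routine assembly of the distinguishing \SH context and invocation of the contradiction.
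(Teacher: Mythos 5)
Your proposal is correct and follows essentially the same route as the paper's own proof: contraposition via Property~\ref{prop:distinguishing-exist} to obtain distinguishing traces, then the backtranslation through Algorithms~\ref{algo:memoryctx} and~\ref{algo:devicectx} together with Properties~\ref{prop:builddevice-isdevice} and~\ref{prop:build-distinguishing} to produce a distinguishing \SH context. Your explicit identification of the configurations used to set $\mi{term}$ and $\mi{term'}$ is a faithful unpacking of the hypotheses of Property~\ref{prop:build-distinguishing} rather than a deviation.
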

\begin{proof}
    We prove the contrapositive, i.e., if $\M_M \nteq \M_{M'}$ then $\M_M \nseq \M_{M'}$.
    Since $\M_M \nteq \M_{M'}$, assume wlog that $\tconv{C^L[\M_M]}$ and $\ntconv{C^L[\M_{M'}]}$.
    By Property~\ref{prop:distinguishing-exist} we know that a pair of distinguishing traces for $\M_M$ and $\M_{M'}$ exist.
    Algorithm~\ref{algo:memoryctx} and~\ref{algo:devicectx} witness the existence of a context $C^H$ that -- due to Properties~\ref{prop:builddevice-isdevice} and~\ref{prop:build-distinguishing} (with the right $\mi{term}$ and $\mi{term'}$) -- is an actual context and is guaranteed to differentiate $\M_M$ from $\M_{M'}$, i.e., $\sconv{C^H[\M_M]}$ and $\nsconv{C^H[\M_{M'}]}$ (or vice versa).
    Thus, by definition of contextually equivalent modules in \SH, we get  $\M_M \nseq \M_{M'}$ as requested.
\end{proof}

\paragraph{Full abstraction.}

Finally, we can restate the original full abstraction theorem and prove it.
\fullabstractionrep*{}
\begin{proof} \hfill
    \begin{itemize}
        \item Direction $\Rightarrow$ follows from Lemma~\ref{lemma:reflection}.
        \item Direction $\Leftarrow$ (i.e., $(iii)$ in~\figurename~\ref{fig:strategy-rep}), follows directly from Lemma~\ref{lemma:pres-tr}.
    \end{itemize}
\end{proof}

%%\begin{proof}
%%    This theorem follows from the lemmata above:
%%    \begin{itemize}
%%        \item Direction $\Rightarrow$, follows from Lemma~\ref{lemma:preservation}
%%        \item Direction $\Leftarrow$ (i.e., $(iii)$ in~\figurename~\ref{fig:strategy-rep}), follows from Lemma~\ref{lemma:pres-tr} and Lemma~\ref{lemma:refl-tr}.
%%    \end{itemize}
%%\end{proof}

%%% Local Variables:
%%% mode: latex
%%% TeX-master: "../sancus"
%%% End:
\FloatBarrier

\end{document}